\newcommand{\red}{\textcolor{red}}
\newcommand{\be}{\begin{equation}}
\newcommand{\ee}{\end{equation}}
\newcommand{\bea}{\begin{eqnarray}}
\newcommand{\eea}{\end{eqnarray}}
\newcommand{\bean}{\begin{eqnarray*}}
\newcommand{\eean}{\end{eqnarray*}}
\theoremstyle{plain}
\newtheorem{theorem}{Theorem}
\newtheorem{cor}[theorem]{Corollary}
\newtheorem{lem}[theorem]{Lemma}
\theoremstyle{definition}
\newtheorem{defn}[theorem]{Definition}
\newtheorem{notation}[theorem]{Notation}
\begin{document}
\title{A solution space for a system of null-state partial differential equations I}

\date{\today}

\author{Steven M.\ Flores}
\email{steven.flores@helsinki.fi} 
\affiliation{
Department of Mathematics, University of Michigan, Ann Arbor, Michigan, 48109-2136, USA \\ and \\
Department of Mathematics \& Statistics, University of Helsinki, P.O. Box 68, 00014, Finland}

\author{Peter Kleban}
\email{kleban@maine.edu} 
\affiliation{LASST and Department of Physics \& Astronomy, University of Maine, Orono, Maine, 04469-5708, USA}

\begin{abstract}  

This article is the first of four that completely and rigorously characterize a solution space $\mathcal{S}_N$ for a homogeneous system of $2N+3$ linear partial differential equations (PDEs) in $2N$ variables that arises in conformal field theory (CFT) and multiple Schramm-L\"owner evolution (SLE$_\kappa$).  In CFT, these are null-state equations and conformal Ward identities.  They govern partition functions for the continuum limit of a statistical cluster or loop-gas model, such as percolation, or more generally the Potts models and O$(n)$ models, at the statistical mechanical critical point.  (SLE$_\kappa$ partition functions also satisfy these equations.)  For such a lattice model in a polygon $\mathcal{P}$ with its $2N$ sides exhibiting a free/fixed side-alternating boundary condition $\vartheta$, this partition function is proportional to the CFT correlation function
\[\langle\psi_1^c(w_1)\psi_1^c(w_2)\dotsm\psi_1^c(w_{2N-1})\psi_1^c(w_{2N})\rangle^{\mathcal{P}}_\vartheta,\]
where the $w_i$ are the vertices of $\mathcal{P}$ and where $\psi_1^c$ is a  one-leg corner operator.  (Partition functions for ``crossing events" in which clusters join the fixed sides of $\mathcal{P}$ in some specified connectivity are linear combinations of such correlation functions.)  When conformally mapped onto the upper half-plane, methods of CFT show that this correlation function satisfies the system of PDEs that we consider.

In this first article, we use methods of analysis to prove that the dimension of this solution space is no more than $C_N$, the $N$th Catalan number.  While our motivations are based in CFT, our proofs are completely rigorous.  This proof is contained entirely within this article, except for the proof of lemma \ref{alltwoleglem}, which constitutes the second article \cite{florkleb2}.  In the third article \cite{florkleb3}, we use the results of this article to prove that the solution space of this system of PDEs has dimension $C_N$ and is spanned by solutions constructed with the CFT Coulomb gas (contour integral) formalism.  In the fourth article \cite{florkleb4}, we prove further CFT-related properties about these solutions, some useful for calculating cluster-crossing probabilities of critical lattice models in polygons.

\end{abstract}

\keywords{conformal field theory, Schramm-L\"{o}wner evolution}
\maketitle

\section{Introduction}\label{intro}

We consider critical bond percolation on a very fine square lattice inside a rectangle $\mathcal{R}:=\{x+iy\,|\,0<x<R,0<y<1\}$ with \emph{wired} (or \emph{fixed}) left and right sides (i.e., all bonds are activated on these sides) and \emph{free} top and bottom sides (i.e., we do not condition the state of any of the bonds on these sides).  In \cite{c3}, J.\ Cardy used conformal field theory (CFT) \cite{bpz, fms, henkel} methods to argue that, at the critical point and in the continuum limit, the partition function for this system is proportional to the CFT correlation function
\be\label{corrfunc}\langle\psi_1^c(w_1)\psi_1^c(w_2)\psi_1^c(w_3)\psi_1^c(w_4)\rangle^{\mathcal{R}},\ee
where $w_i\in\mathbb{C}$ is the $i$th vertex of $\mathcal{R}$ and $\psi_1^c(w_i)$ is a $c=0$ CFT one-leg corner operator \cite{c1,sk,skfz,fkz,gruz, rgbw} that implements the boundary condition change (BCC) from free to fixed \cite{c2,c1,c3} at this vertex.  The essence of this argument supposed the emergence of conformal invariance at the critical point for bond percolation in the continuum limit, a feature that was previously observed in computer simulations \cite{lps}.  By considering the $Q\rightarrow1$ limit of a particular combination of certain $Q$-state random cluster model partition functions given by (\ref{corrfunc}), Cardy then predicted a formula for the probability that the wired sides of $\mathcal{R}$ are joined by a cluster of activated bonds (figure \ref{CrossingFig}).  His prediction, called \emph{Cardy's formula}, is \cite{c3}
\be\label{Cardy}\mathbb{P}\{\text{left-right crossing}\}=\frac{3\Gamma(2/3)}{\Gamma(1/3)^2}m^{1/3}\,_2F_1\left(\frac{1}{3},\frac{2}{3};\frac{4}{3}\,\Bigg|\,m\right), \quad R=K(1-m)/K(m).\ee
Here, $m\in(0,1)$ corresponds one-to-one with the aspect ratio $R\in(0,\infty)$ of the rectangle $\mathcal{R}$ via the second equation in (\ref{Cardy}), with $K$ the complete elliptic function of the first kind \cite{morsefesh}.  Computer simulations \cite{lps,ziff} have numerically verified this prediction (\ref{Cardy}), thus giving very strong evidence for the presence of conformal symmetry in the continuum limit of critical percolation.  Other simulations \cite{lpps} consistently suggest that many observables, such as the probability of the left-right cluster-crossing event, common to different homogeneous models of critical percolation (e.g., site vs.\ bond percolation and percolation on different regular lattices) converge to the same value in the continuum limit, a phenomenon called \emph{universality}.  Later, S.\ Smirnov rigorously proved Cardy's formula for site percolation on the triangular lattice \cite{smirnov}.  Also after Cardy's result (\ref{Cardy}), researchers have used CFT to predict other formulas involving critical percolation cluster crossings \cite{watts, skz1}, densities \cite{skz2, skdz, skz3, skfz}, and pinch points \cite{pinchpoint}.

\begin{figure}[t!]
\centering
\includegraphics[scale=0.6]{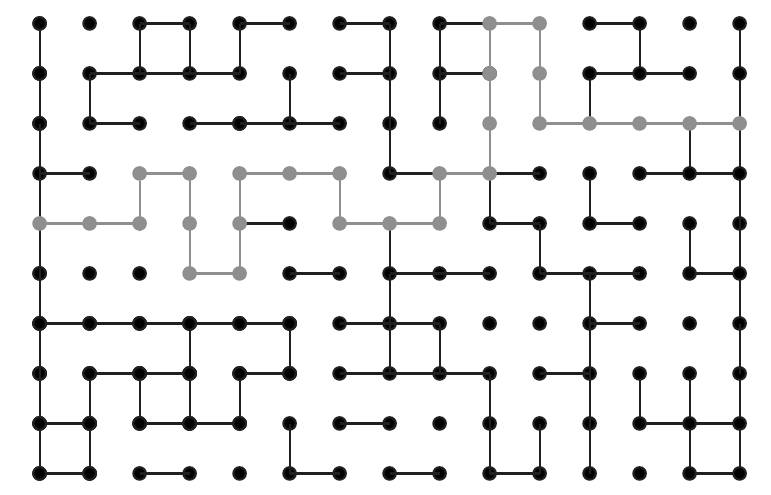}
\caption{A bond percolation configuration in a rectangle with a crossing (gray bonds) from the left side of the rectangle to the right side.}
\label{CrossingFig}
\end{figure}

\begin{figure}[b!]
\centering
\includegraphics[scale=0.28]{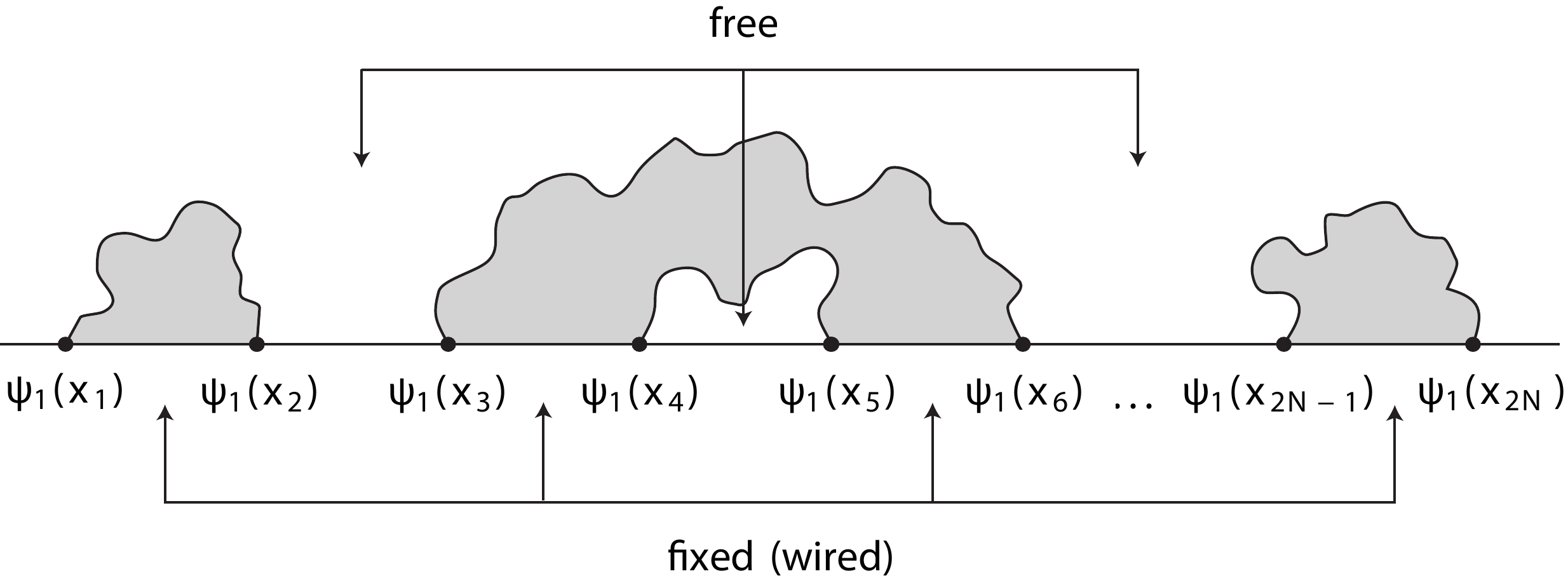}\\
\caption{An FFBC sample, with boundary clusters anchored to wired segments.  We have conformally mapped the polygon interior onto the half-plane, with $x_i$ the image of the $i$th vertex.}
\label{FFBC}
\end{figure}

The setup for Cardy's formula has interesting generalizations that motivate the analysis presented in this article.  Looking beyond rectangles, we may consider system domains that are even-sided polygons $\mathcal{P}$,  with the boundary condition (BC) alternating from wired to free to wired, etc., as we trace the boundary of $\mathcal{P}$ from side to side.  We call this a \emph{free/fixed side-alternating boundary condition} (FFBC) (figure \ref{FFBC}).  And looking beyond percolation, we may consider other lattice models with  critical points that have CFT descriptions in the continuum limit.  These include the Potts model \cite{wu} and its close relative, the random cluster model \cite{fk}.  If we enumerate the FFBC events and condition one of these systems to exhibit the $\vartheta$th FFBC event on the boundary of $\mathcal{P}$, then we may adapt Cardy's argument to predict that the conditioned partition function is proportional to the $2N$-point CFT correlation function
\be\label{2Npoint}\langle\psi_1^c(w_1)\psi_1^c(w_2)\dotsm\psi_1^c(w_{2N-1})\psi_1^c(w_{2N})\rangle^{\mathcal{P}}_\vartheta. \ee
Here, $w_i\in\mathbb{C}$ is the $i$th vertex of $\mathcal{P}$, and $\psi_1^c(w_i)$ is a CFT one-leg corner operator \cite{c1,sk,skfz,fkz,gruz, rgbw} that implements the BCC from free to fixed \cite{c2,c1,c3} at this vertex (appendix \ref{preliminaries}).  If we condition the system to exhibit the $\vartheta$th FFBC event, then Potts model spin clusters or FK clusters, anchor to the wired sides of $\mathcal{P}$ and join these sides in some topological crossing configuration with some non-trivial probability.  We call these \emph{boundary clusters}.  An induction argument \cite{dubedat} shows that there are $C_N$ such configurations (figure \ref{CatalanFig}), with $C_N$ the $N$th Catalan number given by
\be\label{catalan}C_N=\frac{(2N)!}{N!(N+1)!}.\ee
Formulas for \emph{crossing probabilities}, or probabilities of these crossing events, as functions of the shape of $\mathcal{P}$ generalize Cardy's formula (\ref{Cardy}), which corresponds to the case of critical percolation with $N=2$.  In \cite{fkz}, we use results from this article and its sequels \cite{florkleb2, florkleb3, florkleb4} to predict some of these formulas, extending recent results on crossing probabilities for hexagons \cite{js,dub}.

To calculate the $2N$-point function (\ref{2Npoint}), we conformally map $\mathcal{P}$ onto the upper half-plane (figure \ref{FFBC}).  After we continuously extend it to the boundary of $\mathcal{P}$, this map also sends the vertices $w_1$, $w_2,\ldots,w_{2N}$ onto real numbers $x_1<x_2<\ldots<x_{2N}$, and it sends the one-leg corner operator $\psi_1^c(w_i)$ hosted by the $i$th vertex of $\mathcal{P}$ to a one-leg boundary operator $\psi_1(x_i)$ at $x_i$ \cite{gruz, rgbw}.  In the Potts model (resp.\ random cluster model) \cite{wu, fk}, the one-leg boundary operator is a primary operator that belongs to the $(2,1)$ (resp.\ $(1,2)$) position of the Kac table \cite{c3, bpz, fms, henkel, c2}.  (We discuss one-leg boundary operators further in appendix \ref{preliminaries}.)  Thus, the CFT null-state condition implies that this half-plane version of (\ref{2Npoint}) satisfies the system of $2N$ \emph{null-state partial differential equations (PDEs)} \cite{bpz, fms}
\be\label{nullstateCFT}\Bigg[\frac{3\partial_j^2}{2(2\theta_1+1)}+\sum_{k\neq j}^{2N}\left(\frac{\partial_k}{x_k-x_j}-\frac{\theta_1}{(x_k-x_j)^2}\right)\Bigg]F(x_1,x_2,\ldots,x_{2N})=0,\quad j\in\{1,2,\ldots,2N\},\ee
where the \emph{one-leg boundary weight} $\theta_1$ is the conformal weight of the one-leg boundary operator and is given by \cite{bpz, fms, henkel} (appendix \ref{preliminaries})
\be\label{theta1cft}\theta_1=\frac{1}{16}\left[5-c\pm\sqrt{(c-1)(c-25)}\right].\ee
Here, $c$ is the central charge of the CFT, and it corresponds to the model under consideration.  For example, $c=0$ corresponds to percolation \cite{c3}, and $c=1/2$ corresponds to the Ising model (i.e., the two-state Potts model) and the two-state random cluster model \cite{bpz}.  Also, the sign to be used in (\ref{theta1cft}) depends on the model.  For example, we use the $+$ sign for the Potts model and the $-$ sign for the random cluster model.  

\begin{figure}[b!]
\centering
\includegraphics[scale=0.27]{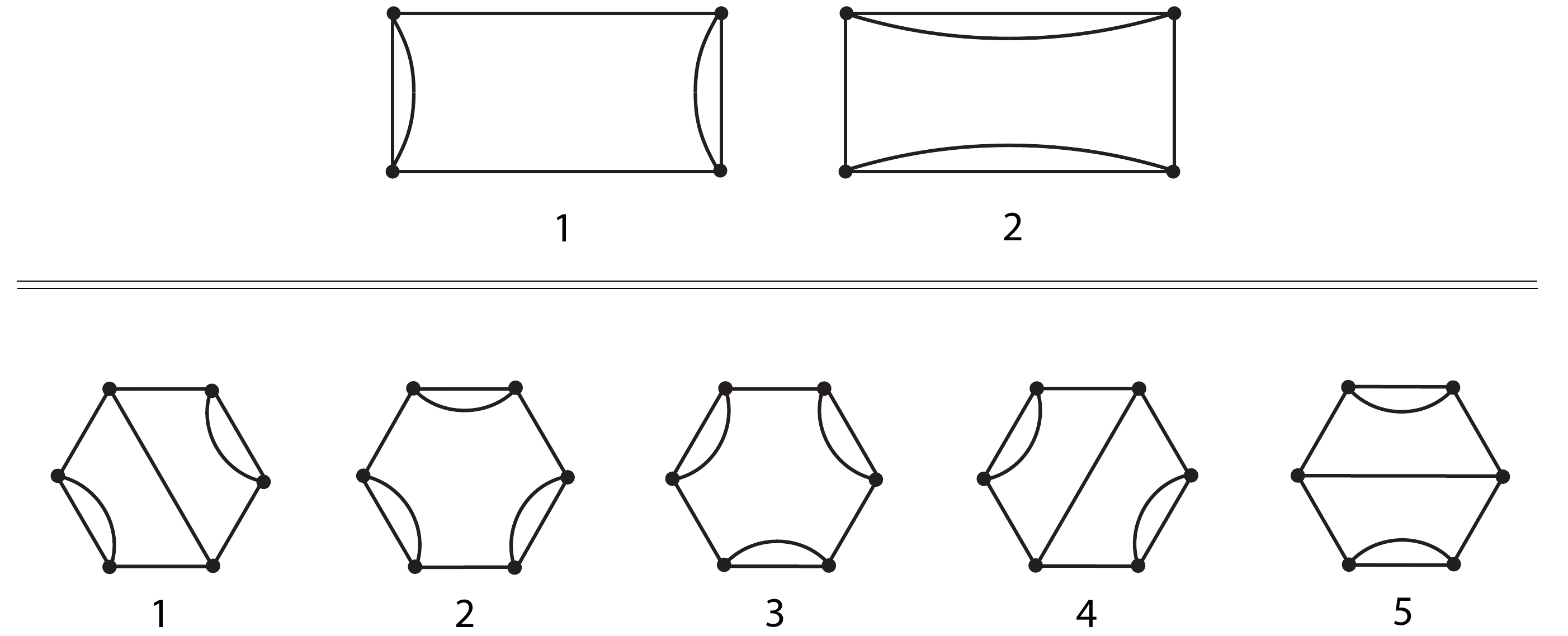}
\caption{The number of crossing configurations (or boundary arc connectivities) in the rectangle (resp.\ hexagon) equals the second (resp.\ third) Catalan number $C_2=2$ (resp.\ $C_3=5$).}
\label{CatalanFig}
\end{figure}

Aside from the null-state PDEs (\ref{nullstateCFT}), any CFT correlation function must satisfy three conformal Ward identities \cite{bpz, fms, henkel}.  For the half-plane version of the particular $2N$-point function (\ref{2Npoint}) above, these are
\be\begin{gathered}\label{wardidCFT}\sum_{k=1}^{2N}\partial_kF(x_1,x_2,\ldots,x_{2N})=0,\quad \sum_{k=1}^{2N}(x_k\partial_k+\theta_1)F(x_1,x_2,\ldots,x_{2N})=0,\\
\sum_{k=1}^{2N}(x_k^2\partial_k+2\theta_1x_k)F(x_1,x_2,\ldots,x_{2N})=0.\end{gathered}\ee
If a function $F$ satisfies these PDEs (\ref{wardidCFT}), then it is covariant with respect to conformal bijections of the upper half-plane onto itself, with each coordinate $x_i$ having conformal weight $\theta_1$.  We elaborate on this in section \ref{objorg} below.

If $N=1$, then it is easy to show that any solution to the system (\ref{nullstateCFT}, \ref{wardidCFT}) is of the form $C(x_2-x_1)^{-2\theta_1}$ for some arbitrary constant $C$.  Thus, the rank (i.e., the dimension of the solution space) equals the first Catalan number, $C_1=1$.  If $N=2$, then we may use the conformal Ward identities (\ref{wardidCFT}) to convert the system of four null-state PDEs (\ref{nullstateCFT}) into a single hypergeometric differential equation \cite{bpz}.  The general solution of this differential equation completely determines the solution space, so the rank of the system equals the second Catalan number, $C_2=2$.  (After setting $c=0$, an appropriate boundary condition argument gives Cardy's formula (\ref{Cardy}).)  If $N=3$, then a similar but more complicated argument \cite{dub} shows that the rank of the system equals the third Catalan number, $C_3=5$, at least when $c=0$.  Beyond this, the need for $C_N$ linearly independent crossing-probability formulas suggests that the rank of the system is at least $C_N$, but it does not apparently suggest that the rank is $C_N$ exactly.  In spite of this, the Coulomb gas formalism \cite{dub, df1,df2} allows us to construct many explicit classical (in the sense of \cite{giltru}) solutions of the system for any $N\in\mathbb{Z}^+$, a remarkable feat!  Neither this article nor its first sequel \cite{florkleb2} uses these solutions, but the latter sequels \cite{florkleb3, florkleb4} do.

\begin{figure}[b!]
\centering
\includegraphics[scale=0.5]{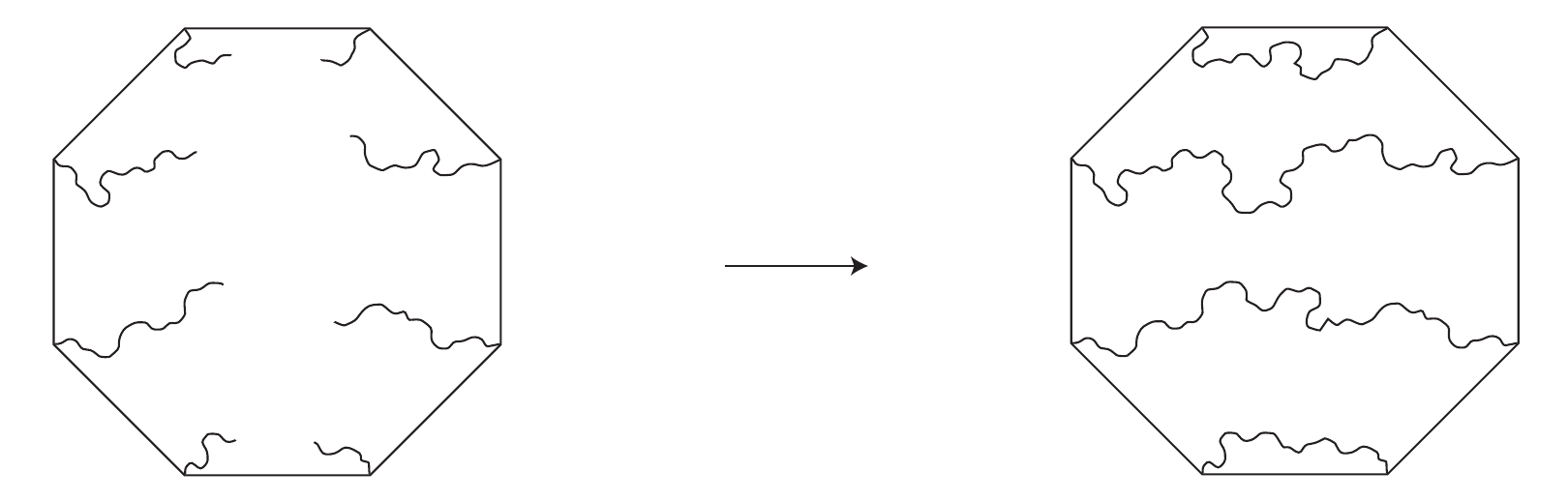}
\caption{Multiple SLE$_\kappa$ in an octagon.  A multiple-SLE$_\kappa$ curve grows from each vertex into the octagon, and these curves join pairwise to form four distinct, non-crossing boundary arcs.}
\label{MultiSLE}
\end{figure}

In addition to CFT, we may use multiple SLE$_\kappa$ \cite{dub2, bbk, graham, kl, sakai}, a generalization of SLE$_\kappa$ (Schramm-L\"owner evolution) \cite{rohshr, knk, lawler}, to study the continuum limit of a critical lattice model inside a polygon $\mathcal{P}$ with an FFBC.  As we use this approach, we forsake the boundary clusters and study their perimeters instead.  These perimeters, called \emph{boundary arcs}, are random fractal curves that fluctuate inside $\mathcal{P}$.  Their law is conjectured, (and proven for some models in the case of (ordinary) SLE$_\kappa$, see table \ref{kappavalues}), to be that of multiple SLE$_\kappa$, a stochastic process that simultaneously grows $2N$ fractal curves, one from each vertex, inside $\mathcal{P}$.  These curves explore the interior of $\mathcal{P}$ without crossing themselves or each other until they join to form $N$ distinct, non-crossing boundary arcs that connect the vertices of $\mathcal{P}$ pairwise (figure \ref{MultiSLE}) in a specified connectivity \cite{js,bbk}.  An induction argument \cite{dubedat} shows that these $2N$ curves join in one of $C_N$ possible connectivities, called \emph{boundary arc connectivities} \cite{pinchpoint} (figure \ref{CatalanFig}).  Furthermore, we identify each boundary arc connectivity with the particular cluster-crossing event whose boundary arcs join in that connectivity.

\begin{table}
\centering
\begin{tabular}{lcccl} 
Random walk or critical lattice model\,&$\kappa$ &$c$ & & Current status \\
\hline
The loop-erased random walk \cite{law1}& 2 & $-2$ & &proven \cite{lsw}\\
The self-avoiding random walk \cite{madraslade} & 8/3 & 0 & & conjectured \cite{lsw2}\\
$Q=2$ Potts spin cluster perimeters \cite{wu}& 3 & 1/2 & & proven \cite{smir3}\\
$Q=3$ Potts spin cluster perimeters  \cite{wu} & 10/3& 4/5 & & conjectured \cite{smir}\\
$Q=4$ Potts spin/FK cluster perimeters \cite{wu, fk} & 4 & 1 & & conjectured \cite{smir}\\
The level line of a Gaussian free field \cite{schrsheff} & 4 & 1 & & proven \cite{schrsheff}\\
The harmonic explorer \cite{schrsheff} & 4 & 1 & & proven \cite{schrsheff}\\
$Q=3$ Potts FK cluster perimeters \cite{fk} & 24/5 & 4/5 & & conjectured \cite{smir}\\
$Q=2$ Potts FK cluster perimeters \cite{fk} & 16/3 & 1/2 & & proven \cite{smir3}\\
Percolation and smart-kinetic walks \cite{grim, weintru} & 6 & 0 & & proven \cite{smir2}\\
Uniform spanning trees \cite{lsw} & 8 & $-2$ & & proven \cite{lsw}\\
\hline
\end{tabular}
\caption{Models conjectured or proven to have conformally invariant continuum limits and with random curves whose law converges to SLE$_\kappa$ as we approach this continuum limit.}
\label{kappavalues}
\end{table}

Multiple SLE$_\kappa$ provides a different, rigorous approach to calculating some observables that may be predicted via CFT, and although these two approaches are fundamentally different, they are closely related \cite{bauber}.  Two entities determine the multiple-SLE$_\kappa$ process \cite{dub2,bbk}:
\begin{enumerate}[I.]
\item\label{condition1}The first is stochastic: a collection of $2N$ absolutely continuous martingales with zero cross-variation and total quadratic variation $\kappa t$, with $\kappa>0$ the SLE$_\kappa$ \emph{speed} or \emph{parameter} and $t>0$ \emph{the evolution time}.  The equation \cite{bauber}
\be\label{central}c(\kappa)=\frac{(6-\kappa)(3\kappa-8)}{2\kappa}\ee
relates a CFT of central charge $c<1$ (resp.\ $c=1$) to a multiple SLE$_\kappa$ with one of two possible speeds, one in the \emph{dilute phase} $\kappa\in(0,4]$, and one in the \emph{dense phase} $\kappa\in(4,\infty)$ (resp.\ with one speed $\kappa=4$ in the dilute phase) \cite{knk} of SLE$_\kappa$.  Further arguments provided in appendix \ref{preliminaries} show that if we substitute (\ref{central}) into (\ref{theta1cft}), then we must use the $+$ (resp.\ $-$) sign in the dilute (resp.\ dense) phase, so \cite{bauber}
\be\label{theta1}\theta_1=\frac{6-\kappa}{2\kappa}.\ee
\item\label{condition2}The second is deterministic: a nonzero function $F$, which we call an {\it $SLE_\kappa$ partition function}.  (This is similar to but slightly different from the actual partition function of the critical system under consideration.  See appendix \ref{preliminaries} and \cite{fkz}.)  The only condition imposed on  $F$ is that it satisfies the system of null-state PDEs (\ref{nullstateCFT}) and the three conformal Ward identities (\ref{wardidCFT}) (in the classical sense of \cite{giltru}) with $\theta_1$ given by (\ref{theta1}), and that it never equals zero.
\end{enumerate}
Because all boundary arcs have the statistics of SLE$_\kappa$ curves in the small regardless of our choice of SLE$_\kappa$ partition function, the partition function that we do use may only influence a large-scale property of multiple SLE$_\kappa$, such as the eventual pairwise connectivity of its curves in the long-time limit.  Indeed, two multiple-SLE$_\kappa$ processes whose curves are conditioned to join in different connectivities obey the same stochastic PDEs driven by the martingales of condition \ref{condition1}.  Because which SLE$_\kappa$ partition function to use for item \ref{condition2} above is the only unspecified feature of these equations, we expect that this choice influences the eventual boundary arc connectivity.

This supposition naturally leads us to conjecture the rank of the system (\ref{nullstateCFT}, \ref{wardidCFT}) previously considered in our CFT approach above.  As mentioned, there are $C_N$ possible boundary arc connectivities, which we enumerate one through $C_N$.  Thus, there must be at least one SLE$_\kappa$ partition function per connectivity that conditions the boundary arcs to join pairwise in that connectivity almost surely.  But furthermore, if our SLE$_\kappa$ partition function does not influence any of the boundary arcs' other large-scale properties, then there may be at most one SLE$_\kappa$ partition function $\Pi_\varsigma$, called the \emph{$\varsigma$th connectivity weight}, that conditions the boundary arcs to join in, say, the $\varsigma$th connectivity.  If this is true, then we anticipate that the set $\{\Pi_1,\Pi_2,\ldots,\Pi_{C_N}\}$ is a basis for the solution space of the system (\ref{nullstateCFT}, \ref{wardidCFT}), and the rank of the system is therefore $C_N$.  Proving this last statement is one of the principal goals of this article and its sequels \cite{florkleb2, florkleb3, florkleb4}.

\subsection{Objectives and organization} \label{objorg}

So far, we have used the application of the system (\ref{nullstateCFT}, \ref{wardidCFT}) to critical lattice models and multiple SLE$_\kappa$ to anticipate some of the properties of its solution space.  In this section, we set the stage for our proof of some of these properties by declaring the goals, describing the organization, and establishing some notation conventions for this article and its sequels \cite{florkleb2, florkleb3, florkleb4}.  Inserting (\ref{theta1}) in the system (\ref{nullstateCFT}, \ref{wardidCFT}) gives the $2N$ null-state PDEs
\be\label{nullstate}\Bigg[\frac{\kappa}{4}\partial_j^2+\sum_{k\neq j}^{2N}\left(\frac{\partial_k}{x_k-x_j}-\frac{(6-\kappa)/2\kappa}{(x_k-x_j)^2}\right)\Bigg]F(\boldsymbol{x})=0,\quad j\in\{1,2,\ldots,2N\},\ee
with $\boldsymbol{x}:=(x_1,x_2,\ldots,x_{2N})$ and $\kappa>0$ (however we consider only $\kappa\in(0,8)$ in this article), and the three conformal Ward identities
\be\label{wardid}\sum_{k=1}^{2N}\partial_kF(\boldsymbol{x})=0,\quad \sum_{k=1}^{2N}\left[x_k\partial_k+\frac{6-\kappa}{2\kappa}\right]F(\boldsymbol{x})=0,\quad \sum_{k=1}^{2N}\left[x_k^2\partial_k+\frac{(6-\kappa)x_k}{\kappa}\right]F(\boldsymbol{x})=0.\ee 
We call the $j$th null-state PDE among (\ref{nullstateCFT}) \emph{the null-state PDE centered on $x_j$.}  Although this system (\ref{nullstate}, \ref{wardid}) arises in CFT in a way that is typically non-rigorous, our treatment of this system here and in \cite{florkleb2,florkleb3,florkleb4} is completely rigorous.  Before declaring what we intend to prove about this system of PDEs, we observe some important facts about it.
\begin{itemize}
\item The subsystem of $2N$ null-state PDEs (\ref{nullstate}) is undefined on the locus of {\it diagonal points} in $\mathbb{R}^{2N}$, or points with at least two of its coordinates equal.  We let $\Omega$ be the complement of the locus of diagonal points in $\mathbb{R}^{2N}$.  Then the diagonal points make up the boundary $\partial\Omega$, and all together, they divide $\Omega$ into connected components, each of the form
\be\label{components}\Omega_\sigma:=\{\boldsymbol{x}\in\Omega\,| \,x_{\sigma(1)}<x_{\sigma(2)}<\ldots< x_{\sigma(2N-1)}< x_{\sigma(2N)}\}\ee
for some permutation $\sigma\in S_{2N}$.  By symmetry, it suffices to restrict the domain of our solutions to the component $\Omega_0:=\Omega_{\sigma_0}$ corresponding to the identity permutation $\sigma_0$.  That is, we take $x_i<x_j$ whenever $i<j$ without loss of generality.  In this article and its sequel \cite{florkleb2}, we refer to $\boldsymbol{x}:=(x_1,x_2,\ldots,x_{2N})\in\Omega_\sigma$ as a point in a component of $\Omega$ and $x_i$ as the $i$th coordinate of that point, but in the sequels \cite{florkleb3, florkleb4}, we refer to $x_i$ as a point.
\item The subsystem (\ref{nullstate}) is elliptic, so all of its solutions exhibit strong regularity.  Indeed, after summing over all $2N$ null-state PDEs, we find that any solution satisfies a linear homogeneous strictly elliptic PDE whose coefficients are analytic in any connected component of $\Omega$.  (In fact, the principal part of this PDE is simply the Laplacian.)  It follows from the theorem of Hans Lewy \cite{berssch} that all of its solutions are (real) analytic in any connected component of $\Omega$.  We use this fact to exchange the order of integration and differentiation in many of the integral equations that we encounter here and in \cite{florkleb2}.

\item We may explicitly solve the conformal Ward identities (\ref{wardid}) via the method of characteristics.  It follows that any function $F:\Omega_0\rightarrow\mathbb{R}$ that satisfies these identities must have the form
\be\label{covar}F(\boldsymbol{x})=G(\lambda_1,\lambda_2,\ldots,\lambda_{2N-3})\prod_{j=1}^{2N}|x_j-x_{\tau(j)}|^{(\kappa-6)/2\kappa},\ee
where $\{\lambda_1,\lambda_2,\ldots,\lambda_{2N-3}\}$ is any set of $2N-3$ independent cross-ratios that we may form from $x_1,$ $x_2,\ldots,x_{2N}$, where $G(\lambda_1,\lambda_2,\ldots,\lambda_{2N-3})$ is a (real) analytic function of $\boldsymbol{x}\in\Omega_0$, and where $\tau$ is any pairing (i.e., a permutation $\tau\in S_{2N}$ other than the identity with $\tau=\tau^{-1}$) of the indices $1,2,\ldots,2N$.

\item We suppose that $f$ is a M\"{o}bius transformation sending the upper half-plane onto itself, and we define $x'_i:=f(x_i)$ and $\boldsymbol{x}':=(x_1',x_2',\ldots,x_{2N}')$.  Then the mapping $T:\Omega_0\rightarrow\Omega$ defined by $T(\boldsymbol{x})=\boldsymbol{x}'$ sends $\Omega_0$ onto a possibly different connected component $T(\Omega_0)$ of $\Omega$.  Because the cross-ratios $\lambda_1$, $\lambda_2,\ldots,\lambda_{2N-3}$ are invariant under $f$, the right side of (\ref{covar}) evaluated at any $\boldsymbol{x}'\in T(\Omega_0)$ is well-defined.  If we enumerate the permutations in $S_{2N}$ so $\Omega_0$, $\Omega_{\sigma_1}$, $\Omega_{\sigma_2},\ldots,\Omega_{\sigma_M}$ are all of the components of $\Omega$ that may be reached from $\Omega_0$ by such a transformation $T$, then we use (\ref{covar}) to extend $F$ to the function 
\be\label{hatF}\hat{F}:\bigcup_{j=0}^M\Omega_{\sigma_j}\rightarrow\mathbb{R},\quad \hat{F}(\boldsymbol{x}):=G(\lambda_1,\lambda_2,\ldots,\lambda_{2N-3})\prod_{j=1}^{2N}|x_j-x_{\tau(j)}|^{(\kappa-6)/2\kappa}.\ee
It is evident that because $F$ in (\ref{covar}) satisfies the system of PDEs (\ref{nullstate}, \ref{wardid}), $\hat{F}$ must satisfy this system too on each component of $\Omega$ in its domain.  Now, it is easy to show that (\ref{hatF}) transforms covariantly with respect to conformal bijections of the upper half-plane onto itself, with each of the $2N$ independent variables having \emph{conformal weight} $\theta_1$ (\ref{theta1}).  In other words, the functional equation (where $\partial f(x):=\partial f(x)/\partial x$)
\be\label{transform}\hat{F}(\boldsymbol{x}')=\partial f(x_1)^{-\theta_1}\partial f(x_2)^{-\theta_2}\dotsm\partial f(x_{2N})^{-\theta_1}\hat{F}(\boldsymbol{x}),\quad\theta_1:=(6-\kappa)/2\kappa\ee
holds whenever $f$ is a M\"{o}bius transformation taking the upper half-plane onto itself.  Such transformations are compositions of translation by $a\in\mathbb{R}$, dilation by $b>0$, and the inversion $x\mapsto-1/x$ (all of which have positive-valued derivatives).  Hence, $F(\boldsymbol{x})$ is invariant as we translate the coordinates of $\boldsymbol{x}$ by the same amount and is covariant with conformal weight $\theta_1$ (\ref{theta1}) as we dilate all of them by the same factor or invert all of them.  The first, second, and third Ward identities (\ref{wardid}) (counting from the left) respectively induce these three properties.

\item We may directly compute the solution space for the system of PDEs (\ref{nullstate}, \ref{wardid}) with domain $\Omega_0$ for $N\in\{1,2\}$ \cite{bbk}.  
\begin{itemize}
\item In the $N=1$ case, we use the first conformal Ward identity of (\ref{wardid}) (counting from the left) to reduce either PDE in (\ref{nullstate}) to a second order Euler differential equation in the one variable $x_2-x_1$.  The Euler equation has two characteristic powers (given in (\ref{p}) below), and the second conformal Ward identity of (\ref{wardid}) permits only the power $1-6/\kappa$.  Thus, the solution space is
\be\label{S1}\mathcal{S}_1=\{F:\Omega_0\rightarrow\mathbb{R}\,|\, \text{$F(x_1,x_2)=C(x_2-x_1)^{1-6/\kappa}$ for some $C\in\mathbb{R}$}\}.\ee
It is easy to show that the elements of $\mathcal{S}_1$ satisfy the third conformal Ward identity of (\ref{wardid}).
\item In the $N=2$ case, the conformal Ward identities demand that our solutions have the form (\ref{covar}), which we write as
\be\label{4ptansatz}F(\boldsymbol{x})=(x_4-x_2)^{1-6/\kappa}(x_3-x_1)^{1-6/\kappa}G\left(\frac{(x_2-x_1)(x_4-x_3)}{(x_3-x_1)(x_4-x_2)}\right),\ee
with $G$ an unspecified function.  By substituting (\ref{4ptansatz}) into any one of the null-state PDEs, we find that $[\lambda(1-\lambda)]^{-2/\kappa}G(\lambda)$ satisfies a second order hypergeometric differential equation.  This restricts $G$ to a linear combination of two possible functions $G_1$ and $G_2$ given by
\be\label{Pi1}G_1(\lambda)=G_2(1-\lambda)=\lambda^{2/\kappa}(1-\lambda)^{1-6/\kappa}\,_2F_1\,\left(\frac{4}{\kappa},1-\frac{4}{\kappa};\frac{8}{\kappa}\,\bigg|\,\lambda\right),\ee
with $_2F_1$ the Gauss hypergeometric function \cite{morsefesh}.  Thus, with $\lambda:=(x_2-x_1)(x_4-x_3)/(x_3-x_1)(x_4-x_2)$, the solution space is
\be\label{S2}\mathcal{S}_2=\{F:\Omega_0\rightarrow\mathbb{R}\,|\, \text{$F(\boldsymbol{x})=[(x_4-x_2)(x_3-x_1)]^{1-6/\kappa}[C_1G_1(\lambda)+C_2G_2(\lambda)]$ for some $C_1,C_2\in\mathbb{R}$}\}.\ee
\end{itemize}
\end{itemize}

Now we define the solution space for the system of PDEs (\ref{nullstate}, \ref{wardid}) that we wish to rigorously characterize in this article:
\begin{defn}\label{solutionspace} Let $\mathcal{S}_N$ denote the vector space over the real numbers of all functions $F:\Omega_0\rightarrow\mathbb{R}$ 
\begin{itemize}
\item that satisfy the system of PDEs (\ref{nullstate}, \ref{wardid}) (in the classical sense of \cite{giltru}), and
\item for which there exist positive constants $C$ and $p$ (which we may choose to be as large as needed) such that
\be\label{powerlaw} |F(\boldsymbol{x})|\leq C\prod_{i<j}^{2N}|x_j-x_i|^{\mu_{ij}(p)}\quad\text{with}\quad\mu_{ij}(p):=\begin{cases}-p, & |x_i-x_j|<1 \\ +p, & |x_i-x_j|\geq1\end{cases}\quad\text{for all $\boldsymbol{x}\in\Omega_0.$}\ee
\end{itemize}
\end{defn}
Sometimes, we write $\mathcal{S}_N(\kappa_0)$ to specify the particular solution space $\mathcal{S}_N$ with $\kappa=\kappa_0$ and use a similar notation for subsets of $\mathcal{S}_N$ too.  But usually, we suppress reference to the parameter $\kappa$ and simply write $\mathcal{S}_N$.

One may explicitly construct many putative elements of $\mathcal{S}_N$ by using the Coulomb gas formalism first proposed by V.S.\ Dotsenko and V.A.\ Fateev \cite{df1,df2}.  This method is non-rigorous, but in \cite{dub}, J.\ Dub\'{e}dat gave a proof that these ``candidate solutions" indeed satisfy the system of PDEs (\ref{nullstate}, \ref{wardid}).  We call these solutions \emph{Coulomb gas solutions.}  

The goal of this article and its sequels \cite{florkleb2,florkleb3,florkleb4} is to completely determine the space $\mathcal{S}_N$ for all $\kappa\in(0,8)$.  By ``determine," we mean
\begin{enumerate}
\item\label{item1} Rigorously prove that $\mathcal{S}_N$ is spanned by real-valued Coulomb gas solution.
\item\label{item2} Rigorously prove that $\dim\mathcal{S}_N=C_N$.
\item\label{item3} Argue that $\mathcal{S}_N$ has a basis $\mathscr{B}_N:=\{\Pi_1,\Pi_2,\ldots,\Pi_{C_N}\}$ of $C_N$ connectivity weights and find formulas for all of the connectivity weights.
\end{enumerate}
(For all $\kappa\geq8$, the multiple-SLE$_\kappa$ curves are space-filling almost surely \cite{rohshr,knk}.  Although we suspect that the findings of this article and its sequels \cite{florkleb2, florkleb3, florkleb4} are true for all $\kappa\geq8$, our proofs do not carry over to this range.)  Goals \ref{item1} and \ref{item2} determine the size and content of $\mathcal{S}_N$.  In this article, we prove the upper bound $\dim\mathcal{S}_N\leq C_N$.  To obtain that upper bound in this article, we construct a basis $\mathscr{B}_N^*$ for the dual space $\mathcal{S}_N^*$ of linear functionals acting on $\mathcal{S}_N$.  The derivation of this bound is contained entirely within this article, except for the proof of lemma \ref{alltwoleglem} below, which we defer to the second article \cite{florkleb2} of this series.  In the third article \cite{florkleb3}, we use the results of this article and \cite{florkleb2} to achieve goals \ref{item1} and \ref{item2} above.  In the fourth article \cite{florkleb4}, we investigate connectivity weights among other topics, and we use them to predict a formula for the probability of a particular multiple-SLE$_\kappa$ boundary arc connectivity.  A heuristic, though non-rigorous, argument shows that the basis $\mathscr{B}_N$ for $\mathcal{S}_N$ that is dual to $\mathscr{B}_N^*$ comprises all of the connectivity weights.  This realization gives a direct method for their computation, as desired in goal \ref{item3}.

In a future article \cite{fkz}, we use the connectivity weights to derive continuum-limit crossing-probability formulas for critical lattice models (such as percolation, Potts models, and random cluster models) in a polygon with an FFBC.  We verify our predictions with high-precision computer simulations of the $Q\in\{2,3\}$ critical random cluster model in a hexagon, finding excellent agreement.

In appendix \ref{preliminaries}, we survey some of the CFT methodologies used to study critical lattice models.  This formalism (non-rigorously) anticipates many of our results, so we often interpret our findings in context with CFT throughout this article.  The reader who is not familiar with this approach but wishes to understand our asides to it may consult this appendix.  (We emphasize that, in spite of our occasional references to CFT, all of our proofs are rigorous, and none of them use assumptions from CFT.  Rather, CFT tells us what ought to be true, and we prove those facts using rigorous methods.)

\subsection{A survey of our approach}\label{survey}

In this section, we motivate our method to achieve goals \ref{item1}--\ref{item3} (stated in the previous section), momentarily restricting our attention to percolation ($\kappa=6$) \cite{grim,weintru,smir2} for simplicity.  To begin, we choose one of the $C_N$ available boundary arc connectivities in a $2N$-sided polygon $\mathcal{P}$ with vertices $w_1,$ $w_2,\ldots,w_{2N}\in\mathbb{C}$.  Topological considerations show that there are at least two sides of $\mathcal{P}$ whose two adjacent vertices are endpoints of a common boundary arc (i.e., multiple-SLE$_\kappa$ curve), and we let $[w_i,w_{i+1}]$ be such a side.

Next, we investigate what happens as the vertices $w_i$ and $w_{i+1}$ approach each other. What we find depends on the boundary condition for $[w_i,w_{i+1}]$.  If this side is wired, then the adjacent free sides $[w_{i-1},w_i]$ and $[w_{i+1},w_{i+2}]$ fuse into one contiguous free side $[w_{i-1},w_{i+2}]$ of a $(2N-2)$-sided polygon $\mathcal{P}'$, and the isolated boundary cluster previously anchored to $[w_i,w_{i+1}]$ contracts away.  Or if $[w_i,w_{i+1}]$ is free, then the adjacent wired sides fuse into one contiguous wired side of $\mathcal{P}'$, and the boundary clusters previously anchored to these sides fuse into one boundary cluster anchored to this new wired side.  In either situation, the original crossing configuration for the $2N$-sided polygon goes to a crossing configuration for a $(2N-2)$-sided polygon, and the connectivity weight for the former configuration goes to the connectivity weight for the latter configuration (figure \ref{CollapseSeq}).  (In percolation, ``connectivity weight" and ``crossing-probability formula" are synonymous.)  If we repeat this process $N-1$ more times, then we end with a zero-sided polygon, or disk, whose boundary is either all wired or all free.  The disk trivially exhibits just one ``boundary arc connectivity" with zero curves, so this cumulative process sends the original connectivity weight to one.

\begin{figure}[b!]
\centering
\includegraphics[scale=0.23]{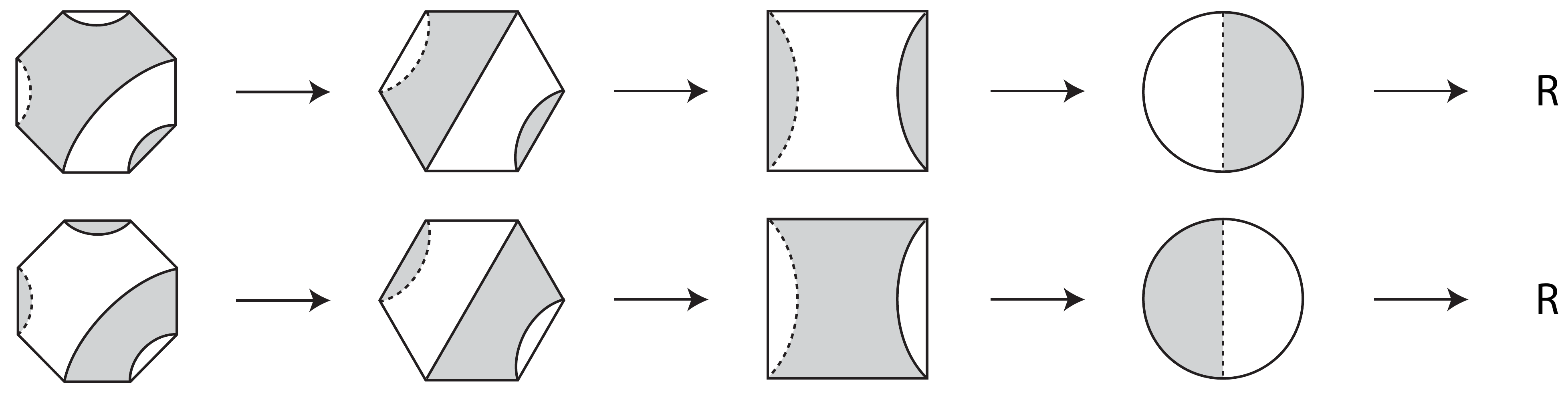}
\caption{The map that sends an octagon connectivity weight to a nonzero real number.  In each step, we bring together the endpoints of the dashed boundary arc.}\label{CollapseSeq}
\end{figure}

Now, we might bring together in a different order the same pairs of vertices that approach each other in the previous paragraph.  But because all of these variations send the same connectivity weight to one, we anticipate that all of them are different realizations of the same map.

Next, we investigate what happens if two adjacent vertices $w_i$ and $w_{i+1}$ that are not endpoints of a common boundary arc approach each other.  Using the same boundary arc connectivity as before, we observe one of two outcomes.  If the side $[w_i,w_{i+1}]$ is wired, then the adjacent free sides $[w_{i-1},w_i]$ and $[w_{i+1},w_{i+2}]$ of $\mathcal{P}$ do not fuse into one contiguous free segment.  Instead, they remain separated by an infinitesimal wired segment centered on the point $w_i=w_{i+1}$ within the side $[w_{i-1},w_{i+2}]$ of $\mathcal{P}'$, and the boundary cluster previously anchored to $[w_i,w_{i+1}]$ now anchors to this infinitesimal segment.  Or if $[w_i,w_{i+1}]$ is free, then the adjacent wired sides do not fuse into one contiguous wired segment.  Instead, they remain separated by an infinitesimal free segment centered on $w_i$, and the boundary clusters that originally anchored to the adjacent wired sides remain separated by this segment.  The likelihood of witnessing either of these two configurations in $\mathcal{P}'$, with respect to the point $w_i$ on its boundary, is zero.  Hence, pulling together two vertices not connected by a common boundary arc sends the connectivity weight for the original configuration in $\mathcal{P}$ to zero.  In CFT, this corresponds to the appearance of only the two-leg fusion channel in the OPE of the one-leg corner operators at $w_i$ and $w_{i+1}$. (See appendix \ref{preliminaries}.)

\begin{figure}[t!]
\centering
\includegraphics[scale=0.3]{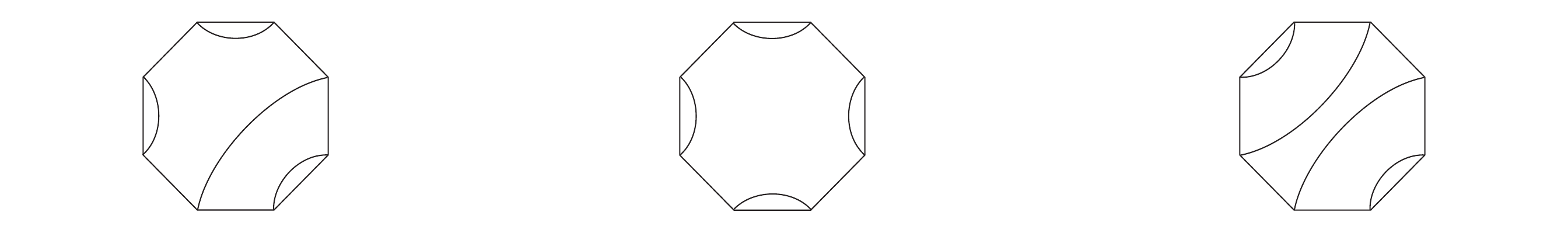}
\caption{Three boundary arc connectivity diagrams for the octagon.  We find the other $C_4-3=11$ diagrams by rotating one of these three.}
\label{Diagrams}
\end{figure}

All of these mappings that pull pairs of adjacent vertices of $\mathcal{P}$ together are subject to one constraint: the vertices $w_{i_{2j-1}}$ and $w_{i_{2j}}$ of the $2(N-j+1)$-sided polygon $\mathcal{P}_j$ to be brought together at the $j$th step of this mapping cannot be separated from each other by any other vertices within the boundary of $\mathcal{P}_j$.  If we imagine connecting $w_{i_{2j-1}}$ and $w_{i_{2j}}$ with an arc crossing the interior of $\mathcal{P}$ for each $j\in\{1,2,\ldots,N\}$, then this condition is satisfied if and only if we may draw these arcs in $\mathcal{P}$ so they do not intersect.  Furthermore, two mappings that bring the same pairs of vertices together in a different order have the same such arc connectivities, of which there are $C_N$ (\ref{catalan}) (figure \ref{Diagrams}).  Now, if changing the order in which we collapse the sides of $\mathcal{P}$ does not change the image of any of these mappings, then there are effectively only $C_N$ distinct mappings.  Assuming that the order indeed does not matter, we enumerate these mappings as we enumerated their corresponding boundary arc connectivities, we denote the $\varsigma$th of them by $[\mathscr{L}_\varsigma]$, and we let $\mathscr{B}_N^*:=\{[\mathscr{L}_1],[\mathscr{L}_2],\ldots,[\mathscr{L}_{C_N}]\}.$

If the arc connectivity for some specified mapping in $\mathscr{B}_N^*$ matches (resp.\ does not match) the boundary arc connectivity for some specified connectivity weight, then our arguments imply that this mapping sends that connectivity weight to one (resp.\ zero).  Hence, we anticipate that $[\mathscr{L}_\varsigma]\Pi_\vartheta=\delta_{\varsigma,\vartheta}$ for all $\varsigma,\vartheta\in\{1,2,\ldots,C_N\}$ and, assuming that $\mathcal{S}_N^*$ is finite-dimensional with basis $\mathscr{B}_N^*$, the set $\mathscr{B}_N$ of connectivity weights is the basis of $\mathcal{S}_N$ dual to $\mathscr{B}_N^*$.  (Therefore, in the sequel \cite{florkleb4}, we define the $\varsigma$th connectivity weight $\Pi_\varsigma$ to be the element of $\mathcal{S}_N$ dual to $[\mathscr{L}_\varsigma]\in\mathscr{B}_N^*$.  Both here and in that article, we assume that this precise definition agrees with the multiple-SLE$_\kappa$ definition given above.)

If $\mathscr{B}_N^*$ is a basis for $\mathcal{S}_N^*$, then this duality relation implies that $[\mathscr{L}_\varsigma]F$ is the coefficient of the $\varsigma$th connectivity weight $\Pi_\varsigma$ in the decomposition of $F\in\mathcal{S}_N$ over $\mathscr{B}_N$.  Furthermore, the linear mapping $v:\mathcal{S}_N\rightarrow\mathbb{R}^{C_N}$ with the $\varsigma$th coordinate of $v(F)$ equaling $[\mathscr{L}_\varsigma]F$ is a bijection.  That is, although $v$ destroys the pointwise information contained in each element of $\mathcal{S}_N$, it preserves the linear relations between these elements.

This reasoning motivates our strategy for proving goals \ref{item1}--\ref{item3} in section \ref{objorg} but in what follows, we order our steps differently because we do not know how to prove that $\mathscr{B}_N^*$ is a basis for $\mathcal{S}_N^*$ \emph{a priori}.  So after constructing the elements of $\mathscr{B}_N^*$ in sections \ref{boundary behavior} and \ref{dualspace}, we prove that the linear mapping $v$ is injective in section \ref{upperbound} (deferring there the proof of lemma \ref{alltwoleglem} to \cite{florkleb2}).   Then the dimension theorem of linear algebra bounds the dimension of $\mathcal{S}_N$ by $C_N$.  Finally, to prove that the dimension of $\mathcal{S}_N$ is indeed $C_N$ in \cite{florkleb3}, we use the Coulomb gas formalism to construct $C_N$ explicit elements of $\mathcal{S}_N$ and then use results of this article to prove that they are linearly independent.  That $\mathscr{B}_N^*$ is a basis for $\mathcal{S}_N$ follows from these results.

\section{Boundary behavior of solutions}\label{boundary behavior}

Motivated by the observations of section \ref{survey}, we investigate the behavior of elements of $\mathcal{S}_N$ near certain points in the boundary of $\Omega_0$.  (See the discussion surrounding (\ref{components}).)  If we conformally map the $2N$-sided polygon $\mathcal{P}$ onto the upper half-plane, with its $i$th vertex $w_i$ sent to the $i$th coordinate of $\boldsymbol{x}\in\Omega_0$, then the action of bringing together the vertices $w_{i+1}$ and $w_i$ sends $\boldsymbol{x}$ to the boundary point $(x_1,x_2,\ldots,x_{i-1},x_i,x_i,x_{i+2},\ldots,x_{2N})\in\partial\Omega_0$.  This point is in the hyperplane within $\partial\Omega_0$, whose points have only the $i$th and $(i+1)$th coordinates equal.  Hence, to implement the mappings described in the previous section for any $F\in\mathcal{S}_N$, we must study the limit of $F(\boldsymbol{x})$ as $x_{i+1}\rightarrow x_i$ for any $i\in\{1,2,\ldots,2N-1\}$ first.

Interpreting $F\in\mathcal{S}_N$ as a half-plane correlation function of $2N$ one-leg boundary operators (\ref{2Npoint}), we anticipate this limit using CFT. (These correlation functions appear, e.g., on the right side of (\ref{partratio}).  See appendix \ref{preliminaries} for further details and a review of the CFT nomenclature that we refer to here.)  We envisage the $i$th coordinate of $\boldsymbol{x}\in\Omega_0$ as hosting a one-leg boundary operator $\psi_1(x_i)$.  If we send $x_{i+1}\rightarrow x_i$ for some $i\in\{1,2,\ldots,2N-1\}$, then the operators $\psi_1(x_i)$ and $\psi_1(x_{i+1})$ fuse into some combination of an identity operator $\psi_0(x_i)$ (which is actually independent of $x_i$) and a two-leg boundary operator at $\psi_2(x_i)$.  After inserting their OPE into the $2N$-point function $F$, we find the Frobenius series expansion
\bea\label{onelegcorr}F(x_1,x_2,\ldots,x_{2N})&=&\langle\psi_1(x_1)\psi_1(x_2)\dotsm\psi_1(x_i)\psi_1(x_{i+1})\dotsm\psi_1(x_{2N})\rangle\\
\label{0fuse}&=& C_{11}^0(x_{i+1}-x_i)^{-2\theta_1+\theta_0}\langle\psi_1(x_1)\psi_1(x_2)\dotsm\psi_0(x_i)\psi_1(x_{i+2})\dotsm\psi_{2N}(x_{2N})\rangle+\dotsm\\
\label{2fuse}&+& C_{11}^2(x_{i+1}-x_i)^{-2\theta_1+\theta_2}\langle\psi_1(x_1)\psi_1(x_2)\dotsm\psi_2(x_i)\psi_1(x_{i+2})\dotsm\psi_{2N}(x_{2N})\rangle+\dotsm.\eea
Here, $C_{11}^0$ and $C_{11}^2$ are arbitrary real constants (for our present purposes), $\theta_s$ is the conformal weight of the $s$-leg boundary operator (\ref{bdysleg})
\be\label{bdyslegformula}\theta_s=\frac{s(2s+4-\kappa)}{2\kappa},\ee
Below, (\ref{p}) gives the powers appearing in (\ref{0fuse}, \ref{2fuse}).  Loosely speaking, if appropriately normalized, we refer to either (\ref{0fuse}) or (\ref{2fuse}) in CFT as a ``conformal block," and (\ref{0fuse}) and (\ref{2fuse}) correspond to the identity and two-leg fusion channels of the constituent one-leg boundary operators respectively 

Motivated by this interpretation of $F$, we suppose that any $F\in\mathcal{S}_N$ has a Frobenius series expansion in $x_{i+1}$ centered on $x_i$:
\bea \label{F0}F(x_1,x_2,\ldots,x_{2N})&=&(x_{i+1}-x_i)^{-p}F_0(x_1,x_2,\ldots, x_i,x_{i+2},\ldots,x_{2N})\\
\label{F1}&+&(x_{i+1}-x_i)^{-p+1}F_1(x_1,x_2,\ldots, x_i,x_{i+2},\ldots,x_{2N})\\
\label{F2}&+&(x_{i+1}-x_i)^{-p+2}F_2(x_1,x_2,\ldots, x_i,x_{i+2},\ldots,x_{2N})+\dotsm.\eea
Using the null-state PDEs (\ref{nullstate}) centered on $x_i$ and $x_{i+1}$ (i.e., with $j=i$ and $j=i+1$) and collecting the leading order contributions, we find from either equation that
\be\label{powers}\bigg[\frac{\kappa}{4}(-p)(-p-1)-p-\frac{6-\kappa}{2\kappa}\bigg]F_0=0.\ee
Solving this equation for $-p$, we find the two powers
\be \label{p}-p=\begin{cases}p_1= -2\theta_1+\theta_0=1-6/\kappa,& \quad\text{the identity channel,} \\  p_2= -2\theta_1+\theta_2=2/\kappa,&\quad \text{the two-leg channel,} \end{cases}\ee
that appear in (\ref{0fuse}) and (\ref{2fuse}) respectively.  In this article, we restrict our attention to the range $\kappa\in(0,8)$ over which $p_1=1-6/\kappa$ is the smaller power.  At the next order, the null-state PDEs centered on $x_i$ and $x_{i+1}$ respectively give
\bea\label{pfirst}\frac{\kappa p}{2}\partial_iF_0+\bigg[\frac{\kappa}{4}(-p+1)(-p)+(-p+1)-\frac{6-\kappa}{2\kappa}\bigg]F_1&=&0,\\
\label{psecond}-\partial_iF_0+\bigg[\frac{\kappa}{4}(-p+1)(-p)+(-p+1)-\frac{6-\kappa}{2\kappa}\bigg]F_1&=&0.\eea
Taking their difference when $-p=p_1=1-6/\kappa$ gives $\partial_i F_0=0$, so $F_1=0$ immediately follows if $\kappa\neq4$.  (We re-examine the case with $\kappa=4$ more closely in section \red{II} of \cite{florkleb4}.)  In the CFT language, the condition that $F_1=0$ is equivalent to the vanishing of the level-one descendant of the identity operator, and the condition $\partial_i F_0=0$ implies that the identity operator is nonlocal.  Comparing (\ref{F0}) with (\ref{0fuse}) suggest that we interpret $F_0$ as a $(2N-2)$-point function of one-leg boundary operators.  If this supposition is true, then $F_0$ must satisfy the system of PDEs (\ref{nullstate}, \ref{wardid}) in the coordinates of $\{x_j\}_{j\neq i,i+1}$ with $N$ replaced by $N-1$.  This observation echoes our previous claim that the limit $x_{i+1}\rightarrow x_i$ (previously $w_{i+1}\rightarrow w_i$) sends a connectivity weight for a $2N$-sided polygon to that of a $(2N-2)$-sided polygon.  If $-p=p_2=2/\kappa$ instead, then (\ref{pfirst}) and (\ref{psecond}) are identical, so $\partial_iF$ is typically not zero.  In the CFT language, this implies that the two-leg operator is local.  We focus our attention on the $-p=p_1$ case of (\ref{p}) for now and postpone consideration of the $-p= p_2$ case to \cite{florkleb2,florkleb4}.

These heuristic calculations suggest that for all $F\in\mathcal{S}_N$, if we let $x_{i+1}$ approach $x_i$ with the values of $x_i$ and the other coordinates of $\boldsymbol{x}\in\Omega_0$ fixed, then $F(\boldsymbol{x})$ either grows or decays with power $1-6/\kappa$ or greater.  Lemma \ref{boundedlem} below establishes this fact, but before we prove it, we introduce some convenient notation.

\begin{defn}\label{pidef} We define $\pi_i:\Omega\rightarrow\mathbb{R}^{2N-1}$ and $\pi_{ij}:\Omega\rightarrow\mathbb{R}^{2N-2}$ to be the projection maps removing the $i$th coordinate and both the $i$th and $j$th coordinates respectively from $\boldsymbol{x}\in\Omega$:
\be\begin{aligned} \pi_i(\boldsymbol{x})&=(x_1,x_2,\ldots,x_{i-1},x_{i+1},\ldots,x_{2N}),\\
 \pi_{ij}(\boldsymbol{x})&=(x_1,x_2,\ldots,x_{i-1},x_{i+1},\ldots,x_{j-1},x_{j+1},\ldots,x_{2N}).\end{aligned}\ee
More generally, we define $\pi_{i_1,i_2,\ldots i_M}:\Omega\rightarrow\mathbb{R}^{2N-M}$ to be the projection map removing the coordinates with indices $i_1,$ $i_2,\ldots,i_M$ from $\boldsymbol{x}\in\Omega$. 
\end{defn}
\noindent
In this article, we often identify $\pi_{i+1}(\Omega_0)$ with the subset of the boundary of $\Omega_0$ whose points have only two coordinates, the $i$th and the $(i+1)$th, equal.  Furthermore, we sometimes identify $\pi_{i,i+1}(\Omega_0)$, explicitly given by
 \be\pi_{i,i+1}(\Omega_0):=\{(x_1,x_2,\ldots,x_{i-1},x_{i+2},\ldots,x_{2N})\in\mathbb{R}^{2N-2}\,|\,x_1<x_2<\ldots<x_{i-1}<x_{i+2}<\ldots<x_{2N}\},\ee
with the subset of the boundary of $\Omega_0$ whose points have only three coordinates, the $(i-1)$th, the $i$th and the $(i+1)$th, equal.

\begin{lem}\label{boundedlem}  Suppose that $\kappa\in(0,8)$ and $F\in\mathcal{S}_N$, and for some $i\in\{1,2,\ldots,2N-1\}$, let
\be\label{xdelta}\boldsymbol{x}_\delta:=(x_1,x_2,\ldots,x_i,x_i+\delta,x_{i+2},\ldots,x_{2N}).\ee
Then for all $j,k\not\in\{i,i+1\}$ and any compact subset $\mathcal{K}$ of $\pi_{i+1}(\Omega_0)$, the supremums
\be\label{SNconditions}\sup_\mathcal{K}|F(\boldsymbol{x}_\delta)|, \quad\sup_\mathcal{K}|\partial_j F(\boldsymbol{x}_\delta)|, \quad \sup_\mathcal{K}|\partial_j\partial_kF(\boldsymbol{x}_\delta)|\ee
are $O(\delta^{1-6/\kappa})$ as $\delta\downarrow0$. 
\end{lem}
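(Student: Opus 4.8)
\section*{Proof proposal for Lemma \ref{boundedlem}}

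The plan is to prove the lemma in two stages. First I would establish the bound $\sup_{\mathcal K}|F(\boldsymbol{x}_\delta)|=O(\delta^{1-6/\kappa})$ on $F$ itself, and then deduce the bounds on the spectator derivatives $\partial_jF$ and $\partial_j\partial_kF$ from it. The hypothesis (\ref{powerlaw}) by itself is far too weak: it only gives $|F(\boldsymbol{x}_\delta)|\lesssim\delta^{-p}$ for some possibly enormous $p$, and the whole point is to upgrade this a priori estimate to the sharp exponent $1-6/\kappa$, which is exactly the smaller of the two indicial exponents $p_1,p_2$ attached to the collision $x_{i+1}\to x_i$ in (\ref{p}). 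Throughout, I fix $i$, a compact $\mathcal K\subset\pi_{i+1}(\Omega_0)$, and a slightly larger compact neighborhood $\mathcal K'\subset\pi_{i+1}(\Omega_0)$ of $\mathcal K$; the gain will come from the fact that, for $\boldsymbol{x}\in\mathcal K'$ and $\delta$ small, the coordinates $x_i,x_{i+2},\ldots$ remain a fixed positive distance apart, so that the only singularity in play is the one at $x_i=x_{i+1}$.

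For the first stage I would introduce the ``center of mass'' variable $s:=\tfrac12(x_i+x_{i+1})$ and the gap $\delta=x_{i+1}-x_i$, and regard $\tilde F(s,\delta;\hat{\boldsymbol{x}}):=F(\boldsymbol{x}_\delta)$ as a function of $\delta\in(0,\delta_0]$, of $s$, and of the spectator coordinates $\hat{\boldsymbol{x}}$. Adding the null-state PDEs (\ref{nullstate}) centered on $x_i$ and on $x_{i+1}$, and collecting the terms singular in $\delta$, yields an equation of the form
\be\label{eq:Meq}\Big[\tfrac{\kappa}{2}\,\partial_\delta^2+\tfrac{2}{\delta}\,\partial_\delta-\tfrac{(6-\kappa)/\kappa}{\delta^2}\Big]\tilde F=-\tfrac{\kappa}{8}\,\partial_s^2\tilde F-\sum_{k\neq i,i+1}\Big(\tfrac{\partial_{x_k}\tilde F}{x_k-x_i}-\tfrac{(6-\kappa)/2\kappa}{(x_k-x_i)^2}\,\tilde F\Big)+(\text{terms regular as }\delta\downarrow0)=:R(\delta),\ee
where the left-hand operator $M$ is of Cauchy--Euler type with $M[\delta^r]=\tfrac{\kappa}{2}(r-p_1)(r-p_2)\,\delta^{r-2}$, $p_1=1-6/\kappa$ and $p_2=2/\kappa$ being precisely the exponents in (\ref{p}); since $\kappa\in(0,8)$, $p_1<p_2$. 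The crucial feature of this particular combination of the two PDEs is that the right-hand side $R$ involves $\tilde F$ and its derivatives only through $\partial_s$ and the spectator derivatives $\partial_{x_k}$ (never $\partial_{x_i}$ or $\partial_{x_{i+1}}$ alone), with all coefficients bounded as $\delta\downarrow0$. I would then solve (\ref{eq:Meq}) by variation of parameters from a fixed reference value $\delta=\delta_0$: $\tilde F(\delta)=a\,\delta^{p_1}+b\,\delta^{p_2}+\int_\delta^{\delta_0}\mathcal{G}(\delta,\eta)\,R(\eta)\,d\eta$, where $\mathcal G$ is the Green's function built from $\delta^{p_1},\delta^{p_2}$ (with Wronskian $\propto\eta^{p_1+p_2-1}$, so $|\mathcal G(\delta,\eta)|\lesssim\delta^{p_1}\eta^{1-p_1}+\delta^{p_2}\eta^{1-p_2}$ for $\delta<\eta$), and the coefficients $a,b$, determined by $\tilde F$ and $\partial_\delta\tilde F$ at $\delta=\delta_0$, are bounded uniformly for $\hat{\boldsymbol{x}}$ in a compact set and $s$ in an interval, because there $F$ is a perfectly tame analytic function.

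The heart of the proof is a bootstrap on the exponent. Suppose one knows $|\tilde F|\lesssim\delta^{q}$ uniformly over $\hat{\boldsymbol{x}}\in\mathcal K'$ and $s$ in an interval. Because $s$ and the $x_k$ stay a fixed distance from the singular locus, the analytic regularity of $F$ recorded in Section \ref{objorg} permits Cauchy estimates in the $s$- and $x_k$-directions on discs of $\delta$-independent radius, so $|\partial_s^2\tilde F|,|\partial_{x_k}\tilde F|\lesssim\delta^q$ and hence $|R(\delta)|\lesssim\delta^{q}$; this is exactly why the symmetric combination (\ref{eq:Meq}) was chosen, since a Cauchy estimate on $\partial_{x_i}$ alone would cost a power of $\delta$. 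Inserting $|R|\lesssim\delta^q$ into the variation-of-parameters formula and using the kernel bound gives $|\tilde F|\lesssim\delta^{q+2}$ as long as $q<p_1-2$, and $|\tilde F|\lesssim\delta^{p_1}$ as soon as $q>p_1-2$ (the homogeneous part $a\delta^{p_1}+b\delta^{p_2}$ being $O(\delta^{p_1})$ since $p_1\le p_2$). Starting from $q=-p$ supplied by (\ref{powerlaw}) and iterating finitely many times reaches the regime $q>p_1-2$; the only value to avoid is $q=p_1-2$ exactly, at which a logarithm appears, and since (\ref{powerlaw}) may be taken with any larger $p$ I choose $p$ so that $-p+2n\neq p_1-2$ for every integer $n\ge0$. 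This yields $\sup_{\mathcal K}|F(\boldsymbol{x}_\delta)|=O(\delta^{1-6/\kappa})$, uniformly, and in fact the same on every compact subset of $\pi_{i+1}(\Omega_0)$.

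For the derivative bounds I would set $W:=\delta^{-p_1}\tilde F$, which is bounded (uniformly on compacts) by the first stage. Dividing the sum of all $2N$ null-state PDEs (whose principal part is the full Laplacian, as noted in Section \ref{objorg}) by $\delta^{p_1}$ and using that $p_1$ is an indicial exponent of $M$, the worst singular ($\delta^{-2}$) coefficient cancels, so $W$ solves a uniformly elliptic equation in $(s,\delta,\hat{\boldsymbol{x}})$ whose only singular coefficient is a $\delta^{-1}\partial_\delta$ term. Interior elliptic estimates for this equation in the $s$- and $x_k$-directions then give $|\partial_s W|,|\partial_{x_k}W|,|\partial_{x_j}\partial_{x_k}W|=O(1)$ up to $\delta=0$, and multiplying back by $\delta^{p_1}$ produces the asserted bounds $|\partial_jF(\boldsymbol{x}_\delta)|,|\partial_j\partial_kF(\boldsymbol{x}_\delta)|=O(\delta^{1-6/\kappa})$ for $j,k\notin\{i,i+1\}$ (note the same mechanism explains why $\partial_{x_i}F$ is only $O(\delta^{p_1-1})$, which is why the lemma excludes those indices). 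The step I expect to be the main obstacle is making the bootstrap of the third paragraph fully rigorous and uniform: controlling $R$ in (\ref{eq:Meq}) requires the Cauchy (or interior) estimates in the directions transverse to the collision to hold with radii independent of $\delta$, which in turn rests on a quantitative form of analytic/elliptic regularity adapted to the singular geometry; the same transverse-estimate issue, together with the cancellation of the $\delta^{-2}$ coefficient, is what must be handled carefully to obtain the derivative bounds with the \emph{same} power $\delta^{1-6/\kappa}$ rather than one power worse.
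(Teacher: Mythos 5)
Your overall architecture matches the paper's: invert the Euler (Cauchy--Euler) operator in $\delta$ by a Green function, feed in the a priori power law (\ref{powerlaw}), and bootstrap the exponent up to $1-6/\kappa$, then get the spectator-derivative bounds from elliptic estimates for the rescaled function $\delta^{6/\kappa-1}F$. (Your symmetric combination of the PDEs at $x_i$ and $x_{i+1}$ in center-of-mass coordinates, versus the paper's use of the single PDE centered on $x_{i+1}$ with the $\partial_x/\delta$ term moved to the right side, is a cosmetic difference.) However, the step you yourself flag as the main obstacle is a genuine gap, and it is the heart of the proof. Your bootstrap needs $|R(\delta)|\lesssim\delta^{q}$ whenever $|\tilde F|\lesssim\delta^{q}$, and you justify this by ``Cauchy estimates in the $s$- and $x_k$-directions on discs of $\delta$-independent radius.'' Real-analyticity of $F$ on $\Omega_0$ gives no lower bound on the radius of convergence (or of holomorphic extension) at points approaching the diagonal: the elliptic equation satisfied by $F$ has coefficients that blow up like $(x_{i+1}-x_i)^{-1}$ and $(x_{i+1}-x_i)^{-2}$, so analytic-regularity estimates only give an analyticity radius of order $\delta$ at $\boldsymbol{x}_\delta$, and a Cauchy estimate then costs a factor $\delta^{-1}$ per transverse derivative, which destroys the bootstrap. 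Similarly, in your last step the equation you propose for $W=\delta^{-p_1}\tilde F$ still contains $\partial_\delta^2$ and a $\delta^{-1}\partial_\delta$ term, so ``interior elliptic estimates \ldots up to $\delta=0$'' are not available; interior estimates never reach the degenerate boundary $\delta=0$, and treating the $\delta$-derivatives as inhomogeneities begs the question.

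The paper closes exactly this gap by a specific construction you would need to supply: sum the null-state PDEs centered on the spectator coordinates $x_j$, $j\notin\{i,i+1\}$, and use the first two conformal Ward identities (\ref{w1}, \ref{w2}) to eliminate $\partial_x H$ and $\delta\partial_\delta H$ algebraically. The result (\ref{prePDE}) is a strictly elliptic PDE in the $2N-2$ spectator variables only, with $x$ and $\delta$ appearing as parameters and with coefficients that remain bounded as $\delta\downarrow0$. The Schauder interior estimate (Cor.\ 6.3 of \cite{giltru}), applied on nested transverse domains $U_{n+1}\subset\subset U_n$ uniformly in the parameter $\delta$, then controls first and second transverse derivatives of $H$ (and, via the Ward identities, $\partial_x H$ and $\delta\partial_\delta H$) by $\sup|H|$ on a slightly larger transverse set. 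This is what legitimately replaces your Cauchy estimate in the bootstrap (there the gain is one power of $\delta$ per iteration, since the paper's $\mathcal{M}$ contains $\partial_x/\delta$), and it simultaneously yields the bounds on $\partial_jF$ and $\partial_j\partial_kF$ with the same power $\delta^{1-6/\kappa}$, so no separate degenerate-boundary elliptic theory is needed. Without this (or an equivalent uniform transverse estimate), your argument does not go through.
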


\begin{proof} For each point $\boldsymbol{x}_\delta\in\Omega_0$, we let $x:=x_i$, we relabel the coordinates in $\{x_j\}_{j\neq i,i+1}$ as $\{\xi_1,\xi_2,\ldots,\xi_{2N-2}\}$ in increasing order, and we let $\boldsymbol{\xi}:=(\xi_1,\xi_2,\ldots,\xi_{2N-2})$.  With this new notation, we define
\be\label{tilde} {\rm F}(\boldsymbol{\xi};x,\delta):=F(\xi_1,\xi_2,\ldots,\xi_{i-1},x,x+\delta,\xi_i,\ldots,\xi_{2N-2}).\ee
Finally, we choose an arbitrary compact subset $\mathcal{K}\subset\pi_{i+1}(\Omega_0)$, and with $p>0$ determined by (\ref{powerlaw}), we choose bounded open sets $ U_0,$ $ U_1,\ldots, U_m$, with $m:=\lceil q\rceil$ and $q:=p+1-6/\kappa$, such that they are sequentially compactly embedded:
\be\label{compactembedd}\mathcal{K}\subset\subset U_m\subset\subset U_{m-1}\subset\subset\ldots\subset\subset U_0\subset\subset\pi_{i+1}(\Omega_0).\ee
(We choose $p$ large enough so $m>2$.)  Our goal is to prove that the quantities in (\ref{SNconditions}) are $O(\delta^{1-6/\kappa})$ as $\delta\downarrow0$.

To begin, we write the null-state PDE (\ref{nullstate}) centered on $x_{i+1}$ as $\mathcal{L}[ {\rm F}]=\mathcal{M}[ {\rm F}]$, where $\mathcal{L}[{\rm F}]$ contains all of the terms that are seemingly largest when $\delta\downarrow0$.  With $\partial_j$ referring to a derivative with respect to $\xi_j$, we have
\bea\label{firsti+1}\mathcal{L}[{\rm F}](\boldsymbol{\xi};x,\delta)&:=&\left[\frac{\kappa}{4}\partial_\delta^2+\frac{\partial_\delta}{\delta}-\frac{(6-\kappa)/2\kappa}{\delta^2}\right]{\rm F}(\boldsymbol{\xi};x,\delta),\\
\label{secondi+1}\mathcal{M}[{\rm F}](\boldsymbol{\xi};x,\delta)&:=&\Bigg[\frac{\partial_x}{\delta}+\sum_{j=1}^{2N-2}\left(\frac{(6-\kappa)/2\kappa}{(\xi_j-x-\delta)^2}-\frac{\partial_j}{\xi_j-x-\delta}\right)\Bigg]{\rm F}(\boldsymbol{\xi};x,\delta).\eea
Thinking of $\delta$ as a time variable propagating backwards from an initial condition at some $0<b<\inf_\mathcal{K}(\xi_i-x)$ to zero, we invert the Euler differential operator $\mathcal{L}$ with a Green function that satisfies the adjoint problem 
\be\label{adjointL}\mathcal{L}^*[G](\delta,\eta):=\left[\frac{\kappa}{4}\partial_\eta^2-\frac{\partial_\eta}{\eta}-\frac{(6-\kappa)/2\kappa-1}{\eta^2}\right]G(\delta,\eta)=D(\eta-\delta),\quad G(\delta,0)=\partial_\eta G(\delta,0)=0,\ee
with $D$ the Dirac delta function, and with $\delta>0$ a parameter.  In order to satisfy (\ref{adjointL}), $G$ must be continuous at $\eta=\delta$, and $\partial_{\eta}G$ must have a jump discontinuity of $4/\kappa$ at $\eta=\delta$.  The unique solution to this initial value problem is
\be\label{greenfunc}G(\delta,\eta)=\frac{4\eta}{8-\kappa}\Theta(\eta-\delta)\left[\left(\frac{\delta}{\eta}\right)^{1-6/\kappa}-\hspace{.2cm}\left(\frac{\delta}{\eta}\right)^{2/\kappa}\right]\ee
for all $0<\delta,\eta<b$.  The Heaviside step function $\Theta$ enforces causality (i.e., $G(\delta,\eta)=0$ when $\eta\leq\delta$).  Using the usual Green identity \cite{folland}, we find that 
\be\label{secondline} {\rm F}(\boldsymbol{\xi};x,\delta)=\sideset{}{_\delta^b}\int G(\delta,\eta)\mathcal{M}[ {\rm F}](\boldsymbol{\xi};x,\eta)\,{\rm d}\eta-\frac{\kappa}{4}\left[G(\delta,b)\,\partial_\delta {\rm F}(\boldsymbol{\xi};x,b)-\partial_\eta G(\delta,b) {\rm F}(\boldsymbol{\xi};x,b)\right]-\frac{1}{b}G(\delta,b) {\rm F}(\boldsymbol{\xi};x,b)\ee
for all $0<\delta<b$.  All terms on the right side except the definite integral are manifestly $O(\delta^{1-6/\kappa})$ as $\delta\downarrow0$, so we only need to bound the former. After estimating the coefficients in the integrand and estimating $G$ for $\kappa<8$, we find
\be\label{abstildeF}| {\rm F}(\boldsymbol{\xi};x,\delta)|\leq \frac{4}{8-\kappa}\sideset{}{_\delta^b}\int \left(\frac{\delta}{\eta}\right)^{1-6/\kappa}|\eta\mathcal{M}[{\rm F}](\boldsymbol{\xi};x,\eta)|\,{\rm d}\eta+O(\delta^{1-6/\kappa})\ee
for all $0<\delta<b$.  It is then natural to define 
\be\label{FtoH}H(\boldsymbol{\xi};x,\delta):=\delta^{6/\kappa-1} {\rm F}(\boldsymbol{\xi};x,\delta),\ee
so proving the lemma amounts to showing that the supremums $|H(\boldsymbol{\xi};x,\delta)|,$ $|\partial_jH(\boldsymbol{\xi};x,\delta)|$, and $|\partial_j\partial_kH(\boldsymbol{\xi};x,\delta)|$ over $\mathcal{K}$ are bounded functions of $\delta\in(0,b)$.  After writing (\ref{abstildeF}) in terms of $H(\boldsymbol{\xi};x,\delta)$, we find
\be\label{estH}\sup_{U_n}|H(\boldsymbol{\xi};x,\delta)|\leq c_n+\frac{4}{8-\kappa}\sideset{}{_\delta^b}\int\sup_{U_n}|\eta\mathcal{M}[H](\boldsymbol{\xi};x,\eta)|\,{\rm d}\eta\ee
for all $0<\delta<b$ and some positive constants $c_n$.

Next, we bound terms in the integrand of (\ref{estH}) that contain derivatives of $H$.  For $j\not\in\{i,i+1\}$, the null-state PDE centered on $x_j$ becomes (now for $j\in\{1,2,\ldots,2N-2\}$)
\begin{multline}\label{notii+1}\Bigg[\frac{\kappa}{4}\partial_j^2+\sum_{k\neq j}\left(\frac{\partial_k}{\xi_k-\xi_j}-\frac{(6-\kappa)/2\kappa}{(\xi_k-\xi_j)^2}\right)+\frac{\partial_x}{x-\xi_j}-\frac{\delta\partial_\delta}{(x-\xi_j)(x+\delta-\xi_j)}\\
-\frac{(6-\kappa)/2\kappa}{(x-\xi_j)^2}-\frac{(6-\kappa)/2\kappa}{(x+\delta-\xi_j)^2}+\frac{6/\kappa-1}{(x-\xi_j)(x+\delta-\xi_j)}\Bigg] H(\boldsymbol{\xi};x,\delta)=0, \end{multline}
while that centered on $x_i$ becomes
\be\label{ith}\Biggl[\frac{\kappa}{4}(\partial_x-\partial_\delta)^2+\frac{\partial_\delta}{\delta}+\frac{(6-\kappa)(\partial_x-\partial_\delta)}{2\delta}+\sum_k\left(\frac{\partial_k}{\xi_k-x}-\frac{(6-\kappa)/2\kappa}{(\xi_k-x)^2}\right)\Biggl] H(\boldsymbol{\xi};x,\delta)=0,\ee
and that centered on $x_{i+1}$ becomes
\be\label{i+1th}\Bigg[\frac{\kappa}{4}\partial_\delta^2-\frac{(\partial_x-\partial_\delta)}{\delta}-\frac{(6-\kappa)\partial_\delta}{2\delta}+\sum_k\left(\frac{\partial_k}{\xi_k-x-\delta}-\frac{(6-\kappa)/2\kappa}{(\xi_k-x-\delta)^2}\right)\Bigg] H(\boldsymbol{\xi};x,\delta)=0.\ee
Also, the three conformal Ward identities (\ref{wardid}) become
\begin{align}\label{w1}&\bigg[\sideset{}{_k}\sum\partial_k+\partial_x\bigg]H(\boldsymbol{\xi};x,\delta)=0,\\
\label{w2}&\bigg[\sideset{}{_k}\sum(\xi_k\partial_k+(6-\kappa)/2\kappa)+x\partial_x+\delta\partial_\delta\bigg]H(\boldsymbol{\xi};x,\delta)=0,\\
\label{w3}&\bigg[\sideset{}{_k}\sum(\xi_k^2\partial_k+(6-\kappa)\xi_k/\kappa)+x^2\partial_x +(2x+\delta)\delta\partial_\delta\bigg]H(\boldsymbol{\xi};x,\delta)=0.\end{align}
Summing (\ref{notii+1}) over $j\in\{1,2,\ldots,2N-2\}$ and using (\ref{w1}, \ref{w2}) to isolate $\partial_xH$ and $\delta\partial_\delta H$ in terms of $H$ and its other derivatives, we find that $H$ obeys a strictly elliptic linear PDE in the coordinates of $\boldsymbol{\xi}$ and with $x$ and $\delta$ as parameters:
\begin{multline}\label{prePDE}\sum_j\Bigg[\frac{\kappa}{4}\partial_j^2+\sum_{k\neq j}\left(\frac{\partial_k}{\xi_k-\xi_j}-\frac{(6-\kappa)/2\kappa}{(\xi_k-\xi_j)^2}\right)-\sum_k\frac{\partial_k}{x-\xi_j}-\sum_k\frac{(x-\xi_k)\partial_k}{(x-\xi_j)(x+\delta-\xi_j)}\\
+\frac{(N-1)(6/\kappa-1)}{(x-\xi_j)(x+\delta-\xi_j)}-\frac{(6-\kappa)/2\kappa}{(x-\xi_j)^2}-\frac{(6-\kappa)/2\kappa}{(x+\delta-\xi_j)^2}+\frac{6/\kappa-1}{(x-\xi_j)(x+\delta-\xi_j)}\Bigg]H(\boldsymbol{\xi};x,\delta)=0.\end{multline}
The coefficients of this PDE are bounded over $U_n$ and do not blow up or vanish as $\delta\downarrow0$.  Therefore, the Schauder interior estimate (Cor.\ 6.3 of \cite{giltru}) implies that with $d_n:=\text{dist}(\partial U_n,\partial U_{n-1})$, the inequality
\be\label{Schauder}d_{n+1}\sup_{ U_{n+1}}|\partial^\varpi H(\boldsymbol{\xi};x,\delta;y)|\leq C_n\sup_{ U_n}|H(\boldsymbol{\xi};x,\delta;y)|\quad\begin{array}{l}\\ \end{array}\ee
holds for all $0<\delta<b$, where $C_n$ is some constant and $\varpi$ is any multi-index for the coordinates of $\boldsymbol{\xi}$ with length $|\varpi|\leq2$.  The conformal Ward identities (\ref{w1}, \ref{w2}) imply that $\varpi$ may involve $x$ and the derivatives $\delta\partial_\delta$ too. That is,
\be\label{multiindex}\partial^\varpi\in\left\{\begin{array}{c}\partial_j,\quad \partial_x,\quad \delta\partial_\delta, \quad\partial_j^2,\quad\partial_x^2,\\ (\delta\partial_\delta)^2,\quad\partial_j\partial_k, \quad \partial_j\partial_x,\quad\partial_j\delta\partial_\delta,\quad\partial_x\delta\partial_\delta \end{array}\right\}.\ee

Equations (\ref{powerlaw}) and (\ref{FtoH}) imply that both sides of (\ref{Schauder}) with $n=0$ are $O(\delta^{-q})$ as $\delta\downarrow0$, where $q:=p+1-6/\kappa$.  Now we improve these bounds.  After inserting (\ref{Schauder}) with $n=0$ into (\ref{secondi+1}, \ref{estH}) with $n=1$ and integrating, we find that
\be\label{thefirstest}\sup_{U_1}|H(\boldsymbol{\xi};x,\delta)|=O(\delta^{-q+1})\quad\Longrightarrow\quad\sup_{U_2}|\partial^\varpi H(\boldsymbol{\xi};x,\delta)|=O(\delta^{-q+1}),\ee
with the right estimate of (\ref{thefirstest}) following from (\ref{Schauder}).  After repeating this process another $m-2$ times, we find that the left side of (\ref{estH}) with $n=m-1$ is $O(\delta^{-q+m-1})$ as $\delta\downarrow0$.  Repeating one last time finally gives
\be\label{supU}\sup_{ U_m}|H(\boldsymbol{\xi};x,\delta)|=O(1)\quad\text{as $\delta\downarrow0$}\quad\Longrightarrow\quad\sup_{\mathcal{K}}|\partial^\varpi H(\boldsymbol{\xi};x,\delta)|=O(1)\quad\text{as $\delta\downarrow0$},\ee
thanks to the compact embedding $\mathcal{K}\subset\subset U_m$ (\ref{compactembedd}).  Equation (\ref{supU}) with (\ref{tilde}, \ref{FtoH}, \ref{multiindex}) then implies (\ref{SNconditions}).
\end{proof}

The proof of lemma \ref{boundedlem} never uses the null-state PDE centered on $x_i$ (\ref{nullstate}) or the third conformal Ward identity (\ref{wardid}).  Therefore, the lemma is true for not just $F\in\mathcal{S}_N$ but also for $F:\Omega_0\rightarrow\mathbb{R}$ satisfying only the other $2N-1$ null-state PDEs of (\ref{nullstate}), the first two conformal Ward identities of (\ref{wardid}), and the power-law bound (\ref{powerlaw}).  In spite of their  omission here, we use both of these PDEs in the proofs of some lemmas below in this article and in \cite{florkleb2}.

Having proven lemma \ref{boundedlem}, we prove next what the analysis that precedes that lemma suggests: the limit of $(x_{i+1}-x_i)^{6/\kappa-1}F(\boldsymbol{x})$ as $x_{i+1}\rightarrow x_i$ exists and is independent of $x_i$.  If this limit is not (resp.\ is) zero, then in CFT parlance, we say that the identity operator appears (resp.\ does not appear) in the OPE of $\psi_1(x_{i+1})$ with $\psi_1(x_i)$.  (Unlike the proof of lemma \ref{boundedlem}, we use the null-state PDE centered on $x_i$ to prove this lemma.)
\begin{lem}\label{limitlem}Suppose that $\kappa\in(0,8)$ and $F\in\mathcal{S}_N$, and let $\boldsymbol{x}_\delta$ be defined as in (\ref{xdelta}).  Then for all $j\not\in\{i,i+1\}$, the limits
\be\label{Fprimed}\lim_{\delta\downarrow0}\delta^{6/\kappa-1}F(\boldsymbol{x}_\delta),\quad\lim_{\delta\downarrow0}\delta^{6/\kappa-1}\partial_jF(\boldsymbol{x}_\delta),\quad\lim_{\delta\downarrow0}\delta^{6/\kappa-1}\partial_j^2F(\boldsymbol{x}_\delta)\ee
exist and are approached uniformly over every compact subset of $\pi_{i+1}(\Omega_0)$.  Furthermore, the limit of $\delta^{6/\kappa-1}F(\boldsymbol{x}_\delta)$ as $\delta\downarrow0$ does not depend on $x_i$.
\end{lem}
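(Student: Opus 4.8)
The plan is to bootstrap from Lemma \ref{boundedlem}, which already tells us that $H(\boldsymbol{\xi};x,\delta):=\delta^{6/\kappa-1}F(\boldsymbol{x}_\delta)$ and its $\boldsymbol{\xi}$-derivatives up to second order are bounded as $\delta\downarrow0$ on compact subsets of $\pi_{i+1}(\Omega_0)$. What remains is to upgrade ``bounded'' to ``convergent.'' First I would return to the integral equation (\ref{secondline}), or rather its $H$-version obtained after multiplying by $\delta^{6/\kappa-1}$, and examine the definite integral $\int_\delta^b G(\delta,\eta)\mathcal{M}[{\rm F}](\boldsymbol{\xi};x,\eta)\,d\eta$ more carefully now that we control the growth of ${\rm F}$ and its derivatives. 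Since $\mathcal{M}[{\rm F}]$ involves $\partial_x{\rm F}/\delta$ and $\boldsymbol{\xi}$-derivatives of ${\rm F}$ with bounded coefficients, and since $G(\delta,\eta)\sim (4\eta/(8-\kappa))[(\delta/\eta)^{1-6/\kappa}-(\delta/\eta)^{2/\kappa}]$, the factor $\delta^{6/\kappa-1}G(\delta,\eta)$ tends to $(4\eta/(8-\kappa))\eta^{6/\kappa-1}$ pointwise in $\eta$ as $\delta\downarrow0$, with a dominating integrable bound uniform on $\mathcal{K}$. Dominated convergence then shows the limit of $\delta^{6/\kappa-1}{\rm F}$ exists, uniformly on $\mathcal{K}$, and the boundary terms $\delta^{6/\kappa-1}G(\delta,b)$ etc.\ converge as well.

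Next I would handle the derivatives. The cleanest route is to differentiate the integral equation (or equivalently apply the Schauder interior estimate (\ref{Schauder})) to the \emph{difference} $H(\boldsymbol{\xi};x,\delta_1)-H(\boldsymbol{\xi};x,\delta_2)$: this difference satisfies the same elliptic PDE (\ref{prePDE}) up to terms that are small when $\delta_1,\delta_2$ are both small (the coefficients of (\ref{prePDE}) depend continuously on $\delta$ and have no singularity at $\delta=0$), so a Cauchy-criterion argument combined with the interior estimate propagates uniform convergence of $H$ on $U_m$ to uniform convergence of $\partial^\varpi H$ on $\mathcal{K}$ for $|\varpi|\le 2$. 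This gives existence of all three limits in (\ref{Fprimed}), approached uniformly on compacta. (One must be slightly careful that the interior estimate is being applied to a family indexed by $\delta$ with uniformly controlled coefficients, but that is exactly the setup already used in the proof of Lemma \ref{boundedlem}.)

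Finally, the $x_i$-independence of $\lim_{\delta\downarrow0}\delta^{6/\kappa-1}F(\boldsymbol{x}_\delta)$ is where the null-state PDE centered on $x_i$ enters, exactly as flagged in the paragraph preceding the lemma. Recall the heuristic: for the identity channel $-p=p_1=1-6/\kappa$, subtracting the order-$(-p+1)$ relations (\ref{pfirst}) and (\ref{psecond}) forces $\partial_i F_0=0$. To make this rigorous, I would take the PDE (\ref{ith}) (the null-state PDE centered on $x_i$, rewritten in the $H$ variables), pass to the limit $\delta\downarrow0$ using the convergence of $H$ and its derivatives just established, and observe that the terms carrying explicit inverse powers of $\delta$ — namely $\partial_\delta H/\delta$ and $(6-\kappa)(\partial_x-\partial_\delta)H/(2\delta)$ — must have limits, which forces the surviving relation to reduce to $\partial_x H_0 = 0$ where $H_0(\boldsymbol{\xi};x):=\lim_{\delta\downarrow0}H(\boldsymbol{\xi};x,\delta)$; here one uses $\kappa\neq 4$ (the case $\kappa=4$ being set aside, per the excerpt) to divide out the relevant coefficient. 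I expect the main obstacle to be the bookkeeping around those $1/\delta$-singular terms in (\ref{ith}): one needs to know not merely that $H\to H_0$ but that $\delta\partial_\delta H\to 0$ and that $\partial_x H$, $\partial_\delta H$ each converge, and extract from the PDE that the $O(\delta^{-1})$ part vanishes identically in the limit — essentially reproving the $F_1=0$ and $\partial_i F_0=0$ facts from the Frobenius heuristic, but now justified by the uniform limits rather than by formally matching powers. Everything else (dominated convergence, Schauder, Cauchy criterion) is routine given Lemma \ref{boundedlem}.
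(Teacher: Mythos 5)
Your first two steps are essentially sound and stay close to the paper's own argument: existence and uniformity of the limits of $H(\boldsymbol{\xi};x,\delta)=\delta^{6/\kappa-1}{\rm F}(\boldsymbol{\xi};x,\delta)$ and of its $\boldsymbol{\xi}$-derivatives rest on the same integral equation (\ref{secondline})/(\ref{tildeH}) and the Schauder estimate (\ref{Schauder}). Your dominated-convergence reading of the integral equation with $b$ fixed (note $\delta^{6/\kappa-1}G(\delta,\eta)\mathcal{M}[{\rm F}]=J(\delta,\eta)\mathcal{M}[H]$ with $J$ as in (\ref{Jgreenfunc}), and $\eta\mathcal{M}[H]$ bounded by the proof of lemma \ref{boundedlem}, which controls $\partial_xH$ through (\ref{multiindex})) is an acceptable, if anything slightly cleaner, substitute for the paper's estimates (\ref{supdiff})--(\ref{suppartial}); likewise your Cauchy-difference-plus-Schauder treatment of the derivatives can replace the paper's differentiation of (\ref{tildeH}) and its use of (\ref{notii+1}), (\ref{w1}), (\ref{w2}) to isolate $\partial_j^2H$.

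The gap is in the last step, the $x_i$-independence. You propose to pass to the limit in (\ref{ith}) and argue that ``the $O(\delta^{-1})$ part vanishes identically in the limit,'' assuming that $\partial_\delta H$ and $\partial_xH$ converge and that the remaining terms are under control. Neither assumption is available. Lemma \ref{boundedlem} and its proof bound only $\delta\partial_\delta H$, $(\delta\partial_\delta)^2H$, $\partial_x\delta\partial_\delta H$, etc., so the term $\tfrac{\kappa}{4}(\partial_x-\partial_\delta)^2H$ in (\ref{ith}) is a priori as large as $\delta^{-2}$ and cannot be separated from the $1/\delta$ bracket. Worse, $\partial_\delta H$ does not in general converge: the subleading (two-leg) behavior contributes a piece of order $\delta^{8/\kappa-1}$ to $H$, hence of order $\delta^{8/\kappa-2}$ to $\partial_\delta H$, which diverges for all $\kappa\in(4,8)$; the corresponding divergent contributions of this piece to $\partial_\delta^2H$ and to the $1/\delta$ terms of (\ref{ith}) cancel identically (this is just the indicial equation for the power $p_2$ in (\ref{p})), so the $1/\delta$ coefficient of (\ref{ith}) carries no usable information about $\partial_xH$ by itself. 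The missing idea is the paper's: subtract (\ref{i+1th}) from (\ref{ith}), which cancels every $\partial_\delta^2$ term and every bare $\partial_\delta/\delta$ term and leaves the first-order equation (\ref{diffpde}) for $\partial_xH$, whose right side carries explicit factors of $\delta$; integrating it along the characteristics $(x,\delta)\mapsto(x-t,\delta+2t)$ with integrating factor $(\delta+2t)^{1-8/\kappa}$ yields $\partial_xH=O(\delta^{8/\kappa-1})+O(\delta^2)\to0$, and then (\ref{w1}) together with the uniform convergence of the $\partial_kH$ gives $\partial_xH(\boldsymbol{\xi};x,0)=0$, as you anticipated. Finally, your appeal to $\kappa\neq4$ is a symptom of following the $F_1=0$ part of the Frobenius heuristic, which is not what the lemma needs: the statement (and the paper's proof) covers $\kappa=4$ as well, the only modification there being a harmless logarithm in the estimate (\ref{gotozero}).
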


\begin{proof} With $H,\boldsymbol{\xi},x,\delta$, and $b$ defined in the proof of lemma \ref{boundedlem}, we first show that $H(\boldsymbol{\xi};x,\delta)$ has a limit as $\delta\downarrow0$.  Because $H(\boldsymbol{\xi};x,\delta)$ is bounded on $0<\delta<b$, we must only show that its superior and inferior limits are equal.  Now, we find
\be\label{tildeH} H(\boldsymbol{\xi};x,\delta)=H(\boldsymbol{\xi};x,b)-\frac{\kappa}{4}J(\delta,b)\partial_\delta H(\boldsymbol{\xi};x,b)+\sideset{}{_\delta^b}\int J(\delta,\eta)\mathcal{M}[H](\boldsymbol{\xi};x,\eta)\,{\rm d}\eta,\ee
for all $0<\delta<b$ from substituting (\ref{FtoH}) into (\ref{secondline}), where we define $\mathcal{M}$ in (\ref{secondi+1}), and where $J(\delta,\eta)$ is the Green function (\ref{greenfunc}) multiplied by $(\delta/\eta)^{6/\kappa-1}$:
\be\label{Jgreenfunc}J(\delta,\eta)=\frac{4\eta}{8-\kappa}\Theta(\eta-\delta)\left[1-\left(\frac{\delta}{\eta}\right)^{8/\kappa-1}\right].\ee
According to lemma \ref{boundedlem}, $\eta\mathcal{M}[H](\boldsymbol{\xi};x,\eta)$ is bounded on $0<\eta<b$, and because $\kappa<8$, the bracketed factor of (\ref{Jgreenfunc}) is less than one.  Thus, after estimating (\ref{tildeH}) and taking the supremum over this range, we find that
\be\label{supdiff}\sup_{0<\delta<b}| H(\boldsymbol{\xi};x,\delta)- H(\boldsymbol{\xi};x,b)|\leq\frac{\kappa}{8-\kappa}|b\,\partial_\delta H(\boldsymbol{\xi};x,b)|+\frac{4b}{8-\kappa}\sup_{0<\eta<b}|\eta\mathcal{M}[H](\boldsymbol{\xi};x,\eta)|.\ee
Next, we show that $b\hspace{.03cm}\partial_\delta H(\boldsymbol{\xi};x,b)$ vanishes as $b\downarrow0$.  After differentiating (\ref{tildeH}) with respect to $\delta$ (because $H$ is analytic, we may differentiate under the integral sign), we find that for all $0<\delta<b$,
\be\delta\label{partialyH}\partial_\delta H(\boldsymbol{\xi};x,\delta)=\left(\frac{\delta}{b}\right)^{8/\kappa-1}b\hspace{.03cm}\partial_\delta H(\boldsymbol{\xi};x,b)-\frac{4}{\kappa}\sideset{}{_\delta^b}\int\left(\frac{\delta}{\eta}\right)^{8/\kappa-1}\eta\mathcal{M}[H](\boldsymbol{\xi};x,\eta)\,{\rm d}\eta.\ee
Then after taking the magnitude of both sides of (\ref{partialyH}), estimating the magnitude of $\eta\mathcal{M}[H](\boldsymbol{\xi};x,\eta)$ by its supremum over $0<\delta<b$, and integrating, we find an upper bound of $|\delta\partial_\delta H(\boldsymbol{\xi};x,\delta)|$ that vanishes as $\delta\downarrow0$:
\bea\label{gotozero}&&|\delta\partial_\delta H(\boldsymbol{\xi};x,\delta)|\leq\left(\frac{\delta}{b}\right)^{8/\kappa-1}|b\hspace{.03cm}\partial_\delta H(\boldsymbol{\xi};x,b)|+\frac{4b}{\kappa}\sup_{0<\eta<b}|\eta\mathcal{M}[H](\boldsymbol{\xi};x,\eta)|\left(\frac{(\delta/b)^{8/\kappa-1}+\delta/b}{|2-8/\kappa|}\right)\\
\label{limepsilon}\Longrightarrow\quad&&|\delta\partial_\delta H(\boldsymbol{\xi};x,\delta)|\xrightarrow[\delta\downarrow0]{}0.\eea 
(If $\kappa=4$, then $(\delta/b)\log(b/\delta)$ replaces the last factor in (\ref{gotozero}), and (\ref{limepsilon}) follows again.)  Hence, both sides of (\ref{supdiff}) vanish as $b\downarrow0$, the inferior and superior limits of $H(\boldsymbol{\xi};x,\delta)$ as $\delta\downarrow0$ are equal, and this desired limit indeed exists: 
\be H(\boldsymbol{\xi};x,0):=\lim_{\delta\downarrow0}H(\boldsymbol{\xi};x,\delta).\ee

By replacing $\delta$ with zero and then $b$ with $\delta$ in (\ref{tildeH}) and taking the supremum over a compact subset $\mathcal{K}$ of $\pi_{i+1}(\Omega_0)$, we find that for all $0<\delta<b$,
\be\label{suppartial}\sup_\mathcal{K}|H(\boldsymbol{\xi};x,\delta)-H(\boldsymbol{\xi};x,0)|\leq\frac{\kappa}{8-\kappa} \sup_\mathcal{K}|\delta\partial_\delta H(\boldsymbol{\xi};x,\delta)|+\frac{4}{8-\kappa}\sideset{}{_0^\delta}\int\sup_\mathcal{K}|\eta\mathcal{M}[H](\boldsymbol{\xi};x,\eta)|\,{\rm d}\eta.\ee
Lemma \ref{boundedlem} implies that the integrand of (\ref{suppartial}) is bounded over $0<\eta<\delta$, so the definite integral vanishes as $\delta\downarrow0$.  Next, after taking the supremum of (\ref{gotozero}) over $\mathcal{K}$ and using lemma \ref{boundedlem} again, we see that the supremum of $|\delta\partial_\delta H(\boldsymbol{\xi};x,\delta)|$ over $\mathcal{K}$ vanishes as $\delta\downarrow0$.  Hence, the left side of (\ref{suppartial}) vanishes as $\delta\downarrow0$, proving that the limit $H(\boldsymbol{\xi};x,0)$ is approached uniformly over $\mathcal{K}$. Because $\mathcal{K}$ may be any compact subset of $\pi_{i+1}(\Omega_0)\subset\partial\Omega_0$, it follows that $H$ extends continuously to $\Omega_0\cup\pi_{i+1}(\Omega_0)$ if we naturally define $H(\boldsymbol{\xi},x):=H(\boldsymbol{\xi};x,0)$ for all $(\boldsymbol{\xi},x)\in\pi_{i+1}(\Omega_0)$.

We recycle these arguments to show that for all $j\in\{1,2,\ldots,2N-2\}$, $\partial_j H(\boldsymbol{\xi};x,\delta)$ and $\partial_j^2H(\boldsymbol{\xi};x,\delta)$ approach limits as $\delta\downarrow0$ uniformly over compact subsets of $\pi_{i+1}(\Omega_0)$.  By taking the $j$th partial derivative of (\ref{tildeH}) (because $H$ is analytic in $\Omega_0$, we may exchange the order of integration and differentiation), we find an equation similar to (\ref{tildeH}) but with a few changes.  First, the supremum of the new integrand over compact subsets of $\pi_{i+1}(\Omega_0)$ is also bounded on $0<\eta<b$, according to lemma \ref{boundedlem}.  Second, $\delta\partial_\delta H(\boldsymbol{\xi};x,\delta)$ in the first term of (\ref{tildeH}) is replaced with $\delta\partial_\delta\partial_j H(\boldsymbol{\xi};x,\delta)$.  By taking the $j$th partial derivative of (\ref{partialyH}) and following the reasoning that led to (\ref{gotozero}, \ref{limepsilon}), we find that the supremum of $\delta\partial_\delta\partial_j H(\boldsymbol{\xi};x,\delta)$ over compact subsets of $\pi_{i+1}(\Omega_0)$ vanishes as $\delta\downarrow0$ too.  Hence, we may reuse all of the reasoning presented above to show that $\partial_j H(\boldsymbol{\xi};x,\delta)$ approaches a limit as $\delta\downarrow0$ uniformly over compact subsets of $\pi_{i+1}(\Omega_0)$.  Finally, we use the null-state PDE centered on $x_j$ (\ref{notii+1}) and (\ref{w1}, \ref{w2}) to isolate $\partial_j^2 H(\boldsymbol{\xi};x,\delta)$ in terms of lower-order derivatives $\partial_k H(\boldsymbol{\xi};x,\delta)$ and $H(\boldsymbol{\xi};x,\delta)$ and thus prove that the $j$th second derivative of $H$ approaches a limit as $\delta\downarrow0$ uniformly over compact subsets of $\pi_{i+1}(\Omega_0)$.  Thus, we have proven that the limits in (\ref{Fprimed}) exist and are approached uniformly over compact subsets of $\pi_{i+1}(\Omega_0)$.

Finally, to prove that the limit $H(\boldsymbol{\xi};x):=H(\boldsymbol{\xi};x,0)$ does not depend on $x$, it suffices to show that $\partial_xH(\boldsymbol{\xi};x,\delta)$ vanishes as $\delta\downarrow0$.  Indeed, if this is true, then by sending $\delta\downarrow0$ in (\ref{w1}) and using the uniformness of the limits in (\ref{Fprimed}) to commute the derivative with respect to $x$ with this limit, we find that $\partial_xH(\boldsymbol{\xi};x)=0$ too.  To prove that $H(\boldsymbol{\xi};x,\delta)$ vanishes as $\delta\downarrow0$, we subtract (\ref{i+1th}) from (\ref{ith}) to find the following PDE:
\be\label{diffpde}\left[\frac{\kappa}{4}\partial_x-\frac{\kappa}{2}\partial_\delta+\frac{8-\kappa}{2\delta}\right]\partial_x H(\boldsymbol{\xi};x,\delta)=
\sum_k\left[\frac{\delta\partial_k}{(\xi_k-x)(\xi_k-x-\delta)}+\frac{[\delta+2(x-\xi_k)]\delta(6-\kappa)/2\kappa}{(\xi_k-x)^2(\xi_k-x-\delta)^2}\right] H(\boldsymbol{\xi};x,\delta).\ee
We choose a positive $a<\min\{x-\xi_{i-1},(b-\delta)/2\}$, and we let $Z(t):=\partial_x H(\boldsymbol{\xi};x-t,\delta+2t)$ for $t\in[0,a]$.  Upon evaluating (\ref{diffpde}) at $(\boldsymbol{\xi};x,\delta)\mapsto(\boldsymbol{\xi},x-t,\delta+2t)$ and multiplying both sides by $-4(\delta+2t)^{1-8/\kappa}/\kappa$, we find
\begin{multline}\frac{{\rm d}}{{\rm d}t}\left[(\delta+2t)^{1-8/\kappa}Z(t)\right]=-\frac{4}{\kappa}(\delta+2t)^{2-8/\kappa}\sum_k\Bigg[\frac{\partial_k}{(\xi_k-x+t)(\xi_k-x-\delta-t)}\\
+\frac{[\delta+2(x-\xi_k)](6-\kappa)/2\kappa}{(\xi_k-x+t)^2(\xi_k-x-\delta-t)^2}\Bigg]H(\boldsymbol{\xi};x-t,\delta+2t).\end{multline}
Integrating both sides with respect to $t$ from 0 to $a$, we have
\begin{multline}\partial_x H(\boldsymbol{\xi};x,\delta)=\left(\frac{\delta}{\delta+2a}\right)^{8/\kappa-1}\partial_x H(\boldsymbol{\xi};x-a,\delta+2a)+\frac{4}{\kappa}\delta^{8/\kappa-1}\\
\times\sideset{}{_0^a}\int {\rm d}t\,(\delta+2t)^{2-8/\kappa}\sum_k\Bigg[\frac{\partial_k}{(\xi_k-x+t)(\xi_k-x-\delta-t)}+\frac{[\delta+2(x-\xi_k)](6-\kappa)/2\kappa}{(\xi_k-x+t)^2(\xi_k-x-\delta-t)^2}\Bigg]H(\boldsymbol{\xi};x-t,\delta+2t).\end{multline}
Because the sum inside of the integrand is bounded over $(t,\delta)\in[0,a]\times[0,b]$, we have that for some positive-valued function $\Phi(\boldsymbol{\xi},x,a,b)$,
\be|\partial_x H(\boldsymbol{\xi};x,\delta)|\leq\left(\frac{\delta}{\delta+2a}\right)^{8/\kappa-1}|\partial_x H(\boldsymbol{\xi};x-a,\delta+2a)|+\Phi(\boldsymbol{\xi},x,a,b)\delta^{8/\kappa-1}\sideset{}{_0^a}\int {\rm d}t\,(\delta+2t)^{2-8/\kappa},\ee
from which it immediately follows that
\be\label{twoterms}\partial_x H(\boldsymbol{\xi};x,\delta)=O(\delta^{8/\kappa-1})+O(\delta^2)\rightarrow0\quad\text{as $\delta\downarrow0$}.\ee
Thus, the limit of $H(\boldsymbol{\xi};x,\delta)$ as $\delta\downarrow0$ does not depend on $x$.\end{proof}

The last equation (\ref{twoterms}) of the preceding proof may be interpreted in CFT parlance as the two possible fusion channels (i.e., indicial powers for the Frobenius series in (\ref{0fuse}, \ref{2fuse})).  Indeed, if we divide (\ref{twoterms}) by $\delta^{6/\kappa-1}$ to revert from $H$ to $F$ (\ref{FtoH}), then the first term in (\ref{twoterms}) corresponds to the two-leg fusion channel because $-2\theta_1+\theta_2=2/\kappa=p_2$ (\ref{p}).  On the other hand, the power $-2\theta_1+\theta_0+2=3-6/\kappa$ of the second term is not $-2\theta_1+\theta_0=1-6/\kappa=p_1$ (\ref{p}), but it is still indicative of the identity channel for the following reason.  Just before stating lemma \ref{boundedlem}, we show that if $F$ admits a Frobenius series expansion (\ref{F0}--\ref{F2}) with leading power $-p=1-6/\kappa$, then the function $F_0$ in the leading term (\ref{F0}) is independent of $x_i$ and the function $F_1$ in the following term (\ref{F1}) is zero.  Thus, the leading term (\ref{F2}) in $\partial_iF$ has the indicial power $3-6/\kappa$.

In the discussion preceding lemma \ref{boundedlem}, the CFT interpretation of $F\in\mathcal{S}_N$ as a $2N$-point function of one-leg boundary operators (\ref{onelegcorr}) suggests that the limit of $(x_{i+1}-x_i)^{6/\kappa-1}F(\boldsymbol{x})$ as $x_{i+1}\rightarrow x_i$, if it is not zero, is a $(2N-2)$-point function of one-leg boundary operators.  This motivates the following lemma.  (In its proof, we employ a variant of the inversion map $x_i\mapsto-1/x_i$.  The third conformal Ward identity of (\ref{wardid}) derives in part from the covariance of $F\in\mathcal{S}_N$ under this map.  Here, we use  this identity for the first time in this article.)

\begin{lem}\label{SN-1lem}Suppose that $\kappa\in(0,8)$ and $F\in\mathcal{S}_N$.  Then for all $i\in\{1,2,\ldots,2N-1\}$, the function $F_0:\pi_{i,i+1}(\Omega_0)\rightarrow\mathbb{R}$ defined by 
\be\label{thefirstlim}(F_0\circ\pi_{i,i+1})(\boldsymbol{x}):=\lim_{x_i\rightarrow x_{i-1}}\,\,\,\lim_{x_{i+1}\rightarrow x_i}(x_{i+1}-x_i)^{6/\kappa-1}F(\boldsymbol{x})\ee
(the limit sending $x_i\rightarrow x_{i-1}$ is used only if $i>1$ and is trivial according to lemma \ref{limitlem}), is an element of $\mathcal{S}_{N-1}$, and the function $F_0:\pi_{1,2N}(\Omega_0)\rightarrow\mathbb{R}$ defined by
\be\label{thesecondlim}(F_0\circ\pi_{1,2N})(\boldsymbol{x}):=\lim_{R\rightarrow\infty}(2R)^{6/\kappa-1}F(x_1=-R,x_2,\ldots,x_{2N}=R)\ee
is also an element of $\mathcal{S}_{N-1}$.
\end{lem}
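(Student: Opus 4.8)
The plan is to treat the two displayed functions separately, reducing the second to the first by conformal covariance.

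\emph{The function (\ref{thefirstlim}).} The entire content here is that one may pass to the limit $\delta\downarrow0$ inside the PDEs. Adopt the notation $H(\boldsymbol{\xi};x,\delta)=\delta^{6/\kappa-1}F(\boldsymbol{x}_\delta)$ of the proof of lemma \ref{boundedlem}, so that $(F_0\circ\pi_{i,i+1})(\boldsymbol{\xi})=\lim_{\delta\downarrow0}H(\boldsymbol{\xi};x,\delta)$ by lemma \ref{limitlem}, this limit being independent of $x$ and approached uniformly, together with its first and second $\boldsymbol{\xi}$-derivatives, on compact subsets of $\pi_{i+1}(\Omega_0)$; moreover $\partial_xH\to0$ and $\delta\partial_\delta H\to0$ by the same lemma. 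For each $j\in\{1,\dots,2N-2\}$ I would let $\delta\downarrow0$ in the rewritten null-state PDE (\ref{notii+1}) centered on $x_j$: the terms $\partial_xH/(x-\xi_j)$ and $\delta\partial_\delta H/[(x-\xi_j)(x+\delta-\xi_j)]$ vanish, while the three remaining $x$-dependent terms combine into $(x-\xi_j)^{-2}[-(6-\kappa)/2\kappa-(6-\kappa)/2\kappa+(6-\kappa)/\kappa]F_0=0$, so in the limit $F_0$ satisfies the null-state PDE (\ref{nullstate}) centered on $\xi_j$ with $N$ replaced by $N-1$. Likewise, letting $\delta\downarrow0$ in (\ref{w1})--(\ref{w3}), every term carrying an explicit $\partial_x$ or $\delta\partial_\delta$ drops out and one reads off the three conformal Ward identities (\ref{wardid}) for $F_0$ with $N\rightarrow N-1$. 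Since the limit is $x$-independent, $F_0$ descends to a genuine function on $\pi_{i,i+1}(\Omega_0)$ (for $i>1$ the additional limit $x_i\rightarrow x_{i-1}$ in (\ref{thefirstlim}) merely relabels $x$; for $i=1$ there is no such limit), and the system it obeys does not involve $x$ at all.

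\emph{The power-law bound for (\ref{thefirstlim}).} This is the part I expect to be the main obstacle. Boundedness of $F_0$ on \emph{compact} subsets of $\pi_{i,i+1}(\Omega_0)$ is immediate from lemma \ref{boundedlem}. To obtain the global estimate (\ref{powerlaw}) (with $N\to N-1$ and with possibly enlarged constants $C,p$) I would return to the integral identity (\ref{tildeH}) from the proof of lemma \ref{limitlem}, namely $H(\boldsymbol{\xi};x,0)=H(\boldsymbol{\xi};x,b)-\tfrac{\kappa}{4}J(0,b)\partial_\delta H(\boldsymbol{\xi};x,b)+\int_0^bJ(0,\eta)\mathcal{M}[H](\boldsymbol{\xi};x,\eta)\,d\eta$, and estimate it for a target configuration $\boldsymbol{\xi}$ approaching a diagonal of $\Omega_0$ (the $(N-1)$-variable one) or infinity. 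The key freedom is that the auxiliary variable $x$ is unconstrained (since $F_0$ does not depend on it): choosing $x$ in the middle of the gap $(\xi_{i-1},\xi_i)$ when $i>1$, or a fixed distance to the left of $\xi_1$ when $i=1$, and the initial time $b$ a fixed fraction of that gap, makes the ``initial data'' $H(\boldsymbol{\xi};x,b)=b^{6/\kappa-1}F(\,\cdot\,)$ and the coefficients of $\mathcal{M}$ bounded by products of powers of the $\boldsymbol{\xi}$-differences via the power-law bound (\ref{powerlaw}) for $F$, while the derivative terms are controlled through the Schauder estimate (\ref{Schauder}), whose constant degrades only polynomially as the nested neighborhoods thin. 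Collecting these estimates yields (\ref{powerlaw}) for $F_0$, so $F_0\in\mathcal{S}_{N-1}$.

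\emph{The function (\ref{thesecondlim}).} I would deduce this case from the first by conformal covariance. Fix $c\in\mathbb{R}$ larger than $x_{2N-1}$ and set $g(y):=-1/(y-c)$, a M\"obius automorphism of the upper half-plane with $g'(y)=(y-c)^{-2}>0$. For $R>c$ one has $g(-R)=1/(R+c)\to0^+$, $g(R)=-1/(R-c)\to0^-$, and $g(x_j)\in(0,\infty)$ for $2\le j\le2N-1$, so the images lie in the component $\Omega_\sigma$ of $\Omega$ with ordering $g(R)<g(-R)<g(x_2)<\dots<g(x_{2N-1})$, and $g(-R)-g(R)=2R/(R^2-c^2)=:\delta_R\to0$. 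Let $G$ denote $\hat F$ restricted to $\Omega_\sigma$ and relabeled so that its arguments $u_1<u_2<\dots<u_{2N}$ are $g(R),g(-R),g(x_2),\dots,g(x_{2N-1})$ in order; by the discussion around (\ref{hatF}), $G$ solves the system (\ref{nullstate}, \ref{wardid}), and transferring the bound (\ref{powerlaw}) for $F$ through the covariance relation (\ref{transform}) (the M\"obius map introduces only the Jacobian factors $|x_k-c|^{2\theta_1}$ and rewrites each $|x_b-x_a|$, so power-law bounds go to power-law bounds) shows $G\in\mathcal{S}_N$. Applying the already-proven first part of this lemma to $G$ with $i=1$, the limit $\Phi:=\lim_{u_2\downarrow u_1}(u_2-u_1)^{6/\kappa-1}G$ lies in $\mathcal{S}_{N-1}$, is independent of $u_1$, and (lemma \ref{limitlem}) is approached uniformly for $u_1$ in a compact interval about $0$; hence $\delta_R^{6/\kappa-1}G\to\Phi(g(x_2),\dots,g(x_{2N-1}))$ as $R\to\infty$. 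Now (\ref{transform}) gives $F(-R,x_2,\dots,R)=(R+c)^{2\theta_1}(R-c)^{2\theta_1}\prod_{j=2}^{2N-1}(c-x_j)^{2\theta_1}\,G$, and since $6/\kappa-1=2\theta_1$ and $(2R)^{2\theta_1}[(R+c)(R-c)]^{-2\theta_1}=\delta_R^{2\theta_1}$, we get $(2R)^{6/\kappa-1}F(-R,x_2,\dots,R)=\prod_{j=2}^{2N-1}(c-x_j)^{-2\theta_1}\,\delta_R^{6/\kappa-1}G$, which tends to $\prod_{j}(c-x_j)^{-2\theta_1}\Phi(g(x_2),\dots,g(x_{2N-1}))$. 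Applying (\ref{transform}) to $\Phi\in\mathcal{S}_{N-1}$ under the same map $g$ gives $\Phi(g(x_2),\dots,g(x_{2N-1}))=\prod_{j=2}^{2N-1}(c-x_j)^{2\theta_1}\Phi(x_2,\dots,x_{2N-1})$, so the two prefactors cancel and $(F_0\circ\pi_{1,2N})(\boldsymbol{x})=\Phi(x_2,\dots,x_{2N-1})$; thus $F_0=\Phi\in\mathcal{S}_{N-1}$ (the result is independent of $c$, since $\Phi$ is real-analytic and coincides with the $c$-free limit (\ref{thesecondlim}) on the open set $\{x_{2N-1}<c\}$, hence everywhere). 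The chief difficulty throughout is the bookkeeping for the power-law estimates: establishing (\ref{powerlaw}) for $F_0$ in the first part, and checking that it survives the M\"obius change of variables in the second.
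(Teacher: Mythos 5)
Your argument follows the paper's proof in all essentials: you pass to the limit $\delta\downarrow0$ in the rewritten null-state PDEs and Ward identities using the uniform convergence of lemma \ref{limitlem} (together with $\partial_x H\to0$ and $\delta\partial_\delta H\to0$), and you reduce the point-at-infinity case to the finite case via a M\"obius map and the covariance (\ref{transform}) — the paper uses the map (\ref{fmapping}) where you use $y\mapsto-1/(y-c)$, an immaterial difference, and the paper simply asserts the power-law bound for $F_0$ as evident, so your extra sketch there is harmless. The only blemish is the sign slip in your intermediate display for $F(-R,x_2,\ldots,R)$ (the Jacobian exponents should be $-2\theta_1$), which does not affect the correct cancellation you perform in the following line.
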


\begin{proof}
We define $H,\boldsymbol{\xi},x,\delta$, and $H(\boldsymbol{\xi};x,0)$ as in the proof of lemma \ref{boundedlem}.  To begin, we prove that the first limit (\ref{thefirstlim}) is in $\mathcal{S}_{N-1}$.  By sending $\delta\downarrow0$ in (\ref{notii+1}) and (\ref{w1}--\ref{w3}) and using the limits (\ref{limepsilon}, \ref{twoterms}), we find equations almost identical to the $2N-2$ null-state PDEs in the coordinates of $\boldsymbol{\xi}$,
\be\label{prenullstate}\Bigg[\frac{\kappa}{4}\lim_{\delta\downarrow0}\partial_j^2+\sum_{k\neq j}\left(\frac{1}{\xi_k-\xi_j}\lim_{\delta\downarrow0}\partial_k-\frac{(6-\kappa)/2\kappa}{(\xi_k-\xi_j)^2}\lim_{\delta\downarrow0}\right)\Bigg]H(\boldsymbol{\xi};x,\delta)=0,\ee
and the three conformal Ward identities also in the coordinates of $\boldsymbol{\xi}$:
\be\begin{gathered}\label{prewardid}\sum_k\lim_{\delta\downarrow0}\partial_k H(\boldsymbol{\xi};x,\delta)=0,\quad\sum_k\left[\xi_k\lim_{\delta\downarrow0}\partial_k+\bigg(\frac{6-\kappa}{2\kappa}\bigg)\lim_{\delta\downarrow0}\right]H(\boldsymbol{\xi};x,\delta)=0,\\
\sum_k\left[\xi_k^2\lim_{\delta\downarrow0}\partial_k+\bigg(\frac{6-\kappa}{\kappa}\bigg)\xi_k\lim_{\delta\downarrow0}\right]H(\boldsymbol{\xi};x,\delta)=0.\end{gathered}\ee
According to lemma \ref{limitlem}, $H(\boldsymbol{\xi};x,\delta)$ and each of its first and second derivatives approach their limits as $\delta\downarrow0$ uniformly over compact subsets of $\pi_{i+1}(\Omega_0)$.  Hence, we may commute each limit with each differentiation in (\ref{prenullstate}, \ref{prewardid}) to find that the limit $H(\boldsymbol{\xi};x,0)$ satisfies the $(2N-2)$ null-state PDEs (\ref{nullstate}) and the three conformal Ward identities (\ref{wardid}) in the coordinates of $\boldsymbol{\xi}$.  It is also evident that this limit satisfies a power-law bound of the type (\ref{powerlaw}) in the coordinates of $\boldsymbol{\xi}$.  Thus, it is an element of $\mathcal{S}_{N-1}$.

Finally, we prove (\ref{thesecondlim}) by using (\ref{transform}), which we now write as
\be\label{transformch2}F(\boldsymbol{x})=|\partial f(x_1)|^{(6-\kappa)/2\kappa}|\partial f(x_2)|^{(6-\kappa)/2\kappa}\dotsm |\partial f(x_{2N})|^{(6-\kappa)/2\kappa}\hat{F}(\boldsymbol{x}')\ee
with $\hat{F}$ defined in (\ref{hatF}), $x':=f(x)$, $\boldsymbol{x}':=(x_1',x_2',\ldots,x_{2N}')$, and for our present purposes, with $f$ the M\"{o}bius transformation
\be\label{fmapping}f(x)=\frac{(x_{2N-1}+1-x_{2N-2})(x-x_{2N})}{(x_{2N}-x_{2N-2})(x-x_{2N-1}-1)}.\ee
If $x_{2N-1}+1<x_{2N}$, which we may assume because we are sending $x_{2N}=R\rightarrow\infty$ in (\ref{thesecondlim}), then this transformation (\ref{fmapping}) cyclically permutes the coordinates of $\boldsymbol{x}$ rightward along the real axis so $x_{2N}'=0<x_1'<x_2'<\ldots<x_{2N-2}'=1<x_{2N-1}'$.   (We note that $\boldsymbol{x}'$ is not in $\Omega_0$ because $x_{2N}'<x_1'$.)  From (\ref{transformch2}), we find
\begin{multline}\label{asympsides}(2R)^{6/\kappa-1}F(-R,x_2,x_3,\ldots,x_{2N-1},R)\underset{R\rightarrow\infty}{\sim}|\partial f(x_2)|^{(6-\kappa)/2\kappa}|\partial f(x_3)|^{(6-\kappa)/2\kappa}\dotsm\,\\
\dotsm\,|\partial f(x_{2N-1})|^{(6-\kappa)/2\kappa}(x_1'-x_{2N}')^{6/\kappa-1}\hat{F}(x_1',x_2',\ldots,x_{2N}').\end{multline}
In the primed coordinates, $x_1'\rightarrow x_{2N}'$ as $R\rightarrow\infty$.  Because $\hat{F}$ satisfies the system of PDEs (\ref{nullstate}, \ref{wardid}) and obeys the bound (\ref{powerlaw}) in the primed coordinates, we may invoke the result of the previous paragraph to conclude that
\begin{multline}\label{limrightside}\lim_{R\rightarrow\infty}(2R)^{6/\kappa-1}F(-R,x_2,x_3,\ldots,x_{2N-1},R)=|\partial f_0(x_2)|^{(6-\kappa)/2\kappa}|\partial f_0(x_3)|^{(6-\kappa)/2\kappa}\dotsm\,\\
\dotsm\,|\partial f_0(x_{2N-1})|^{(6-\kappa)/2\kappa}F_0(X_2',X_3',\ldots,X_{2N-1}'),\quad X_j':=\lim_{R\rightarrow\infty}x_j',\end{multline}
for some $F_0\in\mathcal{S}_{N-1}$.  Here, $f_0$ is the limit of (\ref{fmapping}) evaluated at $x_{2N}=R$ as $R\rightarrow\infty$, so $f_0(x_j)=X_j'$ for all $j\neq 1,2N$.   Transformation law (\ref{transformch2}) adapted to $F_0$ becomes the functional equation
\be\label{transformadapt}|\partial f_0(x_2)|^{(6-\kappa)/2\kappa}|\partial f_0(x_3)|^{(6-\kappa)/2\kappa}\dotsm\,|\partial f_0(x_{2N-1})|^{(6-\kappa)/2\kappa}F_0(X_2',X_3',\ldots,X_{2N-1}')=F_0(x_2,x_3,\ldots,x_{2N-1}).\ee
(We have removed the hat that would appear above $F_0$ in (\ref{limrightside}) and on the left side of (\ref{transformadapt}) because $X_2'<X_3'<\ldots<X_{2N-1}'$.)  Combining (\ref{limrightside}) and (\ref{transformadapt}), we find that (\ref{thesecondlim}) is an element of $\mathcal{S}_{N-1}$.
\end{proof}
For notational convenience, we omit explicit reference to the trivial limit $x_i\rightarrow x_{i-1}$ in (\ref{thefirstlim}) from now on.

\section{Construction of the dual space $\mathcal{S}_N^*$}\label{dualspace}

Inspired by the special property conveyed in lemma \ref{SN-1lem}, we construct elements of the dual space $\mathcal{S}_N^*$ as follows.  Starting with any element of $\mathcal{S}_N$, we take the limit (\ref{thefirstlim}) or (\ref{thesecondlim}) to find an element of $\mathcal{S}_{N-1}$.  Then we repeat this process $N-1$ more times until we ultimately arrive with an element of $\mathcal{S}_0=\mathbb{R}$.  This sequence of limits is a linear mapping $\mathscr{L}$ sending $\mathcal{S}_N$ into the real numbers and is thus an element of $\mathcal{S}_N^*$.  (In the sequel \cite{florkleb3}, we show that the set of all such mappings is a basis for $\mathcal{S}_N^*$, as the title of this section suggests.)

We must carefully order the $N$ limits of $\mathscr{L}$ in order for the action of this map on elements of $\mathcal{S}_N$ to be well-defined.  Intuitively, we best understand this restriction by drawing a disk and marking on its boundary $2N$ points $P=\{p_1,p_2,\ldots,p_{2N}\}$ in counterclockwise order and with $p_i$ corresponding to the $i$th coordinate of $\boldsymbol{x}\in\Omega_0$ (figure \ref{CircleCsls}).  Then the first limit of $\mathscr{L}$ may bring together any two points $p_i,p_j\in P$ not separated from each other within the disk boundary by other points of $P$. Next, the second limit of $\mathscr{L}$ may bring together any two points $p_k,p_l\in P\setminus\{p_i,p_j\}$ not separated from each other within the disk boundary by other points of $P\setminus\{p_i,p_j\}$, and so on.  We note that if the limits of $\mathscr{L}$ are ordered this way, then we can join the points of $P$ pairwise with $N$ non-intersecting arcs in the disk, where the endpoints of the $j$th arc are the two points in $P$ brought together by the $j$th limit of $\mathscr{L}$.  We call this diagram an \emph{interior arc connectivity diagram}, and we imagine that the $j$th limit of $\mathscr{L}$ contracts the $j$th arc of this diagram to a point (figure \ref{CircleCsls}).

\begin{figure}[t]
\centering
\includegraphics[scale=0.25]{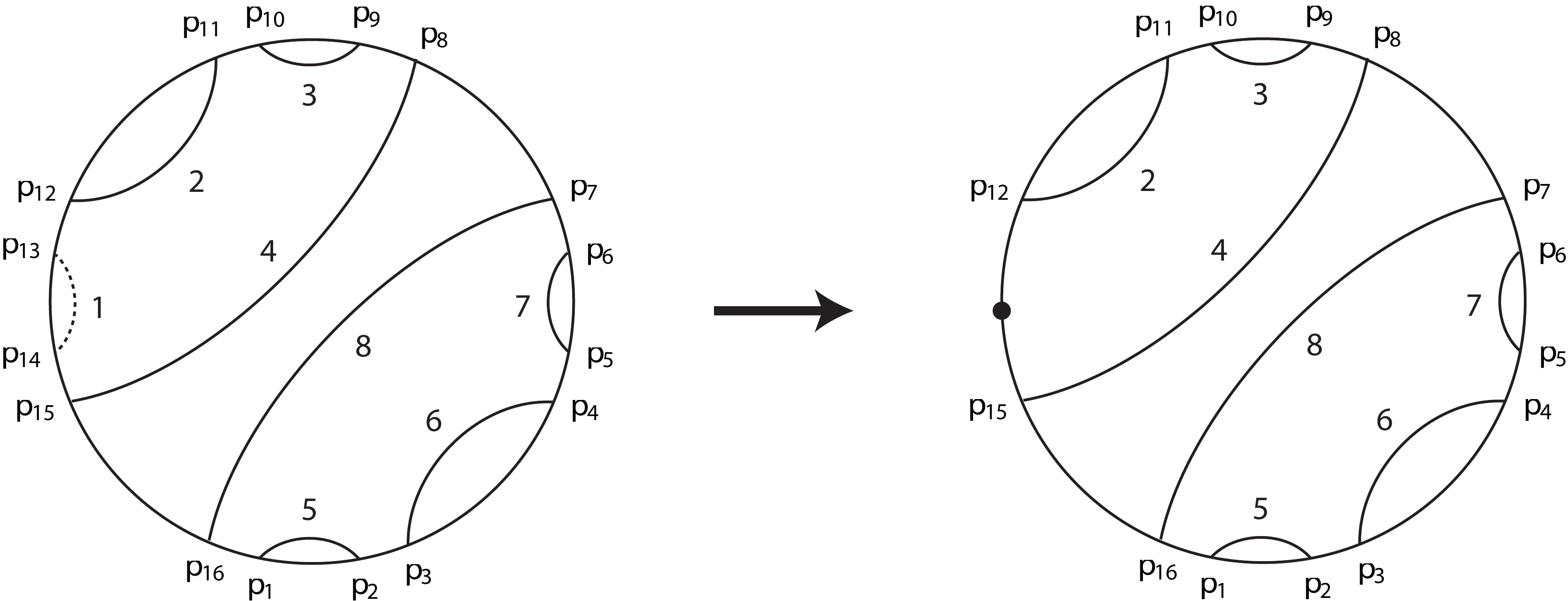}
\caption{An interior arc connectivity diagram for some $\mathscr{L}\in\mathcal{S}_8^*$.  The $j$th limit of $\mathscr{L}$ brings together the endpoints of the $j$th arc.  The figure illustrates the first limit, bringing together the endpoints of the first arc.}
\label{CircleCsls}
\end{figure}

Now two natural questions arise.  First, does an interior arc connectivity diagram always give rise to an element of $\mathcal{S}_N^*$?  And second, if two mappings $\mathscr{L},\mathscr{L}'\in\mathcal{S}_N^*$, constructed according to the previous paragraph, share an interior arc connectivity diagram (thus they are distinguished only by the ordering of their limits), then does $\mathscr{L}'F=\mathscr{L}F$ for all $F\in\mathcal{S}_N$?  In anticipation of an affirmative answer to both questions, we conveniently use the interior arc connectivity diagrams to formally state the restrictions on the ordering of the limits in $\mathscr{L}$.  Because the points brought together by the limits of $\mathscr{L}$ are in the real axis, we consider these diagrams in the upper half-plane first.

\begin{defn} With $x_1<x_2<\ldots<x_{2N}$ the coordinates of a point $\boldsymbol{x}\in\Omega_0$, an \emph{interior arc half-plane diagram on $x_1,$ $x_2,\ldots,x_{2N}$} is a collection of $N$ curves, called \emph{arcs}, in the closure of the upper half-plane such that
\begin{itemize}
\item each arc has its two endpoints among $x_1,$ $x_2,\ldots,x_{2N}$,
\item the two endpoints of each arc are distinct,
\item no point among $x_1,$ $x_2,\ldots,x_{2N}$ is an endpoint of two different arcs,
\item no two different arcs intersect each other, and no arc intersects itself.
\end{itemize}
\end{defn}
\noindent
There are $C_N$ topologically distinct interior arc half-plane diagrams on $x_1,$ $x_2,\ldots,x_{2N}$ \cite{fgg}, where $C_N$ is the $N$th Catalan number (\ref{catalan}).

\begin{defn} After we enumerate all $C_N$ of the interior arc half-plane diagrams on $x_1,$ $x_2,\ldots,x_{2N}$, we define the \emph{$\varsigma$th connectivity} to be the $\varsigma$th of these.  Furthermore, we define the \emph{$\varsigma$th interior arc polygon diagram} to be the image of the $\varsigma$th diagram under a continuous map that sends the upper half-plane (resp.\ real axis, resp.\ coordinates $x_1,$ $x_2,\ldots,x_{2N}$) onto the interior (resp.\ boundary, resp.\ vertices) of a $2N$-sided regular polygon.  We call either diagram the \emph{$\varsigma$th interior arc connectivity diagram}.
\end{defn}

\begin{notation}For any $a<b\in\mathbb{R}$, we write $(b,a)$ for the complement of $[a,b]$ in the one-point compactification of $\mathbb{R}$.\end{notation}

\begin{defn}\label{cslsdefn} With $x_1<x_2<\ldots<x_{2N}$ the coordinates of a point $\boldsymbol{x}\in\Omega_0$, we choose one of the $C_N$ available half-plane arc connectivity diagrams on $x_1,$ $x_2,\ldots,x_{2N}$, enumerate the $N$ arcs of this diagram in some arbitrary way, and let $x_{i_{2j-1}}<x_{i_{2j}}$ be the endpoints of the $j$th arc.  
\begin{enumerate}[I.]
\item We let $\pi$ be a projection (definition \ref{pidef}) that removes all coordinates of $\boldsymbol{x}\in\Omega_0$ that are in $(x_{i_{2j-1}},x_{i_{2j}})$ and some even number of coordinates of $\boldsymbol{x}$ that are in $(x_{i_{2j}},x_{i_{2j-1}})$.  If $\pi$ removes $2M$ coordinates with $M\in\{1,2,\ldots,N\}$, then for $\kappa\in(0,8)$ and $F\in\mathcal{S}_{N-M}$, we define $\bar{\ell}_j:\mathcal{S}_{N-M}\rightarrow\mathcal{S}_{N-M-1}$ by
\be \label{pionF1}(\bar{\ell}_jF\circ\pi_{i_{2j-1},i_{2j}}\circ\pi)(\boldsymbol{x})\,\,\,:=\lim_{x_{i_{2j}}\rightarrow x_{i_{2j-1}}}(x_{i_{2j}}-x_{i_{2j-1}})^{6/\kappa-1}(F\circ\pi)(\boldsymbol{x}).\ee
Lemma \ref{limitlem} guarantees that this limit exists and does not depend on $x_{i_{2j-1}}$, and lemma \ref{SN-1lem} guarantees that this limit is in $\mathcal{S}_{N-M-1}$.
\item Or we let $\pi$ be a projection that removes all coordinates of $\boldsymbol{x}\in\Omega_0$ that are in $(x_{i_{2j}},x_{i_{2j-1}})$ and some even number of coordinates of $\boldsymbol{x}$ that are in $(x_{i_{2j-1}},x_{i_{2j}})$.  If $\pi$ removes $2M$ coordinates with $M\in\{1,2,\ldots,N\}$, then for $\kappa\in(0,8)$ and $F\in\mathcal{S}_{N-M}$, we define $\underline{\ell}_j:\mathcal{S}_{N-M}\rightarrow\mathcal{S}_{N-M-1}$ by
\be\label{pionF2}(\underline{\ell}_jF\circ\pi_{i_{2j-1},i_{2j}}\circ\pi)(\boldsymbol{x})\,\,\,:=\lim_{R\rightarrow\infty}(2R)^{6/\kappa-1}(F\circ\pi)(\boldsymbol{x})\Big|_{(x_{i_{2j-1}},x_{i_{2j}})=(-R,R)}.\ee
Lemma \ref{limitlem} guarantees that this limit exists, and lemma \ref{SN-1lem} guarantees that this limit is in $\mathcal{S}_{N-M-1}$.  
\end{enumerate}
In either case, we say that \emph{$\ell_j$ collapses the interval $(x_{i_{2j-1}},x_{i_{2j}})$}, where $\ell_j$ is a generic label for either $\bar{\ell}_j$ or $\underline{\ell}_j$.  Next, we enumerate the labels $j\in\{1,2,\ldots,N\}$ of the $N$ arcs as $j_1$, $j_2,\ldots,j_N$ and in such a way that the following is true for each $k\in\{1,2,\ldots,N\}$:
\begin{enumerate}
\item\label{thefirstitem}$\displaystyle{\bigcup_{l>k}^N\{x_{i_{2j_l-1}},x_{i_{2j_l}}\}}\subset(x_{i_{2j_k-1}},x_{i_{2j_k}}),$ or
\item\label{theseconditem}$\displaystyle{\bigcup_{l>k}^N\{x_{i_{2j_l-1}},x_{i_{2j_l}}\}}\subset(x_{i_{2j_k}},x_{i_{2j_k-1}})$.
\end{enumerate}
Then for $\kappa\in(0,8)$ and $M\in\{1,2,\ldots, N\}$, we define 
\be\label{Ldefn}\mathscr{L}:\mathcal{S}_N\rightarrow\mathcal{S}_{N-M},\quad\mathscr{L}F:=\ell_{j_M}\ell_{j_{M-1}}\dotsm\ell_{j_2}\ell_{j_1}F,\ee
and we call $\mathscr{L}$ an \emph{allowable sequence of $M$ limits involving the coordinates $x_{i_{2j_1-1}}$, $x_{i_{2j_1}}$, $x_{i_{2j_2-1}}$, $x_{i_{2j_2}},\ldots,x_{i_{2j_{M}-1}}$, $x_{i_{2j_M}}$}.  In this definition (\ref{Ldefn}) for $\mathscr{L}$,
\begin{itemize}
\item if item \ref{thefirstitem} above is true for a particular $k\in\{1,2,\ldots,\min\{M,N-1\}\}$, then $\ell_{j_k}=\underline{\ell}_{j_k}$, 
\item if item \ref{theseconditem} above is true for a particular $k\in\{1,2,\ldots,\min\{M,N-1\}\}$, then $\ell_{j_k}=\bar{\ell}_{j_k}$.
\end{itemize}
If $M=N$, then the aforementioned rule does not specify whether $\ell_{j_N}=\bar{\ell}_{j_N}$ or $\ell_{j_N}=\underline{\ell}_{j_N}$.  But also if $M=N$, then $\ell_{j_N}$ acts on an element of $\mathcal{S}_1$.  According to (\ref{S1}), this element is a multiple of 
\be F(x_{i_{j_{2N-1}}},x_{i_{j_{2N}}})=(x_{i_{j_{2N}}}-\,\,x_{i_{j_{2N-1}}})^{1-6/\kappa},\ee
the image of which under $\bar{\ell}_{j_N}$ equals its image under $\underline{\ell}_{j_N}$.  Thus, we may take either $\ell_{j_N}=\bar{\ell}_{j_N}$ or $\ell_{j_N}=\underline{\ell}_{j_N}$ in $\mathscr{L}$.  Finally, we define the \emph{half-plane} (resp.\ \emph{polygon}) \emph{diagram for $\mathscr{L}$} to be the interior arc half-plane (resp.\ polygon) diagram used to construct $\mathscr{L}$ with the arcs whose endpoints are among 
\be\{x_{i_{2j_{M+1}-1}},x_{i_{2j_{M+1}}},x_{i_{2j_{M+2}-1}},x_{i_{2j_{M+2}}},\ldots,x_{i_{2j_N-1}},x_{i_{2j_N}}\}\ee
deleted if $M<N$.  We call either diagram the \emph{diagram for $\mathscr{L}$}.
\end{defn}
\noindent
The part of definition \ref{cslsdefn} that defines an allowable sequence of $M$ limits anticipates the following lemma.

\begin{figure}[b]
\centering
\includegraphics[scale=0.3]{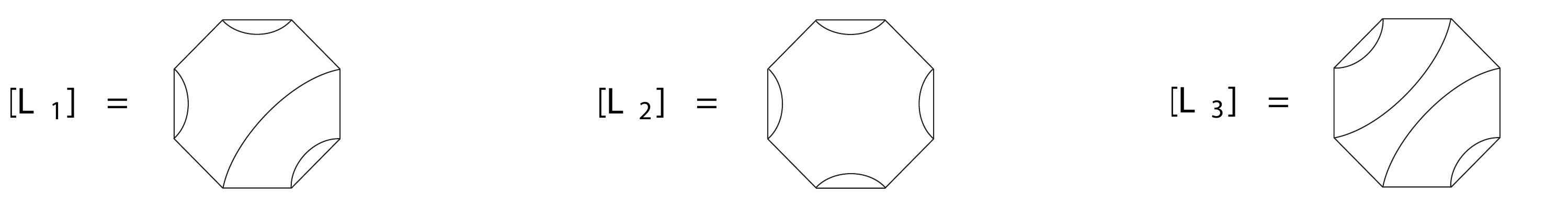}
\caption{Polygon diagrams for three different equivalence classes of allowable sequences of $N=4$ limits.  The other $C_4-3=11$ diagrams are found by rotating one of these three.}
\label{Csls}
\end{figure}

\begin{lem}\label{cslslem} Suppose that $\kappa\in(0,8)$ and $F\in\mathcal{S}_N$, and let $\mathscr{L}=\ell_{j_M}\ell_{j_{M-1}}\dotsm\ell_{j_2}\ell_{j_1}$ be an allowable sequence of $M\leq N$ limits.  Then the limit $\mathscr{L}F$ exists and is in $\mathcal{S}_{N-M}$.  In particular, if $M=N$, then the limit is a real number.
\end{lem}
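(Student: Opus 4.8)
The plan is to prove this by induction on $M$, using lemmas \ref{limitlem} and \ref{SN-1lem} as the single-limit building blocks and the nesting conditions \ref{thefirstitem}, \ref{theseconditem} of definition \ref{cslsdefn} to guarantee that each successive limit is an instance of one of those lemmas. Write $\mathscr{L}'$ for the truncation $\ell_{j_{M-1}}\dotsm\ell_{j_1}$ (interpreted as the identity when $M=1$), an allowable sequence of $M-1$ limits for the same enumeration $j_1,\ldots,j_N$. The inductive hypothesis (trivial when $M=1$) gives that $\mathscr{L}'F$ exists and lies in $\mathcal{S}_{N-M+1}$, and it is a function of the $2(N-M+1)$ coordinates attached to the endpoints of the still-uncollapsed arcs $j_M,j_{M+1},\ldots,j_N$. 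It then suffices to show that $\ell_{j_M}$ is well-defined on $\mathscr{L}'F$ and produces an element of $\mathcal{S}_{N-M}$.

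To do this I would first use the non-crossing property of the arcs: each already-collapsed arc among $j_1,\ldots,j_{M-1}$ lies entirely inside or entirely outside the interval $[x_{i_{2j_M-1}},x_{i_{2j_M}}]$, so the projection $\pi$ in the definition of $\ell_{j_M}$ is consistent with the composition, removing exactly the endpoints of $j_1,\ldots,j_{M-1}$. The nesting hypothesis for $k=M$ then pins down the geometry of the surviving coordinates. If condition \ref{thefirstitem} holds, the endpoints of $j_{M+1},\ldots,j_N$ all lie strictly between $x_{i_{2j_M-1}}$ and $x_{i_{2j_M}}$ — and these, together with the endpoints of $j_M$ itself, are all the surviving coordinates — so $x_{i_{2j_M-1}}$ and $x_{i_{2j_M}}$ are the two extreme surviving coordinates and $\ell_{j_M}=\underline{\ell}_{j_M}$ is precisely the limit (\ref{thesecondlim}) of lemma \ref{SN-1lem} applied to $\mathscr{L}'F\in\mathcal{S}_{N-M+1}$. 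If condition \ref{theseconditem} holds, no surviving coordinate lies strictly between $x_{i_{2j_M-1}}$ and $x_{i_{2j_M}}$, so these are adjacent and $\ell_{j_M}=\bar{\ell}_{j_M}$ is the limit (\ref{thefirstlim}), with its trivial outer limit suppressed. In either case lemmas \ref{limitlem} and \ref{SN-1lem} show that the limit exists, does not depend on the merging coordinate, and lies in $\mathcal{S}_{N-M}$; hence $\mathscr{L}F=\ell_{j_M}(\mathscr{L}'F)$ does too.

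When $M=N$ this gives $\mathscr{L}F\in\mathcal{S}_{N-N}=\mathcal{S}_0=\mathbb{R}$, a real number. In that case the rule leaves the choice of $\ell_{j_N}\in\{\bar{\ell}_{j_N},\underline{\ell}_{j_N}\}$ undetermined (both nesting conditions hold vacuously), but this is harmless: $\ell_{j_N}$ then acts on an element of $\mathcal{S}_1$, which by (\ref{S1}) is a multiple of $(x_{i_{2j_N}}-x_{i_{2j_N-1}})^{1-6/\kappa}$, and $\bar{\ell}_{j_N}$ and $\underline{\ell}_{j_N}$ agree on such functions.

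I expect the genuine analytic difficulty to be fully carried by lemmas \ref{limitlem} and \ref{SN-1lem}, so the only real work — and the main obstacle — is the combinatorial bookkeeping in the inductive step: confirming that at the $k$-th stage the two coordinates being brought together are adjacent among the survivors (for $\bar{\ell}_{j_k}$) or are the two extreme survivors (for $\underline{\ell}_{j_k}$), so that the abstract map $\ell_{j_k}$ is literally one of the two limits of lemma \ref{SN-1lem}. This requires arguing carefully from the non-crossing property of the arcs together with the nesting hypotheses \ref{thefirstitem}--\ref{theseconditem}, and keeping track of which coordinates have already been removed — and, for the $\underline{\ell}$ steps, of the M\"obius relabelings they introduce, which preserve adjacency and extremality along the real line.
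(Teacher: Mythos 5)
Your proposal is correct and follows essentially the same route as the paper: the analytic content is delegated to lemmas \ref{limitlem} and \ref{SN-1lem}, and the allowability conditions \ref{thefirstitem}--\ref{theseconditem} of definition \ref{cslsdefn} are used to check that at each stage the collapsed pair is either adjacent among the surviving coordinates (so $\bar{\ell}$ is the limit (\ref{thefirstlim})) or the two extreme survivors (so $\underline{\ell}$ is the limit (\ref{thesecondlim})). The only difference is organizational—you induct on $M$ by peeling off the last limit, while the paper iterates forward by applying $\ell_{j_1}$ first and observing that the remaining tail is again an allowable sequence acting on an element of $\mathcal{S}_{N-1}$—which amounts to the same argument.
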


\begin{proof} Items \ref{thefirstitem} and \ref{theseconditem} in definition \ref{cslsdefn} imply that the two coordinates $x_{i_{2j_1-1}}$ and $x_{i_{2j_1}}$ satisfy either of the following:
\begin{itemize}
\item $x_{i_{2j_1-1}}=x_i$ and $x_{i_{2j_1}}=x_{i+1}$ for some $i\in\{1,2,\ldots,2N-1\}$, and therefore $\ell_{j_1}=\bar{\ell}_{j_1}$, or
\item $x_{i_{2j_1-1}}=x_1$ and $x_{i_{2j_1}}=x_{2N}$, and therefore $\ell_{j_1}=\underline{\ell}_{j_1}$.
\end{itemize}
In either case, lemmas \ref{limitlem} and \ref{SN-1lem} guarantee that the limit $F_0:=\ell_{j_1}F$ exists and is in $\mathcal{S}_{N-1}$.  Furthermore, $\ell_{j_M}\ell_{j_{M-1}}\dotsm\ell_{j_2}$ is, according to definition \ref{cslsdefn}, an allowable sequence of limits involving the coordinates $x_{i_{2j_2-1}}$, $x_{i_{2j_2}}$, $x_{i_{2j_3-1}}$, $x_{i_{2j_3}}, \ldots, x_{i_{2j_M-1}}$, $x_{i_{2j_M}}$.  Therefore, items \ref{thefirstitem} and \ref{theseconditem} in definition \ref{cslsdefn} imply that the two coordinates $x_{i_{2j_2-1}}$ and $x_{i_{2j_2}}$ satisfy either of the following:
\begin{itemize}
\item item \ref{thefirstitem} in definition \ref{cslsdefn}, so $\ell_{j_1}=\bar{\ell}_{j_1}$ and there are no points among $x_{i_{2j_3-1}}$, $x_{i_{2j_3}}$,  $x_{i_{2j_4-1}}$,  $x_{i_{2j_4}} \ldots, x_{i_{2j_M-1}}$, $x_{i_{2j_M}}$ in $(x_{i_{2j_2-1}},x_{i_{2j_2}})$, or
\item item \ref{theseconditem}  in definition \ref{cslsdefn}, so $\ell_{j_1}=\underline{\ell}_{j_1}$ and there are no points among $x_{i_{2j_3-1}}$, $x_{i_{2j_3}}$,  $x_{i_{2j_4-1}}$,  $x_{i_{2j_4}} \ldots, x_{i_{2j_M-1}}$, $x_{i_{2j_M}}$ in $(x_{i_{2j_2}},x_{i_{2j_2-1}}).$
\end{itemize}
In either case, lemma \ref{limitlem} guarantees that the limit $\ell_{j_2}F_0=\ell_{j_2}\ell_{j_1}F$ exists, and lemma \ref{SN-1lem} guarantees that it is in $\mathcal{S}_{N-2}$.  We repeat this reasoning $M-2$ more times to prove the lemma.
\end{proof}

The vector space over the real numbers generated by the allowable sequences of limits acting on $\mathcal{S}_N$ has a natural partition into equivalence classes given by the following definition.  (See figure \ref{Csls} for an example.)
\begin{defn}\label{equivdefn}We say that two allowable sequences of $M\in\{1,2,\ldots,N\}$ limits $\mathscr{L}$ and $\mathscr{L}'$ involving the same coordinates of $\boldsymbol{x}\in\Omega_0$ are \emph{equivalent} if their diagrams are identical.  This defines an equivalence relation on the set of all allowable sequences of $M$ limits, and we represent the equivalence class containing $\mathscr{L}$ by $[\mathscr{L}]$.  If $M=N$, then we enumerate these equivalence classes so the arcs in the diagram for $[\mathscr{L}_\varsigma]$ join in the $\varsigma$th connectivity, and we let $\mathscr{B}^*_N:=\{[\mathscr{L}_1],[\mathscr{L}_2],\ldots,[\mathscr{L}_{C_N}]\}$.
\end{defn}
\noindent
Because there are exactly $C_N$ interior arc connectivity diagrams, with $C_N$ the $N$th Catalan number (\ref{catalan}), it immediately follows that the cardinality of $\mathscr{B}_N^*$ is $C_N$.

\begin{lem}\label{welldef} Suppose that $\kappa\in(0,8)$ and $F\in\mathcal{S}_N$, and let $[\mathscr{L}]$ be an equivalence class of allowable sequences of $M\in\{1,2,\ldots,N\}$ limits.  Then $[\mathscr{L}]F$ is well-defined in the sense that $\mathscr{L}'F=\mathscr{L}''F$ for all $\mathscr{L}',\mathscr{L}''\in[\mathscr{L}]$.
\end{lem}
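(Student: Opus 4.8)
The plan is to reduce the lemma, by induction on $M$, to the single statement that two ``first-applicable'' collapsing limits commute, and then to prove that commutation by a two-parameter elaboration of the analysis behind lemmas \ref{boundedlem} and \ref{limitlem}.

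Call an arc $j$ of the diagram $D$ of $\mathscr{L}$ a \emph{source} of $D$ if every other arc of $D$ lies on one side of $j$ in the disk, i.e.\ if $\ell_j$ is eligible to be the first limit of an allowable sequence for $D$ (equivalently, the two endpoints of $j$ are adjacent among $x_1,\dots,x_{2N}$ or are $x_1,x_{2N}$). Items \ref{thefirstitem} and \ref{theseconditem} of definition \ref{cslsdefn} say exactly that each allowable sequence for $D$ begins with $\ell_a$ for some source $a$ and that its remaining $M-1$ limits then form an allowable sequence for $D\setminus\{a\}$; and if $a\neq b$ are distinct sources of $D$, then $a$ remains a source of $D\setminus\{b\}$ and $b$ of $D\setminus\{a\}$, with unchanged type $\bar\ell$ or $\underline\ell$ (modulo the ambiguity of definition \ref{cslsdefn} when one arc remains). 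Now induct on $M$; the case $M=1$ is vacuous. For $M\geq2$, write $\mathscr{L}'F=R'(\ell_aF)$ and $\mathscr{L}''F=R''(\ell_bF)$, where $a,b$ are sources of $D$ and $R',R''$ are allowable sequences of $M-1$ limits for $D\setminus\{a\}$ and $D\setminus\{b\}$. If $a=b$, the induction hypothesis on $\mathcal{S}_{N-1}$ gives $R'(\ell_aF)=R''(\ell_aF)$. If $a\neq b$, the induction hypothesis on $\mathcal{S}_{N-1}$ lets us replace $R'$ by an allowable sequence for $D\setminus\{a\}$ beginning with $\ell_b$, and $R''$ by one beginning with $\ell_a$; applying it once more on $\mathcal{S}_{N-2}$ to identify the two length-$(M-2)$ tails (both allowable for $D\setminus\{a,b\}$) we obtain $\mathscr{L}'F=S(\ell_b\ell_aF)$ and $\mathscr{L}''F=S(\ell_a\ell_bF)$ for a common $S$. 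So everything reduces to showing $\ell_a\ell_bF=\ell_b\ell_aF$ whenever $a$ and $b$ are two distinct source arcs of a diagram.

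For distinct sources $a,b$ the two endpoint pairs are four distinct coordinates; the arcs are either nested or disjoint, and in either configuration $\ell_a$ and $\ell_b$ act on disjoint coordinate pairs, neither separating the other, and at most one of $\ell_a,\ell_b$ is of $\underline\ell$-type (two $\underline\ell$-sources would each have to span all remaining coordinates and hence be nested strictly inside one another, which is impossible). Conjugating by the M\"{o}bius transformation used in the proof of lemma \ref{SN-1lem}, whose covariance factors in (\ref{transform}) are smooth and nonvanishing near the relevant face of $\partial\Omega_0$ and therefore pass through both limits without effect, we may assume both limits are of $\bar\ell$-type: $\ell_aF=\lim_{\alpha\downarrow0}\alpha^{6/\kappa-1}F$ collapses an adjacent coordinate gap $\alpha$, and $\ell_bF=\lim_{\beta\downarrow0}\beta^{6/\kappa-1}F$ collapses another, disjoint, adjacent coordinate gap $\beta$. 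The plan is then to prove that the \emph{joint} limit $L:=\lim_{(\alpha,\beta)\to(0,0)}\alpha^{6/\kappa-1}\beta^{6/\kappa-1}F$ exists, uniformly over compact subsets of the remaining $2N-4$ coordinates. Granting this, the lemma follows by an elementary interchange: for each fixed small $\beta>0$, lemma \ref{limitlem} gives $\lim_{\alpha\downarrow0}\alpha^{6/\kappa-1}\beta^{6/\kappa-1}F=\beta^{6/\kappa-1}(\ell_aF)$, and letting $\alpha\downarrow0$ in $|\alpha^{6/\kappa-1}\beta^{6/\kappa-1}F-L|<\varepsilon$ (valid once $\alpha$ and $\beta$ are small) shows $|\beta^{6/\kappa-1}(\ell_aF)-L|\leq\varepsilon$ for all small $\beta$, so $\ell_b\ell_aF=\lim_{\beta\downarrow0}\beta^{6/\kappa-1}(\ell_aF)=L$; the symmetric computation gives $\ell_a\ell_bF=L$.

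The hard part, and where the real work lies, is establishing this joint limit, and the plan is to rerun the proofs of lemmas \ref{boundedlem} and \ref{limitlem} with \emph{two} backward ``time'' variables, $\alpha$ and $\beta$, instead of one. Writing $H:=\alpha^{6/\kappa-1}\beta^{6/\kappa-1}F$, the four null-state PDEs (\ref{nullstate}) centered on the four endpoints of arcs $a$ and $b$ become, after the change of variables, two Euler-type ODEs (one in $\alpha$, one in $\beta$) plus lower-order terms; inverting each with a Green function of the type (\ref{greenfunc}) — bounded because $\kappa<8$ — and applying Green's identity in $\alpha$ and in $\beta$ expresses $H$ through integrals whose kernels carry a factor less than one for $\kappa<8$. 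Summing the remaining $2N-4$ null-state PDEs and eliminating first derivatives via the conformal Ward identities (\ref{wardid}) yields a strictly elliptic PDE for $H$ in the remaining coordinates, with $\alpha$ and $\beta$ as parameters, so that the Schauder interior estimate (as in (\ref{Schauder})) applies; bootstrapping the power-law bound (\ref{powerlaw}) through a nested family of shrinking open sets improves it to $\sup|H|=O(1)$, with $O(1)$ first and second derivatives, as $(\alpha,\beta)\to(0,0)$; and a ``superior limit equals inferior limit'' argument as in the proof of lemma \ref{limitlem}, now carried out in both $\alpha$ and $\beta$, shows that $H$ converges, uniformly over compacts of the remaining coordinates. (When $2N-4=0$, i.e.\ the relevant solution space is $\mathcal{S}_2$, this degenerate case may instead be checked directly from the explicit solutions (\ref{S2}) and (\ref{S1}).) I expect the main obstacle to be exactly this two-parameter estimate: one cannot merely cite lemmas \ref{boundedlem} and \ref{limitlem} twice, since the relevant domains degenerate in the $\alpha$- and $\beta$-directions \emph{simultaneously} and the constants produced by a one-variable bootstrap blow up as the other gap shrinks, so the Green-function inversion and the Schauder bootstrap genuinely must be redone with both parameters present; a secondary nuisance is carrying the M\"{o}bius conjugation of lemma \ref{SN-1lem} through the $\underline\ell$-cases while verifying that its covariance factors do not disturb the interchange.
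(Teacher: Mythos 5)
Your overall route is sound, and your combinatorial reduction is genuinely different from (and in places cleaner than) the paper's. The paper inducts on the \emph{last} limit: it introduces the set $\mathcal{A}$ (\ref{Adef}) of arcs that some element of $[\mathscr{L}]$ collapses last, derives structural constraints on the diagram when two arcs both lie in $\mathcal{A}$ (no straddling arcs, placement of uninvolved coordinates, etc.), collapses all remaining arcs first via a sequence $\mathscr{L}_{\overline{m},\overline{n}}$, and for $M=N$ must patch the failure of those constraints with a chain of equalities through the nested straddling arcs. You instead induct on the \emph{first} limit: every element of $[\mathscr{L}]$ starts by collapsing a ``source'' arc (an adjacent pair or the outermost pair), the restriction of an allowable ordering to a sub-diagram is again allowable, so the whole statement reduces to commuting two source collapses, which are exactly the two configurations (disjoint adjacent pairs, or adjacent pair versus outermost pair, the latter removed by the M\"obius conjugation of lemma \ref{SN-1lem}) handled by the paper's $M=2$ base case, i.e.\ (\ref{precommute}). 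Your Moore--Osgood-style interchange (joint limit of $\alpha^{6/\kappa-1}\beta^{6/\kappa-1}F$ implies equality of the iterated limits, using lemma \ref{limitlem} at fixed $\beta$) is equivalent to the paper's formulation, which proves convergence in $\delta$ uniform over $0<\epsilon<b$ in (\ref{Isup}); and your handling of $N=2$ by the explicit solutions matches the paper.

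The gap is in the analytic core, at the step you describe as ``summing the remaining $2N-4$ null-state PDEs and eliminating first derivatives via the conformal Ward identities yields a strictly elliptic PDE for $H$ in the remaining coordinates, with $\alpha$ and $\beta$ as parameters.'' This is the one-gap recipe behind (\ref{prePDE}), and it does not close in the two-gap setting: after the change of variables there are \emph{four} derivative terms to remove ($\partial_x$, $\partial_y$, $\alpha\partial_\alpha$, $\beta\partial_\beta$) but only \emph{three} Ward identities (\ref{wardid}), so a PDE in the $2N-4$ spectator coordinates alone necessarily retains an unknown first-order term (say $\partial_xH$) and the Schauder estimate cannot be applied to it. The paper's proof resolves exactly this: it keeps $x$ and $y$ as independent variables, uses the second and third Ward identities only to express $\delta\partial_\delta I$ and $\epsilon\partial_\epsilon I$ as in (\ref{dpd}, \ref{epe}) (whose denominators stay bounded away from zero precisely because $x-y$ is bounded away from zero on compacta of $\pi_{i+1,j+1}(\Omega_0)$), builds second-order-in-$x$ and second-order-in-$y$ equations from differenced, $\delta$- and $\epsilon$-rescaled null-state PDEs at the collapsing points as in (\ref{Idiffpde}), and adds the $2N-4$ spectator null-state PDEs weighted by a large constant $c$, verifying uniform strict ellipticity of the combined operator (\ref{coefmatrix}) by Sylvester's criterion with $c$ chosen large. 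So while you correctly identify that the Green-function inversion and Schauder bootstrap must be redone with both parameters present, the specific closure you propose would fail as stated, and the missing ingredient — promoting $x,y$ to independent variables and the uniform-ellipticity construction — is the technical heart of the paper's proof of this lemma.
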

\begin{proof}If $N=1$ or 2, then we may prove the lemma by working directly with the elements of $\mathcal{S}_N$, all of which we know explicitly (\ref{S1}, \ref{S2}).  Therefore, we assume that $N>2$ throughout this proof.

\begin{figure}[b!]
\centering
\includegraphics[scale=0.27]{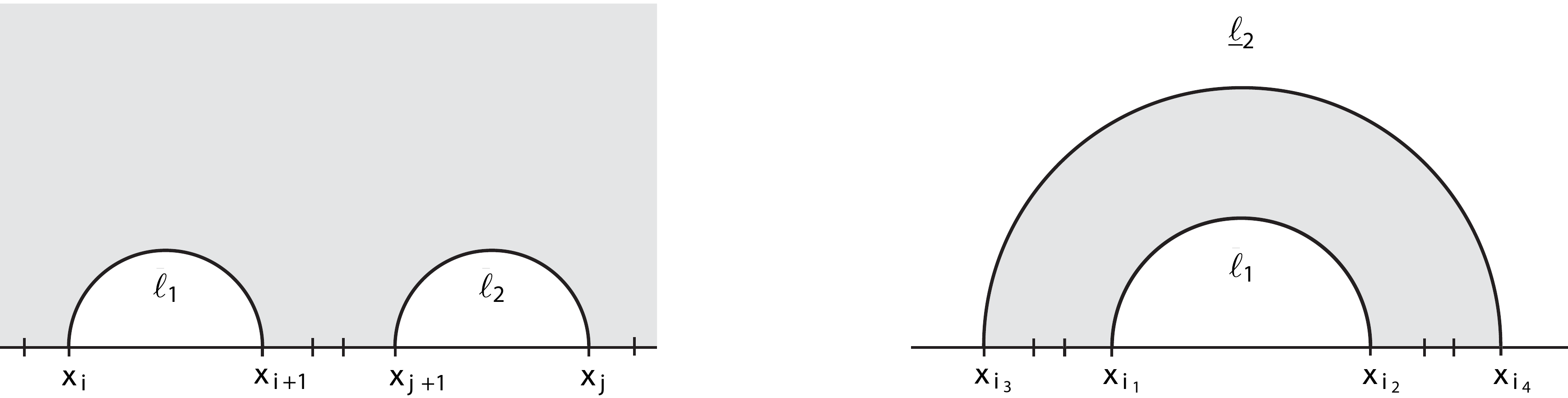}
\caption{The two possible half-plane diagrams for an equivalence class $[\mathscr{L}]$ with $M=2$ limits.  All other coordinates not involved in $[\mathscr{L}]$, indicated with ticks, lie on the part of the real axis touching the gray area.}
\label{NestUnnest}
\end{figure}

The proof is by induction on $M$.  To begin, we suppose that $M=2<N$, so $[\mathscr{L}]$ has at most two elements.  We further assume that $[\mathscr{L}]$ has exactly two elements (or else $[\mathscr{L}]$ would have one element, so there would be nothing to prove) and the arcs in the half-plane diagram for $[\mathscr{L}]$ are un-nested.  We enumerate these arcs so the left arc has $j=1$, the right arc has $j=2$, and $[\mathscr{L}]=\{\bar{\ell}_2\bar{\ell}_1,\bar{\ell}_1\bar{\ell}_2\}$.  Then because the two arcs are un-nested, we have $x_{i_1}<x_{i_2}<x_{i_3}<x_{i_4}$.  Furthermore, because $[\mathscr{L}]$ has exactly two elements and the two arcs in its diagram are un-nested, definition \ref{cslsdefn} implies that the $2N-4$ coordinates of $\boldsymbol{x}\in\Omega_0$ not involved in $[\mathscr{L}]$ must lie outside of $(x_{i_1},x_{i_2})\cup(x_{i_3},x_{i_4})$.  Thus, we have $i_1=i$, $i_2=i+1$, $i_3=j$, and $i_4=j+1$ for some $i,j\in\{1,2,\ldots,2N-1\}$ with $j>i+1$ (figure \ref{NestUnnest}).  

Now, to prove the lemma for the case discussed in the previous paragraph, we must show that $\bar{\ell}_1\bar{\ell}_2F=\bar{\ell}_2\bar{\ell}_1F$, or equivalently
\be\label{precommute}\lim_{x_{i+1}\rightarrow x_i}\lim_{x_{j+1}\rightarrow x_j}(x_{i+1}-x_i)^{6/\kappa-1}(x_{j+1}-x_j)^{6/\kappa-1}F(\boldsymbol{x})\,\,\,=\lim_{x_{j+1}\rightarrow x_j}\lim_{x_{i+1}\rightarrow x_i}(x_{i+1}-x_i)^{6/\kappa-1}(x_{j+1}-x_j)^{6/\kappa-1}F(\boldsymbol{x}).\ee
In the only other possible scenario with $M=2$ and $[\mathscr{L}]$ having exactly two elements, one of the two arcs in the half-plane diagram for $[\mathscr{L}]$ nests the other arc, and the $2N-4$ coordinates of $\boldsymbol{x}\in\Omega_0$ not involved in $[\mathscr{L}]$ must lie inside the outer arc and outside the inner arc (figure \ref{NestUnnest}).  After conformally transforming so neither image of the two arcs nests the other, we use (\ref{precommute}) to prove the equivalent statement $\bar{\ell}_1\underline{\ell}_2F=\underline{\ell}_2\bar{\ell}_1F$, where we have labeled the inner and outer arc as the first and second arc respectively.  Thus, to prove the lemma with $M=2$, it suffices to prove (\ref{precommute}).

In our proof of (\ref{precommute}), we let $x:=x_i$, $y:=x_j,\delta:=x_{i+1}-x_i$, $\epsilon:=x_{j+1}-x_j$, we relabel the other $2N-4$ coordinates of $\{x_k\}_{k\neq i,i+1,j,j+1}$ in increasing order by $\{\xi_1,\xi_2,\ldots,\xi_{2N-4}\}$, and we let $\boldsymbol{\xi}:=(\xi_1,\xi_2,\ldots,\xi_{2N-4})$. (This definition of $\boldsymbol{\xi}$ resembles, but is not the same as, that in the proof of lemmas \ref{boundedlem}--\ref{SN-1lem}.)  We let
\be\label{Idef}I(\boldsymbol{\xi};x,\delta;y,\epsilon)=\epsilon^{6/\kappa-1}\delta^{6/\kappa-1}F(\xi_1,\xi_2,\ldots,\xi_{i-1},x,x+\delta,\xi_i,\ldots,\xi_{j-3},y,y+\epsilon,\xi_{j-2},\ldots,\xi_{2N-4}),\ee
while restricting to $0<\delta,\epsilon<b$, where $b$ is small enough to ensure that $x_{i+1}$ and $x_{j+1}$ are respectively less than $x_{i+2}$ and $x_{j+2}$.  Expressed in terms of these quantities, (\ref{precommute}) becomes
\be\label{commute}\lim_{\delta\downarrow0}\lim_{\epsilon\downarrow0}I(\boldsymbol{\xi};x,\delta;y,\epsilon)=\lim_{\epsilon\downarrow0}\lim_{\delta\downarrow0}I(\boldsymbol{\xi};x,\delta;y,\epsilon),\ee
which we wish to prove.  Now, the integral equation (\ref{tildeH}) expressed in terms of $I$, $\boldsymbol{\xi}$, $x$, $\delta$, $y$, and $\epsilon$ becomes 
\be\label{I}I(\boldsymbol{\xi};x,\delta;y,\epsilon)= I(\boldsymbol{\xi};x,b;y,\epsilon)-\frac{\kappa}{4}J(\delta,b)\partial_\delta I(\boldsymbol{\xi};x,b;y,\epsilon)+\sideset{}{_\delta^b}\int J(\delta,\eta)\mathcal{N}[I](\boldsymbol{\xi};x,\eta;y,\epsilon)\,{\rm d}\eta,\ee
for all $0<\delta<b$, where $J$ is the Green function (\ref{Jgreenfunc}) and $\mathcal{N}$ is the differential operator
\begin{multline}\label{N}\mathcal{N}[I](\boldsymbol{\xi};x,\eta;y,\epsilon):=\Bigg[\frac{\partial_x}{\eta}-\sum_k\left(\frac{\partial_k}{\xi_k-x-\eta}-\frac{(6-\kappa)/2\kappa}{(\xi_k-x-\eta)^2}\right)-\frac{\partial_y}{y-x-\eta}+\frac{(6-\kappa)/2\kappa}{(y-x-\eta)^2}\\
+\frac{\epsilon\partial_{\epsilon}}{(y-x-\eta)(y+\epsilon-x-\eta)}+\frac{(6-\kappa)/2\kappa}{(y+\epsilon-x-\eta)^2}+\frac{1-6/\kappa}{(y-x-\eta)(y+\epsilon-x-\eta)}\Bigg]I(\boldsymbol{\xi};x,\eta;y,\epsilon).\end{multline}
With $x$, $y$, and the coordinates of $\boldsymbol{\xi}$ fixed to distinct values, we prove (\ref{commute}) by showing that $I(\boldsymbol{\xi};x,\delta;y,\epsilon)$ approaches its limit $I(\boldsymbol{\xi};x,0;y,\epsilon)$ as $\delta\downarrow0$ (guaranteed to exist by lemma \ref{boundedlem}) uniformly over $0<\epsilon<b$.  Now, we find
\be\label{Isup}\sup_{0<\epsilon< b}|I(\boldsymbol{\xi};x,\delta;y,\epsilon)-I(\boldsymbol{\xi};x,0;y,\epsilon)|\leq\frac{\kappa}{8-\kappa}\sup_{0<\epsilon< b}|\delta\partial_\delta I(\boldsymbol{\xi};x,\delta;y,\epsilon)|+\frac{4}{8-\kappa}\sideset{}{_0^\delta}\int\sup_{0<\epsilon< b}|\eta\mathcal{N}[I](\boldsymbol{\xi};x,\eta;y,\epsilon)|\,{\rm d}\eta\ee
from (\ref{I}) after replacing $\delta$ with zero, then replacing $b$ with $\delta$, and then estimating both sides of the equation.  Hence, to prove uniformness, it suffices to show that the right side of (\ref{Isup}) vanishes as $\delta\downarrow0$.  

First, we prove that the integral in (\ref{Isup}) vanishes as $\delta\downarrow0$ by showing that its integrand is bounded over $0<\eta<b$.  The proof of this statement resembles the proof of lemma \ref{boundedlem}, but it has a key difference.  In the proof of lemma \ref{boundedlem}, constants arising from the Schauder interior estimate (\ref{Schauder}) grow without bound as $\epsilon\downarrow0$ because the coefficients of the strictly elliptic PDE (\ref{prePDE}) (recast in terms of the variables used in this proof) grow without bound in this limit.  To avoid this issue, we construct a new strictly elliptic PDE whose coefficients are bounded both as $\delta\downarrow0$ (as they are for (\ref{prePDE}) in the proof of lemma \ref{boundedlem}) \emph{and} as $\epsilon\downarrow0$ (as they are not for (\ref{prePDE}) in the proof of lemma \ref{boundedlem}).

For $k\not\in\{i,i+1,j,j+1\}$, the null-state PDE centered on $x_k$ becomes (now with $k\in\{1,2,\ldots,2N-4\}$)
\begin{multline}\label{Inotii+1}\Bigg[\frac{\kappa}{4}\partial_k^2+\sum_{l\neq k}\left(\frac{\partial_l}{\xi_l-\xi_k}-\frac{(6-\kappa)/2\kappa}{(\xi_l-\xi_k)^2}\right)\\
\begin{aligned}&+\frac{\partial_x}{x-\xi_k}-\frac{\delta\partial_\delta}{(x-\xi_k)(x+\delta-\xi_k)}+\frac{6/\kappa-1}{(x-\xi_k)(x+\delta-\xi_k)}-\frac{(6-\kappa)/2\kappa}{(x-\xi_k)^2}-\frac{(6-\kappa)/2\kappa}{(x+\delta-\xi_k)^2}\\
&+\frac{\partial_y}{y-\xi_k}-\frac{\epsilon\partial_\epsilon}{(y-\xi_k)(y+\epsilon-\xi_k)}+\frac{6/\kappa-1}{(y-\xi_k)(y+\epsilon-\xi_k)}-\frac{(6-\kappa)/2\kappa}{(y-\xi_k)^2}-\frac{(6-\kappa)/2\kappa}{(y+\epsilon-\xi_k)^2}\Bigg] I(\boldsymbol{\xi};x,\delta;y,\epsilon)=0,\end{aligned}\end{multline}
the null-state PDE centered on $x_i$ becomes
\begin{multline}\label{Iith}\Bigg[\frac{\kappa}{4}(\partial_x-\partial_\delta)^2+\frac{\partial_\delta}{\delta}+\frac{(6-\kappa)(\partial_x-\partial_\delta)}{2\delta}+\sum_l\left(\frac{\partial_l}{\xi_l-x}-\frac{(6-\kappa)/2\kappa}{(\xi_l-x)^2}\right)\\
+\frac{\partial_y}{y-x}-\frac{\epsilon\partial_\epsilon}{(y-x)(y+\epsilon-x)}-\frac{(6-\kappa)/2\kappa}{(y-x)^2}-\frac{(6-\kappa)/2\kappa}{(y+\epsilon-x)^2}+\frac{6/\kappa-1}{(y-x)(y+\epsilon-x)}\Bigg]I(\boldsymbol{\xi};x,\delta;y,\epsilon)=0,\end{multline}
the null-state PDE centered on $x_{i+1}$ becomes
\begin{multline}\label{Ii+1th}\Bigg[\frac{\kappa}{4}\partial_\delta^2-\frac{(\partial_x-\partial_\delta)}{\delta}-\frac{(6-\kappa)\partial_\delta}{2\delta}+\sum_k\left(\frac{\partial_l}{\xi_l-x-\delta}-\frac{(6-\kappa)/2\kappa}{(\xi_l-x-\delta)^2}\right)\\
+\frac{\partial_y}{y-x}-\frac{\epsilon\partial_\epsilon}{(y-x)(y+\epsilon-x)}-\frac{(6-\kappa)/2\kappa}{(y-x)^2}-\frac{(6-\kappa)/2\kappa}{(y+\epsilon-x)^2}+\frac{6/\kappa-1}{(y-x)(y+\epsilon-x)}\Bigg]I(\boldsymbol{\xi};x,\delta;y,\epsilon)=0,\end{multline}
and the null-state PDEs centered on $x_j$ and $x_{j+1}$ are found by replacing $(x,\delta;y,\epsilon)\mapsto(y,\epsilon;x,\delta)$ in (\ref{Iith}) and (\ref{Ii+1th}) respectively.
Also, the three conformal Ward identities (\ref{wardid}) become
\begin{align}\label{Iw1}&\bigg[\sideset{}{_k}\sum\partial_k+\partial_x+\partial_y\bigg] I(\boldsymbol{\xi};x,\delta;y,\epsilon)=0,\\
\label{Iw2}&\bigg[\sideset{}{_k}\sum(\xi_k\partial_k+(6-\kappa)/2\kappa)+x\partial_x+\delta\partial_\delta+y\partial_y+\epsilon\partial_\epsilon\bigg]I(\boldsymbol{\xi};x,\delta;y,\epsilon)=0,\\
\label{Iw3}&\bigg[\sideset{}{_k}\sum(\xi_k^2\partial_k+(6-\kappa)\xi_k/\kappa)+x^2\partial_x+(2x+\delta)\delta\partial_\delta+y^2\partial_y+(2y+\epsilon)\epsilon\partial_\epsilon\bigg]I(\boldsymbol{\xi};x,\delta;y,\epsilon)=0.\end{align}
The last two identities are most useful if we isolate $\delta\partial_\delta I$ and $\epsilon\partial_\epsilon I$ in terms of $I$ and its derivatives with respect to $x$, $y$, and the coordinates of $\boldsymbol{\xi}$.  We find
\bea\nonumber\delta\partial_\delta I(\boldsymbol{\xi};x,\delta;y,\epsilon)&=&\frac{1}{2(x-y)+\delta-\epsilon}\sum_k\bigg[(2y+\epsilon-\xi_k)\xi_k\partial_k+(6-\kappa)(2y+\epsilon-2\xi_k)/2\kappa\bigg]I(\boldsymbol{\xi};x,\delta;y,\epsilon)\\
\label{dpd}&+&\frac{1}{2(x-y)+\delta-\epsilon}\bigg[(2y+\epsilon-x)x\partial_x+(y+\epsilon)y\partial_y\bigg]I(\boldsymbol{\xi};x,\delta;y,\epsilon),\\
 \nonumber\epsilon\partial_\epsilon I(\boldsymbol{\xi};x,\delta;y,\epsilon)&=&\frac{1}{2(y-x)+\epsilon-\delta}\sum_k\bigg[(2x+\delta-\xi_k)\xi_k\partial_k+(6-\kappa)(2x+\delta-2\xi_k)/2\kappa\bigg]I(\boldsymbol{\xi};x,\delta;y,\epsilon)\\
\label{epe}&+&\frac{1}{2(y-x)+\epsilon-\delta}\bigg[(x+\delta)x\partial_x+(2x+\delta-y)y\partial_y\bigg]I(\boldsymbol{\xi};x,\delta;y,\epsilon).
\eea

We use PDEs (\ref{Iith}--\ref{Iw3}) to construct a new PDE that has $x,$ $y,$ and the coordinates of $\boldsymbol{\xi}$ as independent variables, that has $\delta$ and $\epsilon$ as parameters, and that is strictly elliptic in an arbitrarily chosen compact subset of $\pi_{i+1,j+1}(\Omega_0)$.  To begin, we subtract (\ref{Ii+1th}) from (\ref{Iith}) and multiply the result by $\delta$ to find
\begin{multline}\label{Idiffpde}\Bigg[\frac{\kappa}{4}\delta\partial_x^2-\frac{\kappa}{2}\partial_x\delta\partial_\delta+\frac{8-\kappa}{2}\partial_x-\sum_k\left(\frac{\delta^2\partial_k}{(\xi_k-x)(\xi_k-x-\delta)}+\frac{[\delta+2(x-\xi_k)]\delta^2(6-\kappa)/2\kappa}{(\xi_k-x)^2(\xi_k-x-\delta)^2}\right)\\
\begin{aligned}
&-\frac{\delta^2\partial_y}{(y-x)(y-x-\delta)}-\frac{[\delta+2(x-y)]\delta^2(6-\kappa)/2\kappa}{(y-x)^2(y-x-\delta)^2}-\frac{\delta^2(2x-2y+\delta-\epsilon)\epsilon\partial_\epsilon}{(y-x)(y-x-\delta)(y+\epsilon-x)(y+\epsilon-x-\delta)}\\
&-\frac{\delta^2(2x-2y+\delta-\epsilon)(1-6/\kappa)}{(y-x)(y-x-\delta)(y+\epsilon-x)(y+\epsilon-x-\delta)}-\frac{[\delta+2(x-y-\epsilon)]\delta^2(6-\kappa)/2\kappa}{(y+\epsilon-x)^2(y+\epsilon-x-\delta)^2}\Bigg]I(\boldsymbol{\xi};x,\delta;y,\epsilon)=0.\end{aligned}\end{multline}
Next, we use (\ref{Iw1}) to eliminate $\partial_yI$ from (\ref{dpd}), and we insert the result into (\ref{Idiffpde}) to generate a PDE whose principal part only contains $\partial_x^2 I$ and the mixed partial derivatives $\partial_x\partial_k I$ for $k\in\{1,2,\ldots,2N-4\}$.  The coefficient of the former term in the principal part is
\be\label{B}a(x,\delta;y,\epsilon)=\frac{\kappa}{4}\delta+\frac{\kappa}{2}\left(\frac{(x-y)^2-\epsilon(x-y)}{2(x-y)+\delta-\epsilon}\right),\ee
and we note that $a$ restricted to a compact subset of $\pi_{i+1,j+1}(\Omega_0)$ (so $x-y$ is bounded away from zero) does not vanish or grow without bound as $\epsilon\downarrow0$ or $\delta\downarrow0$.  The other coefficients in the principle part of this PDE exhibit this property too, and none of the other coefficients in this PDE grow without bound as $\epsilon\downarrow0$ or $\delta\downarrow0$ either.  Next, by replacing $(x,\delta;y,\epsilon)\mapsto(y,\epsilon;x,\delta)$, we generate another PDE whose principal part only contains $\partial_y^2 I$ (with coefficient $a(y,\epsilon;x,\delta)$) and the mixed partial derivatives $\partial_y\partial_k I$ for $k\in\{1,2,\ldots,2N-4\}$, and with the mentioned features of the companion PDE that generated it.

Next, we form a linear combination of the two PDEs that we constructed in the previous paragraph, with respective nonzero coefficients $a(x,\delta;y,\epsilon)^{-1}$ and $a(y,\epsilon;x,\delta)^{-1}$, and the $2N-4$ null-state PDEs in (\ref{Inotii+1}), each with the same coefficient $4c/\kappa$ for some $c>0$, to find a new PDE whose principal part only contains $\partial_x^2I,$ $\partial_y^2I,$ $\partial_k^2I$, and the mixed partial derivatives $\partial_x\partial_kI$ and $\partial_y\partial_kI$ with $k\in\{1,2,\ldots,2N-4\}$.  Furthermore, we may use (\ref{dpd}, \ref{epe}) again to replace the first derivatives $\delta\partial_\delta I$ and $\epsilon\partial_\epsilon I$ in this current PDE with linear combinations of first derivatives of $I$ in $x$, $y$, and the coordinates of $\boldsymbol{\xi}$.  This produces a final PDE (which is very complicated, so we do not display it here) for which $x,$ $y,$ and the coordinates of $\boldsymbol{\xi}$ are independent variables while $\delta$ and $\epsilon$ are simply parameters.  Moreover, the coefficients in the principle part of this final PDE do not vanish or grow without bound as $\epsilon\downarrow0$ or $\delta\downarrow0$, and none of the coefficients of the other terms grow without bound as $\epsilon\downarrow0$ or $\delta\downarrow0$.

Finally, we show that for any open set $U_0\subset\subset\pi_{i+1,j+1}(\Omega_0)$, there exists a choice for $c$ such that the PDE constructed in the previous paragraph is strictly elliptic in that open set for all $0<\delta,\epsilon<b$.  The coefficient matrix for the PDE's principal part is
\renewcommand{\kbldelim}{(}
\renewcommand{\kbrdelim}{)}
\be\label{coefmatrix}\kbordermatrix{
    & \xi_1 & \xi_2 & \hdots & \xi_{2N-5} & \xi_{2N-4} & x & y\\
    \xi_1 & c & 0 & \hdots & 0 & 0 & a_{x,1} & a_{y,1}\\
    \xi_2 & 0 & c & &  & 0 & a_{x,2} & a_{y,2} \\
    \vdots & \vdots &  & \ddots &  & \vdots & \vdots & \vdots \\
    \xi_{2N-5} & 0 &  &  & c & 0 & a_{x,2N-5} & a_{y,2N-5} \\
    \xi_{2N-4} & 0 & 0 & \hdots & 0 & c & a_{x,2N-4} & a_{y,2N-4} \\
   x & a_{x,1} & a_{x,2} & \hdots & a_{x,2N-5} & a_{x,2N-4} & 1 & 0\\
   y& a_{y,1} & a_{y,2} & \hdots & a_{y,2N-5} & a_{y,2N-4} & 0 & 1
  },
\ee
where $a_{x,k}$ (resp.\ $a_{y,k}$) is half of the coefficient of $\partial_x\partial_kI$ (resp.\ $\partial_y\partial_kI$).  According to Sylvester's criterion (Thm.\ 7.5.2 of \cite{horn}), this matrix is positive definite if all of its leading principal minors are positive.  Because $c>0$, the determinant of the first $2N-4$ leading principal minors are evidently positive.  We find the remaining two principal minors by using the determinant formula
\be\label{detformula}M=\left(\begin{matrix} A & B \\ C & D\end{matrix}\right)\quad\Longrightarrow\quad\det M=\det A\det(D-CA^{-1}B),\ee
where $A$ and $D$ are square blocks of the square matrix $M$, and where $B$ and $C$ are blocks that fill the part of $M$ above $D$ and beneath $A$ respectively.  Using this formula, we find that the $(2N-3)$th leading principal minor of the coefficient matrix (\ref{coefmatrix}) is 
\be\label{firstdet}c^{2N-4}\left(1-\frac{|a_x|^2}{c}\right),\ee
where $a_x$ is the vector in $\mathbb{R}^{2N-4}$ whose $k$th entry is $a_{x,k}$, and the $(2N-2)$th leading principal minor (this is simply the determinant of (\ref{coefmatrix})) is
\be\label{seconddet}c^{2N-4}\det\left[\left(\begin{matrix}1&0\\0&1\end{matrix}\right)-\frac{1}{c}\left(\begin{matrix}a_x\cdot a_x&a_x\cdot a_y\\ a_y\cdot a_x&a_y\cdot a_y\end{matrix}\right)\right],\ee
where $a_y$ is the vector in $\mathbb{R}^{2N-4}$ whose $k$th entry is $a_{y,k}$.  By choosing $c$ sufficiently large, we ensure that determinants (\ref{firstdet}) and (\ref{seconddet}) are greater than, say, one at all points in an open set $U_0\subset\subset\pi_{i+1,j+1}(\Omega_0)$ and for all $0<\delta,\epsilon<b$.  Thus, the matrix (\ref{coefmatrix}) is positive definite.  Furthermore, because all of the components of $a_x$ and $a_y$ are bounded on $U_0$, the eigenvalues of the matrix (\ref{coefmatrix}) (which is Hermitian and therefore diagonalizable) are also bounded on $U_0$.  This fact together with the fact that the product of these eigenvalues (equaling (\ref{seconddet})) is greater than one on $U_0$ imply that all of these eigenvalues are bounded away from zero over this set for all $0<\delta,\epsilon<b$.  Thus, the constructed PDE is strictly elliptic in $U_0$ for all $0<\delta,\epsilon<b$.

The existence of this PDE implies Schauder estimates.  We choose open sets $ U_0,$ $ U_1,\ldots, U_{2m+1}$, where $m:=\lceil q\rceil$ and $q:=p+1-6/\kappa$ (we recall that $p$ is given by (\ref{powerlaw}), and we choose $p$ big enough so $m>1$), such that 
\be U_{2m+1}\subset\subset U_{2m}\subset\subset\ldots\subset\subset U_0\subset\subset\pi_{i+1,j+1}(\Omega_0),\ee
and we let $d_n:=\text{dist}(\partial U_{n+1},\partial U_n)$.  With these choices, the Schauder interior estimate gives (Cor.\ 6.3 of \cite{giltru})
\be\label{ISchauder}d_{n+1}\sup_{ U_{n+1}}|\partial^\varpi I(\boldsymbol{\xi};x,\delta;y,\epsilon)|\leq C_n\sup_{ U_n}|I(\boldsymbol{\xi};x,\delta;y,\epsilon)|\ee
for all $0<\delta,\epsilon<b$ and $n\in\{0,1,\ldots,2m\}$, where $C_n$ is some constant and $\varpi$ is a multi-index for the coordinates of $\boldsymbol{\xi}$ and $x$ and $y$.  Furthermore, (\ref{dpd}, \ref{epe}) imply that $\varpi$ may include the derivatives $\delta\partial_\delta$ and $\epsilon\partial_\epsilon$ too.  Overall, we have
\be\label{multiindex2}\partial^\varpi\in\left\{\begin{array}{c}\partial_j,\quad \partial_x,\quad \partial_y,\quad \delta\partial_\delta,\quad\epsilon\partial_\epsilon, \quad\partial_j^2,\quad\partial_x^2,\quad\partial_y^2,\quad(\delta\partial_\delta)^2,\quad(\epsilon\partial_\epsilon)^2,\quad \partial_j\partial_k, \\ \partial_j\partial_x,\quad\partial_j\partial_y,\quad\partial_j\delta\partial_\delta,\quad\partial_j\epsilon\partial_\epsilon,\quad\partial_x\partial_y,\quad\partial_x\delta\partial_\delta,\quad\partial_x\epsilon\partial_\epsilon\quad\partial_y\delta\partial_\delta,\quad \partial_y\epsilon\partial_\epsilon,\quad\delta\partial_\delta\epsilon\partial_\epsilon \end{array}\right\}.\ee

Now we use (\ref{estI2}) to show that $I(\boldsymbol{\xi};x,\delta;y,\epsilon)=O(1)$ as $\delta,\epsilon\downarrow0$.  After taking the supremum of (\ref{I}) over $U_n$, we find that for all $0<\delta<b$,
\be\label{estI2}\sup_{U_n}|I(\boldsymbol{\xi};x,\delta;y,\epsilon)|\leq\sup_{U_n}|I(\boldsymbol{\xi};x,b;y,\epsilon)|+\frac{\kappa}{8-\kappa}\sup_{U_n}|\delta\partial_\delta I(\boldsymbol{\xi};x,b;y,\epsilon)|+\frac{4}{8-\kappa}\sideset{}{_\delta^b}\int \sup_{U_n}|\eta\mathcal{N}[I](\boldsymbol{\xi};x,\eta;y,\epsilon)|\,{\rm d}\eta.\ee
We use (\ref{ISchauder}) with $n=0$ to estimate the integrand of (\ref{estI2}) with $n=1$, finding that it is $O(\eta^{-q}\epsilon^{-q})$ as $\eta\downarrow0$ or $\epsilon\downarrow0$.  After inserting this estimate into (\ref{estI2}) and integrating, we find that (because $m>1$)
\be\sup_{ U_1}|I(\boldsymbol{\xi};x,\delta;y,\epsilon)|=O(\delta^{-q+1}\epsilon^{-q}).\ee
Just as we did in the proof of lemma \ref{boundedlem}, we repeat this process $m-1$ more times, using the subsets $U_m\subset\subset U_{m-1}\subset\subset\ldots\subset\subset U_0$ until we finally arrive with
\be\sup_{ U_m}|I(\boldsymbol{\xi};x,\delta;y,\epsilon)|=O(\epsilon^{-q})\quad\text{as $\delta,\epsilon\downarrow0$.}\ee
Now to decrease the power on $\epsilon$, we switch $(x,\delta;y,\epsilon)$ to $(y,\epsilon;x,\delta)$ in (\ref{estI2}), and we continue the previous steps for the $\epsilon$ variable, using the subsets $ U_{2m}\subset\subset U_{2m-1}\subset\subset\ldots\subset\subset U_m$.  We ultimately find
\be\label{almostlastest1}\sup_{ U_{2m}}|I(\boldsymbol{\xi};x,\delta;y,\epsilon)|=O(1)\quad\text{as $\delta,\epsilon\downarrow0$.}\ee
This fact followed by one last application of the Schauder estimate (\ref{ISchauder}) with $n=2m$ implies that the integrand of (\ref{Isup}) is a bounded function of $\eta$.  Hence, the definite integral of (\ref{Isup}) vanishes as $\delta\downarrow0$.

Now we argue that the first term on the right side of (\ref{Isup}) vanishes as $\delta\downarrow0$ too.  After re-expressing (\ref{partialyH}) in terms of the quantities used in this proof, we take the supremum of both sides over $0<\epsilon<b$, finding
\be\label{supdpd}\sup_{0<\epsilon<b}|\delta\partial_\delta I(\boldsymbol{\xi};x,\delta;y,\epsilon)|\leq\left(\frac{\delta}{b}\right)^{8/\kappa-1}\sup_{0<\epsilon<b}|b\hspace{.03cm}\partial_\delta I(\boldsymbol{\xi};x,b;y,\epsilon)|+\frac{4}{\kappa}\sideset{}{_\delta^b}\int\left(\frac{\delta}{\eta}\right)^{8/\kappa-1}\sup_{0<\epsilon<b}|\eta\mathcal{N}[I](\boldsymbol{\xi};x,\eta;y,\epsilon)|\,{\rm d}\eta\ee
for all $0<\delta<b$.  We just argued that the supremum in the integrand of (\ref{supdpd}) is bounded over $0<\eta<b$.  In light of this fact, we repeat the analysis of (\ref{gotozero}, \ref{limepsilon}) to show that the right side of (\ref{supdpd}), and thus the first term on the right side of (\ref{Isup}), vanishes as $\delta\downarrow0$.  From this fact and the finding of the previous paragraph, we conclude that the right side, and therefore the left side, of (\ref{Isup}) vanishes as $\delta\downarrow0$.  Thus,$I(\boldsymbol{\xi};x,\delta;y,\epsilon)$ converges to its limit as $\delta\downarrow0$ uniformly over $0<\epsilon<b$.  This fact proves the equality supposed in (\ref{precommute}, \ref{commute}).  Consequently, if each element of $[\mathscr{L}]$ has two limits, then $[\mathscr{L}]F$ is well-defined for all $F\in\mathcal{S}_N$.  This proves the lemma in the case $M=2$.

To prove the lemma for $M\in\{3,4,\ldots,N\}$, we use induction.  After selecting an $M<N$ from this set and an arbitrary $F\in\mathcal{S}_N$, we suppose that $[\mathscr{L}']F$ is well-defined for all equivalence classes $[\mathscr{L}']$ whose elements have fewer than $M$ limits, and we choose a different equivalence class $[\mathscr{L}]$ whose elements have exactly $M$ limits $\ell_1,$ $\ell_2,\ldots,\ell_M$.  Each element of $[\mathscr{L}]$ equals $\ell_m\mathscr{L}_{\overline{m}}$ for some $m\in\{1,2,\ldots,M\}$ and some allowable sequence $\mathscr{L}_{\overline{m}}$ of the $M-1$ limits of $\{\ell_j\}_{j\neq m}$.  We let 
\be\label{Adef}\mathcal{A}:=\{m\in\mathbb{Z}^+\,|\, \text{there is an element of $[\mathscr{L}]$ that executes the particular limit $\ell_m$ last}\}.\ee
For fixed $m\in \mathcal{A}$, all elements of $[\mathscr{L}]$ of the form $\ell_m\mathscr{L}_{\overline{m}}$ are equivalent by definition \ref{equivdefn}, and we denote their equivalence sub-class by $\ell_m[\mathscr{L}_{\overline{m}}]$, where  $[\mathscr{L}_{\overline{m}}]$ is the equivalence class for $\mathscr{L}_{\overline{m}}$.   By the induction hypothesis, $[\mathscr{L}_{\overline{m}}]F$ is well-defined for all $m\in \mathcal{A}$, so to finish the proof, we show that $\ell_m[\mathscr{L}_{\overline{m}}]F=\ell_n[\mathscr{L}_{\overline{n}}]F$ for each pair $(m,n)\in \mathcal{A}\times \mathcal{A}$.

Because $M<N$, the condition $m,n\in \mathcal{A}$ (i.e., we may collapse either the $m$th arc or the $n$th arc in the half-plane diagram for $\mathscr{L}$ last) and $m\neq n$ constrains the possible connectivities of the arcs in the half-plane diagram for $[\mathscr{L}]$.  (We recall from definition \ref{cslsdefn} that we have enumerated the $M$ arcs in this diagram, the limit $\ell_j$ corresponds to the $j$th arc, and the coordinates $x_{i_{2j-1}}<x_{i_{2j}}$ of $\boldsymbol{x}\in\Omega_0$ are the endpoints of the $j$th arc.)  Indeed, if the $m$th arc nests the $n$th arc in this diagram, then the following are true.
\begin{enumerate}
\item\label{firstit}All coordinates of $\boldsymbol{x}\in\Omega_0$ that are not involved in $[\mathscr{L}]$ reside in $(x_{i_{2m-1}},x_{i_{2n-1}})\cup(x_{i_{2n}},x_{i_{2m}})$.  Indeed, if one of these coordinates resides in $(x_{i_{2n-1}},x_{i_{2n}})$ instead, then $\ell_n=\underline{\ell}_n$ and $\ell_m=\underline{\ell}_m$ in all elements of $[\mathscr{L}]$.  Hence, the limit $\underline{\ell}_n$ necessarily follows $\underline{\ell}_m$ in all elements of $[\mathscr{L}]$, contradicting the supposition that $m\in \mathcal{A}$.  A similar argument shows that none of these coordinates reside in $(x_{i_{2m}},x_{i_{2m-1}})$ either, thus proving the claim.  
\item\label{secondit} We have $\ell_m=\underline{\ell}_m$, and $\ell_n=\bar{\ell}_n$ in all elements of $[\mathscr{L}]$.  Indeed, this follows immediately from condition \ref{firstit} above.
\item\label{thirdit} No arc in the half-plane diagram for $[\mathscr{L}]$ simultaneously has one endpoint in $(x_{i_{2m-1}},x_{i_{2n-1}})$ and its other endpoint in $(x_{i_{2n}},x_{i_{2m}})$.  Indeed, suppose that the contrary is true for the $j$th arc with $j\not\in\{m,n\}$.  Then according to definition \ref{cslsdefn}, either all or none of the coordinates of $\boldsymbol{x}$ that are not involved in $[\mathscr{L}]$ reside between the endpoints of the $j$th arc.  In the former case, $\ell_j=\underline{\ell}_j$, and $\underline{\ell}_j$ necessarily follows $\underline{\ell}_m$ in each element of $[\mathscr{L}]$, contradicting the supposition that $m\in \mathcal{A}$.  In the latter case, $\ell_j=\bar{\ell}_j$, and $\bar{\ell}_j$ necessarily follows $\bar{\ell}_n$ in each element of $[\mathscr{L}]$, contradicting the supposition that $n\in \mathcal{A}$.  
\item\label{fourthit} Both endpoints of the $j$th arc in the half-plane diagram for $[\mathscr{L}]$ with $j\not\in\{m,n\}$ reside in only one of the following four intervals: $(x_{i_{2m}},x_{i_{2m-1}})$, $(x_{i_{2m-1}},x_{i_{2n-1}})$, $(x_{i_{2n-1}},x_{i_{2n}})$, or $(x_{i_{2n}},x_{i_{2m}})$.  Indeed, were the endpoints to reside in different intervals, then because the $j$th arc cannot cross the $m$th arc or the $n$th arc, one endpoint must be in $(x_{i_{2m-1}},x_{i_{2n-1}})$ while the other must be in $(x_{i_{2n}},x_{i_{2m}})$, contradicting condition \ref{thirdit} above.
\item\label{fifthit} None of the coordinates of $\boldsymbol{x}\in\Omega_0$ that are not involved in $[\mathscr{L}]$ reside between the endpoints of the $j$th arc with $j\neq m$.  For if one such coordinate did reside there, then $\ell_j=\underline{\ell}_j$, and $\underline{\ell}_j$ necessarily follows $\underline{\ell}_m$ in all elements of $[\mathscr{L}]$, contradicting the supposition that $m\in \mathcal{A}$.
\end{enumerate}
(Figure \ref{ArcPossibilities} shows a half-plane diagram of an $[\mathscr{L}]$ with $M<N$ limits that satisfies conditions \ref{firstit}--\ref{fifthit} above.)

\begin{figure}[t]
\centering
\includegraphics[scale=0.27]{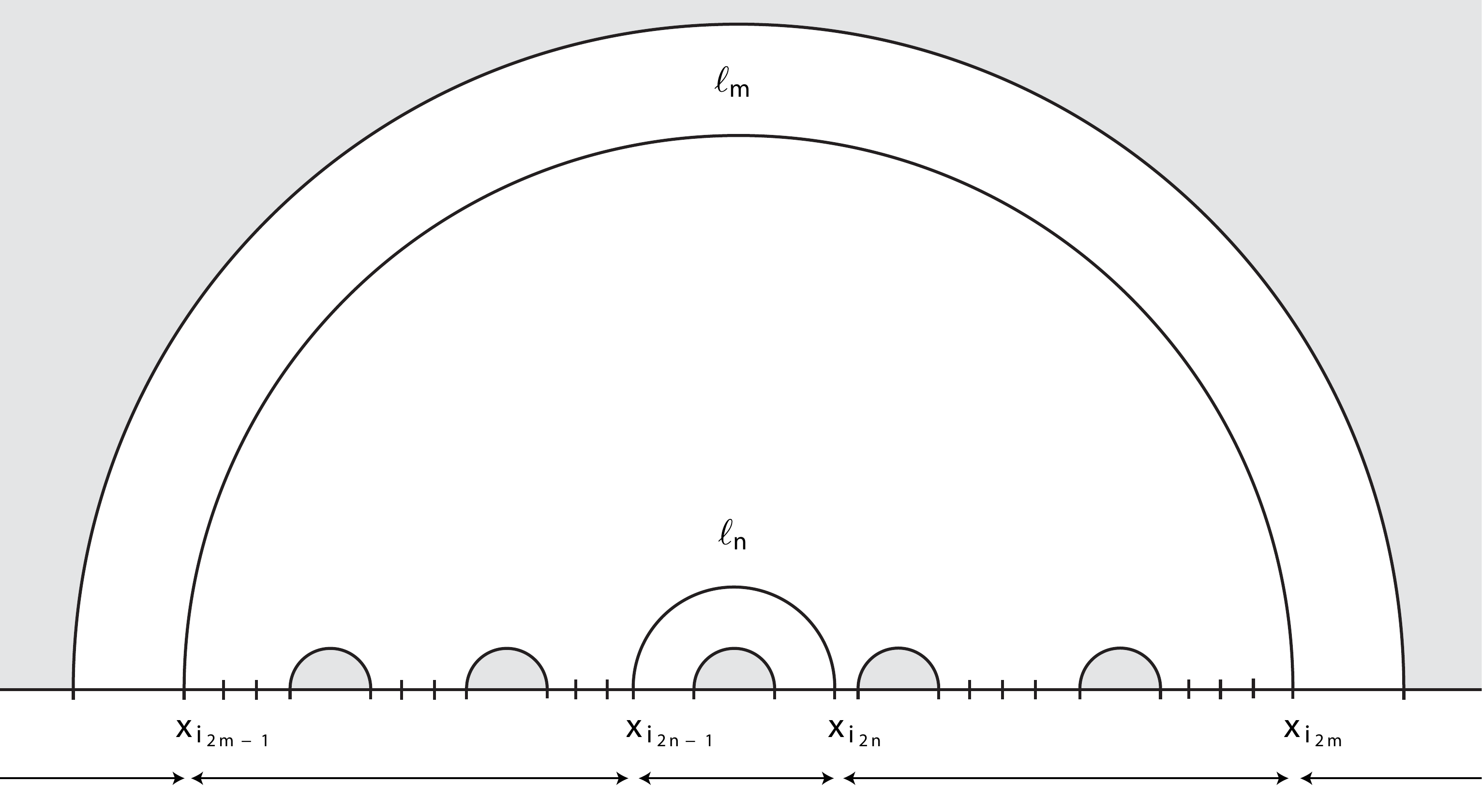}
\caption{The half-plane diagram for an $[\mathscr{L}]$ with $m,n\in\mathcal{A}$, showing the $m$th and $n$th arcs.  The ticks crossing the $x$-axis mark the coordinates not involved in $[\mathscr{L}]$.  (See item \ref{fifthit}.)  Thus, all of the other arcs (not all shown) must reside in the gray regions.}
\label{ArcPossibilities}
\end{figure}

Now, conditions \ref{fourthit} and \ref{fifthit} imply the existence of an allowable sequence $\mathscr{L}_{\overline{m},\overline{n}}$ of the $M-2$ limits of $\{\ell_j\}_{j\neq m,n}$ that contract away all but the $m$th and $n$th arcs in the half-plane diagram for $\mathscr{L}$.  Indeed, because all of these former arcs reside in the gray regions of figure \ref{ArcPossibilities}, we may separately contract away all of these arcs without colliding against the $m$th or $n$th arc or against points not involved in $\mathscr{L}$.  Furthermore, lemma \ref{cslslem} implies that $\mathscr{L}_{\overline{m},\overline{n}}F\in\mathcal{S}_{N-M+2}$.  Now, we proved the present lemma \ref{welldef} in the case $M=2$ earlier, and this result implies that if $\underline{\ell}_m\bar{\ell}_n$ and $\bar{\ell}_n\underline{\ell}_m$ are allowable sequences of limits acting on $G\in\mathcal{S}_{N'}$ with $N'>1$, then $\underline{\ell}_m\bar{\ell}_nG=\bar{\ell}_n\underline{\ell}_mG$.  With $N'=N-M+2$, we thus have
\be\label{equality}\mathscr{L}_{\overline{m},\overline{n}}F\in\mathcal{S}_{N-M+2}\quad\Longrightarrow\quad\underline{\ell}_m\bar{\ell}_n\mathscr{L}_{\overline{m},\overline{n}}F=\bar{\ell}_n\underline{\ell}_m\mathscr{L}_{\overline{m},\overline{n}}F.\ee
But also, $\bar{\ell}_n\mathscr{L}_{\overline{m},\overline{n}}\in[\mathscr{L}_{\overline{m}}]$ and $\underline{\ell}_m\mathscr{L}_{\overline{m},\overline{n}}\in[\mathscr{L}_{\overline{n}}]$.  The induction hypothesis implies that $[\mathscr{L}_{\overline{m}}]F$ and $[\mathscr{L}_{\overline{n}}]F$ are well-defined because $[\mathscr{L}_{\overline{m}}]$ and $[\mathscr{L}_{\overline{n}}]$ are allowable sequences of fewer than $M$ limits.  Thus, we may insert $\bar{\ell}_n\mathscr{L}_{\overline{m},\overline{n}}F=[\mathscr{L}_{\overline{m}}]F$ and $\underline{\ell}_m\mathscr{L}_{\overline{m},\overline{n}}F=[\mathscr{L}_{\overline{n}}]F$ in (\ref{equality}) to ultimately find that 
\be\label{finalcsls}\underline{\ell}_m[\mathscr{L}_{\overline{m}}]F=\bar{\ell}_n[\mathscr{L}_{\overline{n}}]F\quad\text{for $m,n\in \mathcal{A}$.}\ee
The proof of (\ref{finalcsls}) in the case where the arcs for $\ell_m$ and $\ell_n$ are un-nested proceeds similarly.  Thus, $[\mathscr{L}]F$ is well-defined if $[\mathscr{L}]$ involves $M<N$ limits.

If $N=M$ and the $m$th arc nests the $n$th arc, then the half-plane diagram for $[\mathscr{L}]$ still must satisfy  conditions \ref{firstit} and \ref{secondit} above, but it may now violate conditions \ref{thirdit} and \ref{fourthit}.  (Condition \ref{fifthit} no longer applies.)  If $[\mathscr{L}]$ satisfies all of conditions \ref{firstit}--\ref{fourthit}, then the argument of the previous paragraph proves that $[\mathscr{L}]F$ is well-defined if $M=N$.  If $[\mathscr{L}]$ does not satisfy conditions \ref{thirdit} and \ref{fourthit}, then we let the indices $j_1=m,$ $j_2$, $j_3,\ldots,j_{k-1}$, $j_k=n$ label from outermost to innermost the $k$ arcs with one endpoint in $[x_{i_{2m-1}},x_{i_{2n-1}}]$ and the other in $[x_{i_{2n}},x_{i_{2m}}]$. We note two facts.
\begin{enumerate}[I.]
\item\label{thefirst} If $M=N$, then it is easy to see that $\mathcal{A}=\{1,2,\ldots,N\}$ (\ref{Adef}).  Thus, $[\mathscr{L}_{\overline{j}}]$ is defined for each $j\in\{1,2,\ldots,N\}$.
\item\label{thesecond} If $M=N$, then $\bar{\ell}_j[\mathscr{L}_{\overline{j}}]F=\underline{\ell}_j[\mathscr{L}_{\overline{j}}]F$ for all $j\in\{1,2,\ldots,N\}$.  This follows immediately from the fact that $[\mathscr{L}_{\overline{j}}]F\in\mathcal{S}_1$ and (\ref{S1}).
\end{enumerate}
Now, because no arc separates the $j_{l-1}$th arc from the $j_l$th arc in the upper half-plane, the argument of the previous paragraph and item \ref{thefirst} says that $\underline{\ell}_{j_{l-1}}[\mathscr{L}_{\overline{j_{l-1}}}]F=\bar{\ell}_{j_l}[\mathscr{L}_{\overline{j_l}}]F$ for each $l\in\{2,3,\ldots,k\}$.  Following with item \ref{thesecond}, we find
\be \underline{\ell}_m[\mathscr{L}_{\overline{m}}]F=\underline{\ell}_{j_1}[\mathscr{L}_{\overline{j_1}}]F=\underline{\ell}_{j_2}[\mathscr{L}_{\overline{j_2}}]F=\quad\ldots\quad=\underline{\ell}_{j_k}[\mathscr{L}_{\overline{j_k}}]F=\bar{\ell}_n[\mathscr{L}_{\overline{n}}]F\ee
and thus conclude that $[\mathscr{L}]F$ is well-defined if $M=N$.  The proof of the case with the $m$th and $n$th arcs un-nested proceeds similarly.
\end{proof}

\section{An upper bound for $\dim\mathcal{S}_N$}\label{upperbound}

In this section we find an upper-bound for the dimension of $\mathcal{S}_N$, namely that $\dim\mathcal{S}_N\leq C_N$.  But first, we introduce some terminology motivated by CFT.  In CFT, if each endpoint of an interval $(x_i,x_{i+1})$ hosts a one-leg boundary operator, then upon sending $x_{i+1}\rightarrow x_i$, these operators fuse into a combination of an identity operator and a two-leg operator.  If only the identity (resp.\ two-leg) channel appears in the OPE, then we call $(x_i,x_{i+1})$ an \emph{identity interval} (resp.\ a \emph{two-leg interval}), and the correlation function with these one-leg boundary operators exhibits this OPE by admitting a Frobenius series expansion in powers of $x_{i+1}-x_i$ and with indicial power $p_1=-2\theta_1+\theta_0=1-6/\kappa$ (resp.\ $p_2=-2\theta_1-\theta_2=2/\kappa$) (\ref{p}).  (The powers of these two expansions differ by an integer only when $\kappa=8/r$ for some $r\in\mathbb{Z}^+$.  This invites the consideration of logarithmic CFT \cite{gur,matrid}.  We explore this situation more carefully in \cite{florkleb4}.)  Lemma \ref{limitlem} extends these notions to all elements of $\mathcal{S}_N$ (should any not be identified with correlation functions of one-leg boundary operators), motivating the following definition.

\begin{defn}\label{intervaldefn} With $\kappa\in(0,8)$, we choose $F\in\mathcal{S}_N$ and $i\in\{1,2,\ldots,2N-1\}$.  Interpreting $\pi_{i+1}(\Omega_0)$ as the part of the boundary of $\Omega_0$ whose points have only their $i$th and $(i+1)$th coordinates equal, we let
\be\label{secondH}H:\Omega_0\cup\pi_{i+1}(\Omega_0)\rightarrow\mathbb{R},\quad H(\boldsymbol{x}):=(x_{i+1}-x_i)^{6/\kappa-1}F(\boldsymbol{x})\quad\text{for $\boldsymbol{x}\in\Omega_0$},\ee
and continuously extend $H$ to $\pi_{i+1}(\Omega_0)$.  ($H$ as defined in (\ref{secondH}) is virtually identical to the $H$ defined in (\ref{FtoH}).)  We also let
\be\label{secondF0} F_0:\pi_{i+1}(\Omega_0)\rightarrow\mathbb{R},\quad (F_0\circ\pi_{i+1})(\boldsymbol{x}):=\lim_{x_{i+1}\rightarrow x_i}(x_{i+1}-x_i)^{6/\kappa-1}F(\boldsymbol{x}).\ee
($F_0$ as defined in (\ref{secondF0}) is virtually identical to the $F_0$ defined in (\ref{thefirstlim}).)
\begin{enumerate}
\item\label{cftinterval1} We define $(x_i,x_{i+1})$ to be a \emph{two-leg interval of $F$} if the limit $F_0$ (\ref{secondF0}) vanishes. 
\item\label{cftinterval2} For $8/\kappa\not\in\mathbb{Z}^+$, we define $(x_i,x_{i+1})$ to be an \emph{identity interval of $F$} if the limit $F_0$ (\ref{secondF0}) does not vanish and $H$ (\ref{secondH}) is analytic at every point in $\pi_{i+1}(\Omega_0)$.
\end{enumerate}
Letting $x_i'=f(x_i)$, where $f$ is the M\"{o}bius transformation defined in (\ref{fmapping}), we define the interval $(x_{2N},x_1)$ of $F$ to be the same type as its image interval $(x_{2N}',x_1')$ with respect to the function (\ref{hatF}) with the following restricted domain:
\be\hat{F}:f(\Omega_0)\rightarrow\mathbb{R},\quad f(\Omega_0):=\{\boldsymbol{x}'=(x_1',x_2',\ldots,x_{2N}')\in\Omega\,|\,\boldsymbol{x}\in\Omega_0\}.\ee
(This restricted function is an element of $\mathcal{S}_N$ after we rearrange the indices of the coordinates of $\boldsymbol{x}'$ so they increase as we go from the smallest coordinate of $\boldsymbol{x}'$ to its biggest.)
\end{defn}
\noindent
See appendix \ref{preliminaries} and sections \red{II} and \red{IV} of the sequel \cite{florkleb4} for a CFT interpretation of these definitions.  (If $8/\kappa\in\mathbb{Z}^+$, then we do not define the term ``identity interval" for now because of complications that arise if the two indicial powers (\ref{p}) differ by an integer.  See table \red{I} of \cite{florkleb3} and sections \red{II} and \red{IV} of \cite{florkleb4} for further details.)

The Green function (\ref{greenfunc}) used in the proof of lemma \ref{boundedlem} gives the power law for $F$ if $(x_i,x_{i+1})$ is a two-leg interval of it.  This is simply the two-leg power $p_2=-2\theta_1+\theta_2=2/\kappa$ of (\ref{p}).  We prove this claim in lemma \red{5} of \cite{florkleb2}.

\begin{lem}\label{alltwoleglem}  Suppose that $\kappa\in(0,8)$ and $F\in\mathcal{S}_N$ with $N>1$.  If all of $(x_2,x_3)$, $(x_3,x_4),\ldots,(x_{2N-2},x_{2N-1})$, and $(x_{2N-1},x_{2N})$  are two-leg intervals of $F$, then $F=0$.
\end{lem}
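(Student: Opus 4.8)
The plan is to argue by induction on $N$. The base case $N=2$ follows directly from the explicit description (\ref{S2})--(\ref{Pi1}) of $\mathcal{S}_2$: declaring $(x_2,x_3)$ a two-leg interval kills the coefficient of the basis solution carrying the identity power $1-6/\kappa$ in that channel, and then declaring $(x_3,x_4)$ two-leg kills the surviving coefficient, so $F=0$.

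For the inductive step I would first reduce to the ostensibly stronger hypothesis that \emph{all} $2N-1$ internal intervals $(x_1,x_2),(x_2,x_3),\dotsc,(x_{2N-1},x_{2N})$ are two-leg intervals of $F$. Collapsing $(x_1,x_2)$ yields $F_0:=\bar{\ell}_1F\in\mathcal{S}_{N-1}$ by lemmas \ref{limitlem} and \ref{SN-1lem}. For $3\le i\le 2N-1$, the identity-collapse $\bar{\ell}_i$ of the assumed two-leg interval $(x_i,x_{i+1})$ commutes with $\bar{\ell}_1$ by lemma \ref{welldef} (the case $M=2$, un-nested arcs) and annihilates $F$, hence annihilates $F_0$; thus all $2(N-1)-1$ internal intervals of $F_0$ are two-leg intervals of $F_0$, and the induction hypothesis applied to $F_0\in\mathcal{S}_{N-1}$ gives $F_0=0$, i.e.\ $(x_1,x_2)$ is itself a two-leg interval of $F$. (This reduction is insensitive to whether $8/\kappa\in\mathbb{Z}^+$, since it uses only clause \ref{cftinterval1} of definition \ref{intervaldefn}; the cyclic interval $(x_{2N},x_1)$ can likewise be shown two-leg by collapsing it with $\underline{\ell}$, though this is not needed for what follows.)

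The heart of the matter --- the part I expect to be the main obstacle, and the content of the companion article \cite{florkleb2} --- is to show that a nonzero $F$ cannot have all $2N-1$ internal intervals two-leg. The natural strategy is a two-leg collapse: because $(x_1,x_2)$ is two-leg, the Frobenius analysis around (\ref{F0})--(\ref{p}), now with leading indicial power $p_2=2/\kappa$, together with the two-leg power-law bound $F=O\bigl((x_2-x_1)^{2/\kappa}\bigr)$ proved in lemma 5 of \cite{florkleb2}, gives an expansion $F=\sum_{n\ge0}(x_2-x_1)^{2/\kappa+n}F_n$ whose coefficients $F_n$ are fixed recursively from $F_0$ by the null-state PDE centered on $x_2$ (the recursion being non-resonant as long as $8/\kappa\notin\mathbb{Z}^+$, since $p_1-p_2=1-8/\kappa$; the resonant values $8/\kappa\in\mathbb{Z}^+$ require a separate, logarithm-free argument, as anticipated after definition \ref{intervaldefn}). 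Substituting into the remaining PDEs, the leading coefficient $F_0$ --- a function of $x_1,x_3,x_4,\dotsc,x_{2N}$ --- satisfies the null-state PDEs and conformal Ward identities attached to a correlation function $\langle\psi_2(x_1)\psi_1(x_3)\dotsm\psi_1(x_{2N})\rangle$ carrying one two-leg boundary operator. One then develops the analogues of lemmas \ref{boundedlem}--\ref{welldef} for the corresponding ``one two-leg operator'' solution space; since the intervals $(x_3,x_4),\dotsc,(x_{2N-1},x_{2N})$ remain two-leg for $F_0$ (again by commuting collapses), iterating the collapse --- which successively introduces higher-leg operators --- eventually produces a correlation function with too few insertions to be nonzero given the conformal weights present (bottoming out, e.g., at the one-point function of a non-identity primary, which vanishes by scale covariance). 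Propagating this vanishing back up the recursions forces $F_0\equiv0$; repeating the argument at the other internal intervals shows $F$ vanishes to arbitrarily high order along every diagonal $\{x_i=x_{i+1}\}$, and a unique-continuation step (also carried out in \cite{florkleb2}) upgrades this to $F\equiv0$ on $\Omega_0$.

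Thus the genuine difficulty is concentrated in this last step: establishing the existence of the relevant limits, the boundary regularity, and the well-definedness of iterated collapses for the auxiliary ``one two-leg operator'' solution spaces, and then executing the descent to a vanishing low-point function --- each of which requires repeating, in a more elaborate setting, the Green-function and Schauder-estimate arguments used here for lemmas \ref{boundedlem}--\ref{SN-1lem}. This is precisely why the proof of lemma \ref{alltwoleglem} is deferred to the second article \cite{florkleb2}.
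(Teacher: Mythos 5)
Your preliminary steps are sound but do not touch the substance of the lemma, and the substance is exactly what your proposal leaves unproven. The base case $N=2$ via (\ref{S2})--(\ref{Pi1}) is fine, and the inductive reduction---commuting $\bar{\ell}_1$ with the collapses $\bar{\ell}_i$, $i\geq3$, via the $M=2$ case of lemma \ref{welldef}, applying the induction hypothesis to $\bar{\ell}_1F\in\mathcal{S}_{N-1}$, and concluding that $(x_1,x_2)$ is also a two-leg interval of $F$---is correct as far as it goes. But this only adds one interval to the list of two-leg intervals; the induction hypothesis then plays no further role, and what remains (a nonzero $F\in\mathcal{S}_N$ cannot have every interval $(x_i,x_{i+1})$, $1\leq i\leq 2N-1$, two-leg) is essentially the full content of the lemma. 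Your treatment of that step is a heuristic sketch, not a proof: it appeals to Frobenius expansions in the two-leg channel, to solution spaces ``with one two-leg operator'' and to analogues of lemmas \ref{boundedlem}--\ref{welldef} for them, to a descent through ever-higher-leg channels bottoming out at a vanishing low-point function, and to a closing unique-continuation argument---none of which is established in this article or follows from the lemmas you are allowed to cite.

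It is true that the paper itself does not prove lemma \ref{alltwoleglem} here (the proof occupies the entirety of \cite{florkleb2}), so deferring the hard part is consistent with the paper's structure; but the route you sketch for the deferred part is essentially the one the authors explicitly argue against. Iterating two-leg collapses produces $s$-leg behavior, i.e.\ it requires controlling $F(\boldsymbol{x})$ as arbitrarily many coordinates collide, and the discussion following the lemma states that determining such behavior is ``likely impossible to do with our present results''; note also the counterexample (\ref{counterexample}), which shows any successful argument must exploit the second and third Ward identities in an essential way. The proof actually carried out in \cite{florkleb2} is instead organized around the multiple-SLE$_\kappa$/topological argument given beneath the lemma statement, and is engineered so that only collisions of at most three coordinates ever need to be analyzed. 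So, beyond the missing details, the strategy you propose for the core step diverges from the paper's, and in a direction the authors identify as intractable.
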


The physical motivation for this lemma is as follows.  The discussion in section \ref{survey} implies that (at least in the case of percolation where $\kappa=6$) the evolving curves anchored to the endpoints of a two-leg interval are conditioned to not join together and form one boundary arc in the long-time limit of multiple SLE$_\kappa$.  If all of the intervals among $(x_2,x_3)$, $(x_3,x_4),\ldots,(x_{2N-2},x_{2N-1})$ and $(x_{2N-1},x_{2N})$ are two-leg intervals, then there cannot be an interval among these whose endpoints are joined by a single boundary arc.  But, topological considerations show that no such boundary arc connectivity exists, thus implying lemma \ref{alltwoleglem}.  Lemma \ref{alltwoleglem} is therefore both simple and natural.  However, a simple proof does not seem possible, and in fact the entirety of \cite{florkleb2} is devoted to a proof of it.  In the next three paragraphs, we explore  this circumstance.

Lemma \ref{alltwoleglem} is not necessarily true if $F$ does not satisfy the conformal Ward identities (\ref{wardid}).  Indeed (if we can adapt the definition of a ``two-leg interval" to a solution of this smaller system (\ref{nullstate})), we have the following counterexample:
\be\label{counterexample}F(\boldsymbol{x})=\prod_{i<j}^{2N}(x_j-x_i)^{2/\kappa}.\ee
This function (\ref{counterexample}) satisfies the null-state PDEs (\ref{nullstate}) and only the first conformal Ward identity (\ref{wardid}), and each interval $(x_i,x_{i+1})$ with $i<2N$ is a two-leg interval of it.  Thus, a proof of lemma \ref{alltwoleglem} must use the other identities of (\ref{wardid}).

The weak maximum principle almost justifies lemma \ref{alltwoleglem} if $\kappa\in(0,6]$ and $(x_1,x_2)$ is also a two-leg interval of $F$, but the argument encounters technical difficulties.  Such a proof might proceed as follows.  With $x_1=a$, $x_{2N}=b$, and
\be\begin{gathered}\label{FR} \Omega_{a,b}:=\{(x_2,x_3,\ldots,x_{2N-1})\in\mathbb{R}^{2N-2}\,|\,a<x_2<x_3<\ldots<x_{2N-1}<b\},\\
F_R:\Omega_{a,b}\rightarrow\mathbb{R},\quad F_R(x_2,x_3,\ldots,x_{2N-1})=F(a,x_2,x_3,\ldots,x_{2N-1},b),\end{gathered}\ee
we use the first two Ward identities of (\ref{wardid}) to eliminate all derivatives with respect to $x_1$ and $x_{2N}$ in the null-state PDE (\ref{nullstate}) centered on $x_j$ with $j\in\{2,3,\ldots,2N-1\}$.  This PDE becomes
\begin{multline}\label{subelliptic}\Bigg[\frac{\kappa}{4}\partial_j^2+\sum_{k\neq j,1,2N}\left(\frac{\partial_k}{x_k-x_j}-\frac{(6-\kappa)/2\kappa}{(x_k-x_j)^2}\right)-\sum_{k\neq1,2N}\frac{(x_k-a)\partial_k}{(b-x_j)(x_j-a)}\\
+\sum_{k\neq 1,2N}\frac{\partial_k}{x_j-a}-\frac{N(6/\kappa-1)}{(b-x_j)(x_j-a)}-\frac{(6-\kappa)/2\kappa}{(x_j-a)^2}-\frac{(6-\kappa)/2\kappa}{(b-x_j)^2}\Bigg]F_R(x_2,x_3,\ldots,x_{2N-1})=0.\end{multline}
After summing (\ref{subelliptic}) over $j\in\{2,3,\ldots,2N-1\}$, we find a strictly elliptic PDE with a nonpositive constant term.  By hypothesis, all intervals of $F_R$ are two-leg intervals, and because of this, one may show (see lemma \red{5} of \cite{florkleb2}) that $F_R$ continuously extends to and equals zero on $\partial\Omega_{a,b}\setminus\pi_{1,2N}(E)$, where $E$ is the set of points $(a,x_2,x_3,\ldots,x_{2N-1},b)\in\partial\Omega_0$ with three or more coordinates equal.  If $F_R$ continuously extends to and equals zero on all of $\partial\Omega_{a,b}$, then the weak maximum principle \cite{giltru} implies that $F$ is zero.  However, we have not found a way to derive this vanishing extension, nor have we found bounds on the growth of $F_R$ near $\pi_{1,2N}(E)$ for which we could use a Phragm\'{e}n-Lindel\"{o}f maximum principle \cite{protter} to skirt this issue.  Furthermore, this principle would be difficult to apply because the coefficients of (\ref{subelliptic}) are not bounded as the points in $\pi_{1,2N}(E)$ are approached.

These arguments suggest that we must determine the behavior of $F\in\mathcal{S}_N$ as more than two coordinates simultaneously come together in order to prove lemma \ref{alltwoleglem}.  Moreover, these arguments do not limit the number of simultaneously converging coordinates that we must consider.  Explicitly determining the behavior of $F_R$ (\ref{FR}) as, for example, all coordinates of a point in $\Omega_{a,b}$ simultaneously approach $a$ is likely impossible to do with our present results.  Therefore, we instead prove this lemma in \cite{florkleb2} with a method based on a multiple-SLE$_\kappa$ argument.  This different method has a major advantage.  It only requires us to determine the behavior of $F(\boldsymbol{x})$ as no more than three coordinates of $\boldsymbol{x}$ simultaneously approach each other, an attainable goal.

If we assume lemma \ref{alltwoleglem}, then we immediately obtain the upper bound $\dim\mathcal{S}_N\leq C_N$.
\begin{lem}\label{dimupperboundlem} Suppose that $\kappa\in(0,8)$ and $F\in\mathcal{S}_N$, and let $v:\mathcal{S}_N\rightarrow\mathbb{R}^{C_N}$ be the map with the $\varsigma$th coordinate of $v(F)$ equaling $[\mathscr{L}_\varsigma]F$ for $[\mathscr{L}_\varsigma]\in\mathscr{B}_N^*$.  Then $v$ is a linear injection, and $\dim\mathcal{S}_N\leq C_N$.
\end{lem}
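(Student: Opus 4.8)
The plan is to show that the linear map $v$ is injective; linearity is automatic, since each coordinate functional $F\mapsto[\mathscr{L}_\varsigma]F$ is a composition of the limit operations supplied by lemmas \ref{limitlem} and \ref{SN-1lem} (each linear in $F$), and each such composition is well-defined by lemma \ref{welldef}. Once injectivity is in hand, the dimension theorem of linear algebra gives $\dim\mathcal{S}_N=\dim v(\mathcal{S}_N)\leq\dim\mathbb{R}^{C_N}=C_N$. I would prove injectivity by induction on $N$. When $N=1$, formula (\ref{S1}) identifies $\mathcal{S}_1$ with the span of $(x_2-x_1)^{1-6/\kappa}$, and the unique class $[\mathscr{L}_1]\in\mathscr{B}_1^*$ sends this function to $\lim_{\delta\downarrow0}\delta^{6/\kappa-1}\delta^{1-6/\kappa}=1\neq0$, so $v$ is injective in this case.

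For the inductive step, suppose the analogous map $v_{N-1}:\mathcal{S}_{N-1}\to\mathbb{R}^{C_{N-1}}$ built from $\mathscr{B}_{N-1}^*$ is injective, let $F\in\mathcal{S}_N$ satisfy $v(F)=0$ (i.e.\ $[\mathscr{L}_\varsigma]F=0$ for every $\varsigma\in\{1,2,\ldots,C_N\}$), and suppose toward a contradiction that $F\neq0$. Lemma \ref{alltwoleglem} then prevents all of $(x_2,x_3),(x_3,x_4),\ldots,(x_{2N-1},x_{2N})$ from being two-leg intervals of $F$, so there is an index $i\in\{2,3,\ldots,2N-1\}$ for which the limit $F_0:=\bar{\ell}_jF$ that collapses $(x_i,x_{i+1})$, namely $F_0\circ\pi_{i,i+1}=\lim_{x_{i+1}\to x_i}(x_{i+1}-x_i)^{6/\kappa-1}F$, does not vanish identically; by lemma \ref{SN-1lem}, $F_0$ is then a nonzero element of $\mathcal{S}_{N-1}$.

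The heart of the argument is a combinatorial matching of maps. Since $x_i$ and $x_{i+1}$ are adjacent, the arc joining them in the closed upper half-plane separates no other coordinate of $\boldsymbol{x}$. Hence, if $\mathscr{L}'_{\varsigma'}$ is any allowable sequence of $N-1$ limits on the remaining $2N-2$ coordinates $\{x_k\}_{k\neq i,i+1}$ whose diagram realizes the $\varsigma'$th connectivity, then $\mathscr{L}'_{\varsigma'}\bar{\ell}_j$ (perform $\bar{\ell}_j$ first, then the limits of $\mathscr{L}'_{\varsigma'}$) is itself an allowable sequence of $N$ limits in the sense of definition \ref{cslsdefn}: its half-plane diagram is that of $\mathscr{L}'_{\varsigma'}$ together with the extra arc $(x_i,x_{i+1})$, collapsed first, and the ordering conditions \ref{thefirstitem} and \ref{theseconditem} of definition \ref{cslsdefn} are met because nothing lies strictly between $x_i$ and $x_{i+1}$. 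Therefore its equivalence class is one of $[\mathscr{L}_1],\ldots,[\mathscr{L}_{C_N}]\in\mathscr{B}_N^*$, and lemma \ref{welldef} gives
\be [\mathscr{L}'_{\varsigma'}]F_0=\mathscr{L}'_{\varsigma'}(\bar{\ell}_jF)=(\mathscr{L}'_{\varsigma'}\bar{\ell}_j)F=[\mathscr{L}_\varsigma]F=0. \ee
Since each of the $C_{N-1}$ connectivities on $2N-2$ points is realized by some such $\mathscr{L}'_{\varsigma'}$, this says $v_{N-1}(F_0)=0$. The inductive hypothesis then forces $F_0=0$, contradicting the choice of $i$. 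Hence $F=0$, $v$ is injective, and $\dim\mathcal{S}_N\leq C_N$.

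The serious ingredient here is lemma \ref{alltwoleglem}, whose proof takes up the whole of \cite{florkleb2} and is what allows the contradiction above to get started. Within the present argument, the only point requiring care is the verification that prepending the arc $(x_i,x_{i+1})$ to an allowable sequence on $2N-2$ coordinates again yields an allowable sequence on $2N$ coordinates, together with the bookkeeping identity $(\mathscr{L}'_{\varsigma'}\bar{\ell}_j)F=\mathscr{L}'_{\varsigma'}F_0$; both follow directly from definitions \ref{cslsdefn} and \ref{equivdefn} and from lemmas \ref{limitlem}, \ref{SN-1lem}, and \ref{cslslem}.
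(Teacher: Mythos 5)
Your proof is correct, and it rests on exactly the same key ingredient as the paper's: lemma \ref{alltwoleglem} supplies a non-two-leg interval of any nonzero solution, and lemmas \ref{limitlem}, \ref{SN-1lem}, \ref{cslslem}, and \ref{welldef} make the collapses and their classes well-defined. The difference is organizational. The paper argues directly rather than by induction: given $F\neq0$, it applies lemma \ref{alltwoleglem} adaptively at every level, collapsing a non-two-leg interval of $F$, then of $\bar{\ell}_1F\in\mathcal{S}_{N-1}\setminus\{0\}$, and so on down to $\mathcal{S}_1$, thereby producing a single allowable sequence $\mathscr{L}=\bar{\ell}_N\dotsm\bar{\ell}_1$ with $\mathscr{L}F\neq0$; since $\mathscr{L}$ represents some class in $\mathscr{B}_N^*$, this one functional already shows $v(F)\neq0$. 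Your induction instead uses lemma \ref{alltwoleglem} once per step and pushes the full hypothesis $v(F)=0$ down to $v_{N-1}(F_0)=0$, which obliges you to verify the extra combinatorial facts that every $(N-1)$-connectivity is realized by an allowable sequence and that prepending the arc $(x_i,x_{i+1})$ to such a sequence again satisfies definition \ref{cslsdefn} (your verification of this is sound: since $x_i,x_{i+1}$ are adjacent, the added arc crosses nothing, the ordering conditions for later limits are untouched, and the removed pair never straddles a later collapsed interval). So your route buys a cleaner "reduce to $v_{N-1}$" statement at the cost of this extension bookkeeping and a heavier use of lemma \ref{welldef} at two levels, whereas the paper's greedy construction avoids both; unwound, the two arguments perform the same sequence of collapses.
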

\begin{proof}The map $v$ is clearly linear.  To show that it is injective, we argue that its kernel is trivial.  Supposing that $F$ is not zero, we construct an allowable sequence of limits $\mathscr{L}$ such that $\mathscr{L}F$ is not zero as follows.  According to lemma \ref{alltwoleglem}, there is some index $i_1=i\in\{2,3,\ldots,2N\}$ such that the interval $(x_{i_1},x_{i_2})=(x_i,x_{i+1})$ is not a two-leg interval of $F$.  Invoking definition \ref{cslsdefn} and lemma \ref{SN-1lem}, we let $\bar{\ell}_1$ collapse this interval, obtaining $\bar{\ell}_1F\in\mathcal{S}_{N-1}\setminus\{0\}$.  Now because, $\bar{\ell}_1F$ is not zero, there is some other index $i_3\in\{2,3,\ldots,2N\}\setminus\{i_1,i_2\}$ such that the interval $(x_{i_3},x_{i_4})$ is not a two-leg interval of $\bar{\ell}_1F$.  We let $\bar{\ell}_2$ collapse this interval next, obtaining $\bar{\ell}_2\bar{\ell}_1F\in\mathcal{S}_{N-2}\setminus\{0\}$.  Repeating this process $N-3$ more times leaves us with $\bar{\ell}_{N-1}\dotsm\bar{\ell}_2\bar{\ell}_1F\in\mathcal{S}_1\setminus\{0\}$ with the form (\ref{S1}), and a final limit $\bar{\ell}_N$ sends it to a nonzero real number.  Now because $\mathscr{L}F:=\bar{\ell}_{N}\dotsm\bar{\ell}_2\bar{\ell}_1F\neq0$, it follows that $v(F)\neq0$, and $\ker v=\{0\}$.   Finally, the dimension theorem of linear algebra then implies that $\dim\mathcal{S}_N\leq C_N$.
\end{proof}

To finish, we state a corollary that justifies our exclusive consideration of the system of PDEs (\ref{nullstate}, \ref{wardid}) with an even number of independent variables.  The proofs of lemmas \ref{boundedlem}--\ref{SN-1lem}, \ref{cslslem}, \ref{welldef}, and \ref{alltwoleglem} do not required this number to be even, so they remain true if it is odd.  We therefore define $\mathcal{S}_{N+1/2}$ exactly as we define $\mathcal{S}_N$ for the system of PDEs (\ref{nullstate}, \ref{wardid}) but with $2N+1$ independent variables.  

\begin{cor}\label{trivialcor}Suppose that $\kappa\in(0,8)$ and $F\in\mathcal{S}_{N+1/2}$.  If $\kappa\neq6$, then $\mathcal{S}_{N+1/2}=\{0\}$, and if $\kappa=6$, then $\mathcal{S}_{N+1/2}=\mathbb{R}$.
\end{cor}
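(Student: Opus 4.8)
The plan is to argue by induction on $N$, using the observation recorded just above the corollary that lemmas \ref{boundedlem}--\ref{SN-1lem}, \ref{cslslem}, \ref{welldef} and \ref{alltwoleglem}, together with definitions \ref{cslsdefn} and \ref{intervaldefn}, remain valid when the number of independent variables is the odd integer $2N+1$. Throughout I abbreviate by $\bar{\ell}^{i}$ the limit of definition \ref{cslsdefn} that collapses the interval $(x_i,x_{i+1})$; by lemma \ref{SN-1lem} it maps $\mathcal{S}_{N+1/2}$ into $\mathcal{S}_{N-1/2}$, and when $\kappa=6$ it sends a constant function to the constant function of the same value, since then $6/\kappa-1=0$.

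First I would dispose of the one-variable case $\mathcal{S}_{1/2}$ by direct calculation. The single null-state PDE in (\ref{nullstate}) reads $(\kappa/4)\partial_1^2F=0$, so $F$ is affine; the first identity in (\ref{wardid}) forces $\partial_1F=0$, so $F$ is constant; the second identity in (\ref{wardid}) then reads $\big((6-\kappa)/2\kappa\big)F=0$, which forces $F=0$ unless $\kappa=6$. Conversely, when $\kappa=6$ the weight $\theta_1=(6-\kappa)/2\kappa$ vanishes, every constant satisfies all of (\ref{nullstate}, \ref{wardid}), and every constant obeys (\ref{powerlaw}) because the product on its right side is at least $1$ everywhere on $\Omega_0$. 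Hence $\mathcal{S}_{1/2}=\{0\}$ if $\kappa\neq6$ and $\mathcal{S}_{1/2}=\mathbb{R}$ if $\kappa=6$; the same observation shows that, for $\kappa=6$, constants lie in $\mathcal{S}_{N+1/2}$ for every $N$.

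For the inductive step when $\kappa\neq6$ I would mimic the proof of lemma \ref{dimupperboundlem}: if $F\in\mathcal{S}_{N+1/2}\setminus\{0\}$ with $N\geq1$, then the odd-variable lemma \ref{alltwoleglem} supplies an index $i\in\{2,3,\ldots,2N\}$ for which $(x_i,x_{i+1})$ is not a two-leg interval of $F$, so $\bar{\ell}^{i}F$ is a nonzero element of $\mathcal{S}_{N-1/2}$ by lemma \ref{SN-1lem}, contradicting the induction hypothesis $\mathcal{S}_{N-1/2}=\{0\}$; thus $\mathcal{S}_{N+1/2}=\{0\}$. When $\kappa=6$ and $N=1$ a similar trick works directly: for $F\in\mathcal{S}_{3/2}$ the only interval in lemma \ref{alltwoleglem} is $(x_2,x_3)$, so putting $c:=\bar{\ell}^{2}F\in\mathcal{S}_{1/2}=\mathbb{R}$ and $G:=F-c\in\mathcal{S}_{3/2}$ gives $\bar{\ell}^{2}G=0$, i.e.\ $(x_2,x_3)$ is a two-leg interval of $G$, so $G=0$ by lemma \ref{alltwoleglem} and $F=c$ is constant; hence $\mathcal{S}_{3/2}=\mathbb{R}$.

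The main obstacle is the inductive step for $\kappa=6$ with $N\geq2$: subtracting a single constant annihilates the boundary limit along only one interval while seemingly spoiling it along the others, so I would instead propagate the two-leg property by a commutativity argument. Given $F\in\mathcal{S}_{N+1/2}$, set $c:=\bar{\ell}^{1}F\in\mathcal{S}_{N-1/2}=\mathbb{R}$ (induction hypothesis) and $G:=F-c$; then $\bar{\ell}^{1}G=0$, so $(x_1,x_2)$ is a two-leg interval of $G$. For each $j\in\{3,4,\ldots,2N\}$ the intervals $(x_1,x_2)$ and $(x_j,x_{j+1})$ are disjoint and un-nested, so by lemma \ref{welldef} the two collapses commute: $\bar{\ell}^{1}\bar{\ell}^{j}G=\bar{\ell}^{j}\bar{\ell}^{1}G=\bar{\ell}^{j}(0)=0$. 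Since $\bar{\ell}^{j}G\in\mathcal{S}_{N-1/2}=\mathbb{R}$ is a constant and $\bar{\ell}^{1}$ preserves its value, that value must be $0$; hence $\bar{\ell}^{j}G=0$ and $(x_j,x_{j+1})$ is a two-leg interval of $G$. Running the identical argument once more with $\bar{\ell}^{4}$ in place of $\bar{\ell}^{1}$ — legitimate because $N\geq2$ ensures $(x_4,x_5)$ exists, $(x_4,x_5)$ is already known two-leg for $G$, and $(x_4,x_5),(x_2,x_3)$ are disjoint and un-nested — shows $(x_2,x_3)$ is also a two-leg interval of $G$. Now all of $(x_2,x_3),(x_3,x_4),\ldots,(x_{2N},x_{2N+1})$ are two-leg intervals of $G$, so lemma \ref{alltwoleglem} forces $G=0$ and $F=c$ is constant. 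Together with the base cases, this establishes $\mathcal{S}_{N+1/2}=\{0\}$ for $\kappa\neq6$ and $\mathcal{S}_{N+1/2}=\mathbb{R}$ for $\kappa=6$.
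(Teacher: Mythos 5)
Your proof is correct and follows essentially the same route as the paper's: induction on the number of variables, with lemma \ref{SN-1lem} guaranteeing that each collapse lands in $\mathcal{S}_{N-1/2}$, the commutation of collapses from lemma \ref{welldef} driving the $\kappa=6$ case, and lemma \ref{alltwoleglem} forcing the difference from a constant to vanish. The only departures are cosmetic: you subtract the constant $\bar{\ell}^{1}F$ first and propagate zeros through the commutations (anchoring at $(x_1,x_2)$ and then at $(x_4,x_5)$), and you handle $\mathcal{S}_{3/2}$ at $\kappa=6$ by the same subtraction trick instead of the paper's explicit three-point computation (\ref{3pt}), whereas the paper first shows all collapse values coincide with a common $F_0$ and then subtracts it.
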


\begin{proof} We prove the lemma by induction.  First, if $\kappa\neq6$, then apparently only the trivial solution satisfies the system (\ref{nullstate}, \ref{wardid}) with one independent variable $x_1$, so $\mathcal{S}_{1/2}=\{0\}$.  Now, we suppose that $\mathcal{S}_{N-1/2}=\{0\}$ for some $N\geq1$.  Lemma \ref{SN-1lem} shows that collapsing any interval $(x_i,x_{i+1})$ with $i\in\{2,3,\ldots,2N\}$ sends $F\in\mathcal{S}_{N+1/2}$ to an element of $\mathcal{S}_{N-1/2}$, and by the induction hypothesis, this element is zero.  Thus, each of these intervals is a two-leg interval of $F$ according to definition \ref{intervaldefn}, so lemma \ref{alltwoleglem} implies that $F$ is zero.  We conclude that $\mathcal{S}_{N+1/2}=\{0\}$.

Next, we consider the case $\kappa=6$.  It is obvious from the PDEs (\ref{nullstate}, \ref{wardid}) with $\kappa=6$ that $\mathbb{R}$ is a subspace of $\mathcal{S}_{N+1/2}$ for all $N\geq0$.  Furthermore, only the real numbers satisfy the system (\ref{nullstate}, \ref{wardid}) with one independent variable $x_1$ or with three independent variables $x_1$, $x_2,$ and $x_3$, so $\mathcal{S}_{1/2}=\mathcal{S}_{3/2}=\mathbb{R}$.  Indeed, for the latter case, all solutions of the conformal Ward identities (\ref{wardid}) have the structure of a three-point CFT correlation function
\be\label{3pt}F(x_1,x_2,x_3)\propto(x_2-x_1)^{-h_1-h_2+h_3}(x_3-x_1)^{-h_3-h_1+h_2}(x_3-x_2)^{-h_2-h_3+h_1}\ee
with $h_j$ the conformal weight of $x_j$.  Because these conformal weights are $\theta_1=(6-\kappa)/2\kappa=0$, (\ref{3pt}) is a real number.  Now, we suppose that $\mathcal{S}_{N-1/2}=\mathbb{R}$ for some $N>1$.  Lemma \ref{SN-1lem} shows that collapsing any interval $(x_i,x_{i+1})$ with $i\in\{2,3,\ldots,2N\}$ sends $F\in\mathcal{S}_{N+1/2}$ to an element of $\mathcal{S}_{N-1/2}$, and by the induction hypothesis, this element is a real number.  Furthermore, for any $i,j\in\{2,3,\ldots,2N\}$ with $j>i+1$, we consider the two limits (definition \ref{cslsdefn} with $\kappa=6$)
\be\label{barij}(\bar{\ell}_1F\circ\pi_{i,i+1})(\boldsymbol{x})\,\,\,:=\lim_{x_{i+1}\rightarrow x_i}F(\boldsymbol{x})\in\mathcal{S}_{N-1/2}=\mathbb{R},\qquad(\bar{\ell}_2F\circ\pi_{j,j+1})(\boldsymbol{x})\,\,\,:=\lim_{x_{j+1}\rightarrow x_j}F(\boldsymbol{x})\in\mathcal{S}_{N-1/2}=\mathbb{R}.\ee
According to lemma \ref{welldef}, $\bar{\ell}_1\bar{\ell}_2F=\bar{\ell}_2\bar{\ell}_1F$ (\ref{precommute}).  But because $\bar{\ell}_1F,\bar{\ell}_2F\in\mathbb{R}$, we have $\bar{\ell}_1\bar{\ell}_2F=\bar{\ell}_2F$ and $\bar{\ell}_2\bar{\ell}_1F=\bar{\ell}_1F$ too.  It immediately follows that $\bar{\ell}_2F=\bar{\ell}_1F$ for all $i,j\in\{2,3,\ldots,2N\}$ with $j>i+1$.  With $2N>3$, this last equation implies that for all $i\in\{2,3,\ldots,2N\}$, $\bar{\ell}_1F$ is the same real number $F_0$, and because $\mathbb{R}\subset\mathcal{S}_{N+1/2}$, we have $F-F_0\in\mathcal{S}_{N+1/2}$.  Now for all $i\in\{2,3,\ldots,2N\}$, $\bar{\ell}_1(F-F_0)=0$, so $(x_i,x_{i+1})$ is a two-leg interval of $F-F_0$ by definition \ref{intervaldefn}.  Lemma \ref{alltwoleglem} then implies that $F-F_0=0$, so $F\in\mathbb{R}$ and $\mathcal{S}_{N+1/2}=\mathbb{R}$.
\end{proof}

We reconsider the induction argument in the proof of corollary \ref{trivialcor} with $\kappa=6$ and an even number of independent variables.  Because $\mathcal{S}_1=\mathbb{R}$ (\ref{S1}), this argument might seem to imply that $\mathcal{S}_N=\mathbb{R}$ for all $N>1$.  But as $\dim\mathcal{S}_2=2$, a fact found by explicitly solving the system (\ref{nullstate}, \ref{wardid}) with $N=2$ in section \ref{objorg}, the induction step that would take us from $\mathcal{S}_1$ to $\mathcal{S}_2$ evidently must break down.  Out of interest, we investigate how this happens.  We let $\bar{\ell}_1$, $\bar{\ell}_3$, and $\bar{\ell}_2$ collapse the intervals $(x_1,x_2)$, $(x_2,x_3)$, and $(x_3,x_4)$ respectively.  Thus, $\bar{\ell}_1$ (resp.\ $\bar{\ell}_2$) acts on $F\in\mathcal{S}_2$ as shown in (\ref{barij}) with $i=1$ (resp.\ $j=3$).  Now, lemma \ref{welldef} implies that $\bar{\ell}_1\bar{\ell}_2F=\bar{\ell}_2\bar{\ell}_1F$, so $\bar{\ell}_2F=\bar{\ell}_1F\in\mathcal{S}_1=\mathbb{R}$ are the same real number.  However, lemma \ref{welldef} does not give a similar equation relating $\bar{\ell}_3F$ to the other two limits because $(x_2,x_3)$ is adjacent to their respective intervals.  So while all three of these limits are real numbers, only $\bar{\ell}_2F$ and $\bar{\ell}_1F$ are necessarily equal while $\bar{\ell}_3F\in\mathcal{S}_1=\mathbb{R}$ may not equal either of them.  For example, if $F=1\in\mathcal{S}_2$, then $\bar{\ell}_2F=\bar{\ell}_1F=\bar{\ell}_3F$.  However, if $F$ is given by (\ref{4ptansatz}) with $G=G_1$ (\ref{Pi1}), then $\bar{\ell}_2F=\bar{\ell}_1F=0$ while $\bar{\ell}_3F\neq0$.  Thus, this induction step fails at $N=2$, and the solution space $\mathcal{S}_N$ with $N>2$ and $\kappa=6$ contains $\mathbb{R}$ as a proper subspace.

\section{Summary}

In this article, we study a solution space for the system of PDEs (\ref{nullstate}, \ref{wardid}) with $\kappa\in(0,8)$ the Schramm-L\"owner evolution (SLE$_\kappa$) parameter.  These PDEs govern multiple-SLE$_{\kappa}$ partition functions and CFT $2N$-point correlation functions of one-leg boundary operators (\ref{onelegcorr}).  In fact, systems (\ref{nullstate}) and (\ref{wardid}) are respectively the conformal field theory (CFT) null-state conditions and conformal Ward identities for these correlation functions.  Roughly speaking, such correlation functions are partition functions for continuum limits of statistical cluster or loop models such as percolation, or more generally the Potts models and O$(n)$ models, in polygons and at the statistical mechanical critical point.  These partition functions exclusively sum over side-alternating free/fixed boundary condition events.
 
Focusing on the space $\mathcal{S}_N$ of solutions for this system that grow no faster than a power law (\ref{powerlaw}), we use techniques of analysis and linear algebra to rigorously establish various facts about solutions in this space.  Although this system (\ref{nullstate}, \ref{wardid}) arises in CFT in a way that is typically non-rigorous, our treatment of this system here and in \cite{florkleb2,florkleb3,florkleb4} is completely rigorous.  In section \ref{boundary behavior}, we prove three lemmas showing that the behavior of any such solution $F(\boldsymbol{x})$ as $x_{i+1}\rightarrow x_i$ is consistent with the formalism of the CFT operator product expansion (OPE), which is generally assumed to be valid in the physics literature.  In lemma \ref{boundedlem}, we state that any solution $F(\boldsymbol{x})$ is $O((x_{i+1}-x_i)^{1-6/\kappa})$ as we bring together (only) two adjacent coordinates $x_i$ and $x_{i+1}$ of the point $\boldsymbol{x}$.   Lemmas \ref{limitlem} and \ref{SN-1lem} strengthen this result by proving, respectively, that the limit of $(x_{i+1}-x_i)^{6/\kappa-1}F(\boldsymbol{x})$ as  $x_{i+1} \to x_i$ exists  and is an element of $\mathcal{S}_{N-1}$. Next, in sections \ref{dualspace} and \ref{upperbound}, we prove that $\dim\mathcal{S}_N\leq C_N$, where $C_N$ is the $N$th Catalan number (\ref{catalan}), by explicitly constructing $C_N$ elements of the dual space $\mathcal{S}_N^*$ (definition \ref{cslsdefn}) and using them as components of a linear injective map $v$ that embeds $\mathcal{S}_N$ into $\mathbb{R}^{C_N}$ (lemma \ref{dimupperboundlem}).  We stress that our results are completely rigorous, in spite of our occasional asides to matters in CFT that relate to critical lattice models.

In part two \cite{florkleb2} of this series of articles, we prove lemma \ref{alltwoleglem}, and in part three \cite{florkleb3}, we  prove that if $\kappa\in(0,8)$, then $\dim\mathcal{S}_N=C_N$ and we may construct all elements of $\mathcal{S}_N$ using the CFT Coulomb gas (contour integral) formalism.  In part four \cite{florkleb4}, we derive Frobenius series expansions of solutions in $\mathcal{S}_N$, study special solutions called \emph{connectivity weights}, and investigate a connection between certain SLE$_\kappa$ speeds and CFT minimal models.  In a future article \cite{fkz}, we derive a general crossing-probability formula for critical statistical mechanical systems in polygons. 

During the writing of this article, we learned that K.\ Kyt\"ol\"a and E.\ Peltola recently obtained results very similar to ours by using a completely different approach based on quantum group methods \cite{kype,kype2}.

\section{Acknowledgements}

We thank J.\ J.\ H.\ Simmons and J.\ Rauch for insightful conversations and C.\ Townley Flores for proofreading the manuscript.

This work was supported by National Science Foundation Grants Nos.\ PHY-0855335 (SMF) and DMR-0536927 (PK and SMF).

\appendix{}

\section{The conformal field theory perspective}\label{preliminaries}

The purpose of this appendix is to connect the system of PDEs (\ref{nullstate}, \ref{wardid}), which we have examined from a purely mathematical point of view in this article, with certain aspects of CFT and its application to critical lattice models that we have mentioned.  Although this system, to our knowledge, first appeared in CFT, that area of study is not necessary to understand our mathematical results about it. However, to understand our results from the CFT point of view is still useful and interesting.   For this reason, we summarize here some basic facts from the application of CFT to critical lattice models in simply-connected domains with a boundary, keeping in mind that intuitive notions will prevail over rigorous analysis.  Ref.\ \cite{c3,bpz,fms,henkel,c1,c2,gruz,rgbw,fqsh,dotsenko} and many references therein say more on this subject.

In this appendix, we work with the continuum limit of the critical $Q$-state Potts model \cite{wu} in the upper half-plane.  To begin, we turn our attention to conformally invariant boundary conditions (BC) for this system, of which there are several \cite{c2,saulbau,bauber2}.  In the \emph{fixed}, or \emph{wired}, BC, we assign all boundary sites along the real axis to the same state, and we denote by ``$a$" the fixed BC with all boundary sites in state $a$.  Also, in the \emph{free} BC, we do not condition the boundary sites, and we denote this BC by ``$f$."  Finally, we may condition the boundary sites to uniformly exhibit any but one of the $Q$ possible states, and we denote by ``$\slashed{a}$" the BC with the state $a$ excluded.  The random cluster representation of the $Q$-state Potts model \cite{wu} has conformally invariant boundary conditions too.  In the \emph{fixed}, or \emph{wired}, BC, we activate all bonds between boundary lattice sites, and we denote by ``$a$" the fixed BC with all bonds in state $a$.  Also, in the \emph{free} BC, we do not condition bonds between boundary sites, and we denote this BC by ``$f$."

We may condition the BC to change at (or near) a point $x\in\mathbb{R}$ by inserting a \emph{boundary-condition-changing (BCC) operator} $\phi(x)$. (In CFT, the term ``operator" is used to refer to random objects that may or may not have operator characteristics in a given physical system.  They are perhaps best envisioned as random fields.)  BCC operators are examples of \emph{boundary operators}, or primary CFT operators with the system boundary as their domains.  Under a conformal bijection $f$ sending the upper half-plane onto another simply connected planar region, a primary operator transforms according to
\be \phi_h(x)\mapsto \phi_h'(f(x))=|\partial f(x)|^{-h}\phi_h(x),\ee
where $h$ is called the \emph{conformal weight} of $\phi$.  Among them, the \emph{Kac operator} $\phi_{r,s}$ is a (boundary) primary operator whose conformal weight equals the \emph{Kac weight} $h_{r,s}$, given by (here, $c(\kappa)$ is the CFT central charge (\ref{central}))
\be\label{hrs}h_{r,s}(\kappa)=\frac{1-c(\kappa)}{96}\Bigg[\Bigg(r+s+(r-s)\sqrt{\frac{25-c(\kappa)}{1-c(\kappa)}}\,\Bigg)^2-4\Bigg]=\frac{1}{16\kappa}\begin{cases}(\kappa r-4s)^2-(\kappa-4)^2,&\kappa>4\\(\kappa s-4r)^2-(\kappa-4)^2,&\kappa\leq4\end{cases},\quad r,s\in\mathbb{Z}^+.\ee
Kac operators are useful because correlation functions with them satisfy ``null-state" PDEs such as (\ref{nullstate}).  Also, many of them (and other primary operators of conformal weight (\ref{hrs}) with $r$ or $s$ equaling zero or a half-integer) induce certain BCCs or represent other physical quantities.  For instance, in the random cluster model, the $(1,2)$ Kac operators
\be\phi_{1,2}^{af}(x_0),\quad \phi_{1,2}^{fa}(x_0),\ee
respectively induce a \emph{fixed-to-free} and \emph{free-to-fixed} BCC at $x_0$.  Indeed, the former (resp.\ latter) changes the $a$ (resp.\ $f$) BC for $x<x_0$ to the $f$ (resp.\ $a$) BC for $x>x_0$ \cite{c2}.  And in the $Q$-state Potts model, the $(2,1)$ Kac operators, 
\be\phi_{2,1}^{a \slashed{a}}(x_0),\quad \phi_{2,1}^{\slashed{a}a}(x_0),\ee
induce similar ``fixed-to-free" and ``free-to-fixed" BCCs at $x_0$ (although these descriptions are not entirely accurate).  Indeed, the former (resp.\ latter) changes the $a$ (resp.\ $\slashed{a}$) BC for $x<x_0$ to the $\slashed{a}$ (resp.\ $a$) BC for $x>x_0$ \cite{c2}.

\begin{figure}[t!]
\centering
\includegraphics[scale=0.4]{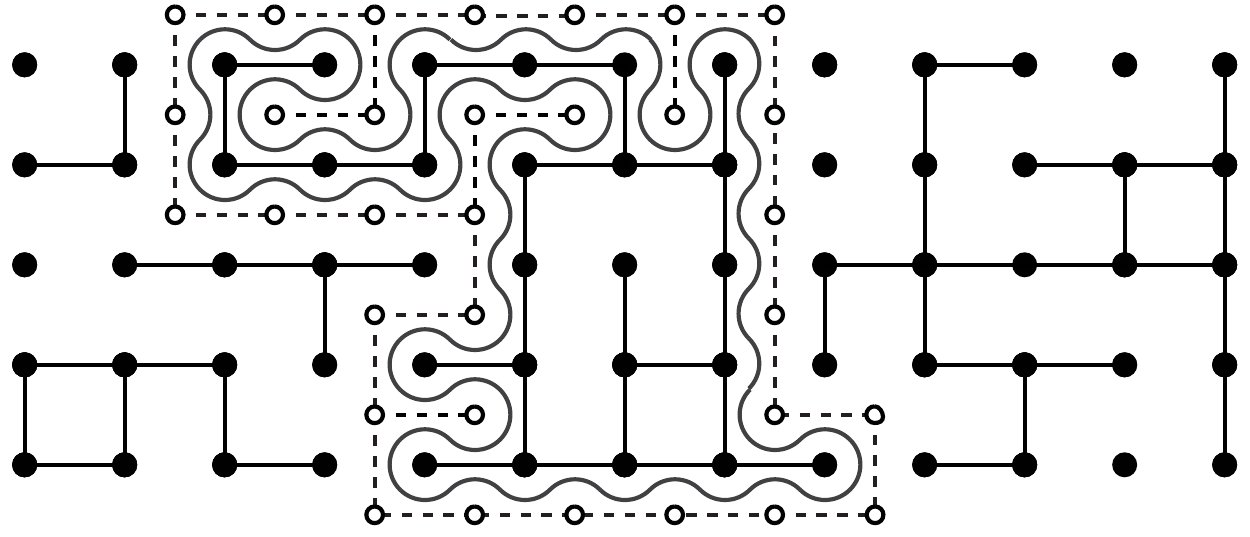}\\
\caption{The winding boundary loop surrounds an FK cluster.  If the states of the sites outside this loop are different from that inside, then the dashed boundary loop surrounds a spin cluster.  The part of either boundary loop that lies above the bottom row of lattice sites is the corresponding boundary arc.}
\label{FKboundaryloop}
\end{figure}

The fixed-to-free or free-to-fixed BCC operator $\phi_{1,2}(x_0)$ (resp.\ $\phi_{2,1}(x_0)$) bears a second interpretation.  As this BCC operator conditions an FK-cluster (resp.\ a spin-cluster) to anchor to the fixed side of $x_0$, it also conditions the interface, or \emph{boundary arc} (figure \ref{FKboundaryloop}), between that cluster and the rest of the system to anchor to $x_0$ \cite{gruz, rgbw}.  For this reason, we also call this BCC operator a \emph{one-leg boundary operator.}  The boundary arc fluctuates in the system domain with the law of multiple SLE$_\kappa$.  In particular, the speed $\kappa$ that models FK-cluster interfaces in the random cluster model and spin-cluster interfaces in the Potts model, both with $Q\in\{2,3,4\}$ states, is related to $Q$ through \cite{smir} 
\be\label{kappaQ}Q=\begin{cases}4\cos^2(4\pi/\kappa), & \text{$4\leq\kappa<8$ (random cluster)} \\ 4\cos^2(\pi\kappa/4), & \text{$0<\kappa\leq4$ (Potts model)}\end{cases}.\ee
With speeds $\kappa$ generating FK-cluster interfaces in the dense phase $(4,\infty)$ of SLE$_\kappa$ and speeds $\hat{\kappa}$ that generate spin-cluster interfaces in the dilute phase $(0,4]$ \cite{rohshr}, we collect $\phi_{1,2}(x_0)$ and $\phi_{2,1}(x_0)$ together with the notation
\be\label{oneleg}\psi_1(x_0):=\begin{cases}\phi_{1,2}(x_0),&\kappa>4 \\ \phi_{2,1}(x_0),&\kappa\leq 4\end{cases}.\ee
After choosing which side of $x_0$ is free, we adopt the convention that $\psi_1(x_0)$ changes the BC from free on that side to any of the fixed states on the other side of $x_0$.  Equation (\ref{hrs}) gives the conformal weight (\ref{theta1}) of $\psi_1$,
\be\theta_1:=\left\{\begin{array}{ll}h_{1,2}, &\kappa>4 \\ h_{2,1}, & \kappa\leq4\end{array}\right\}=\frac{6-\kappa}{2\kappa},\ee
which, incidentally, is a continuous function at the transition $\kappa=4$ from the dilute phase to the dense phase.  We call this conformal weight the \emph{one-leg boundary weight}.

We now explain how the one-leg boundary operator $\psi_1$ induces a BCC.  With a Potts or random cluster model on a lattice in the upper half-plane and $2N$ marked points $x_1<x_2<\ldots<x_{2N}$ in the system boundary, a free-fixed side alternating BC (FFBC) is the event that the intervals $(x_i,x_{i+1})$ with $i$ odd (or even, but not both) exhibit any of the available fixed BCs while the other intervals exhibit the free BC (or vice versa).  By conditioning various fixed intervals to exhibit the same state, we generate different FFBCs.  Now, because the BCCs are fixed to precise locations, the probability of the $\vartheta$th FFBC event is zero.  For this reason, we consider the collection of $\vartheta$th FFBC events with the $i$th BCC occurring within a very small distance $\epsilon_i$ from $x_i$.  If $Z_\vartheta(\epsilon_1,\epsilon_2,\ldots,\epsilon_{2N})$ is the partition function summing over this collection, and $Z_f$ is the free partition function summing over the entire sample space of the system, then the continuum limit of the ratio $Z_\vartheta/Z_f$ should behave as
\be\label{partratio}Z_\vartheta(\epsilon_1,\epsilon_2,\ldots,\epsilon_{2N})/Z_f\underset{\epsilon_i\rightarrow0}{\sim}c_1^{2N}\epsilon_1^{\theta_1}\epsilon_2^{\theta_1}\ldots\epsilon_{2N}^{\theta_1}\langle\psi_1(x_1)\psi_1(x_2)\ldots\psi_1(x_{2N})\rangle_{\vartheta},\ee
where $c_1$ is a non-universal constant associated with $\psi_1$.  We interpret (\ref{partratio}) to say that each one-leg boundary operator actually induces the BCC only somewhere near its location \cite{c3,c2}.  According to the CFT null-state condition \cite{bpz,fms,henkel}, the correlation function in (\ref{partratio}) satisfies the system of PDEs (\ref{nullstate}, \ref{wardid}) central to this article.

In CFT, primary boundary operators exhibit the following algebraic property called \emph{fusion}.  Within a correlation function, we may replace two primary operators $\phi(x_1)$ and $\phi(x_2)$ whose respective points $x_1$ and $x_2$ are close by a sum over other primary operators (and their ``descendants").  We write this replacement as
\be\label{opeform}\phi_1(x_1)\phi_2(x_2)\underset{x_2\rightarrow x_1}{\sim}\sum_{h}C_{12}^h|x_2-x_1|^{-h_1-h_2+h}\phi_h(x_1)+\dotsm,\ee
with $h_1,$ $h_2,$ and $h$ the conformal weights of the primary operators $\phi_1,\phi_2$, and $\phi_h$.
The sum on the right side of (\ref{opeform}), called an \emph{operator product expansion} (OPE), is the product of the fusion of $\phi_1$ with $\phi_2$ appearing on the left side of (\ref{opeform}), and we call each conformal weight $h$ in the sum a \emph{fusion channel}.  If the primary operators appearing on this left side are one-leg boundary operators, then replacing the left side of (\ref{opeform}) by its right side in the correlation function of (\ref{partratio}) implies that the correlation function equals a sum of Frobenius series in powers of $x_2-x_1$.  Because this correlation function satisfies the system of PDEs (\ref{nullstate}, \ref{wardid}), we are naturally led to ask which, if not all, solutions to this system have such expansions.  We use this question to guide the analysis that preceded lemma \ref{boundedlem} in section \ref{boundary behavior}.

\begin{figure}[b!]
\centering
\includegraphics[scale=0.25]{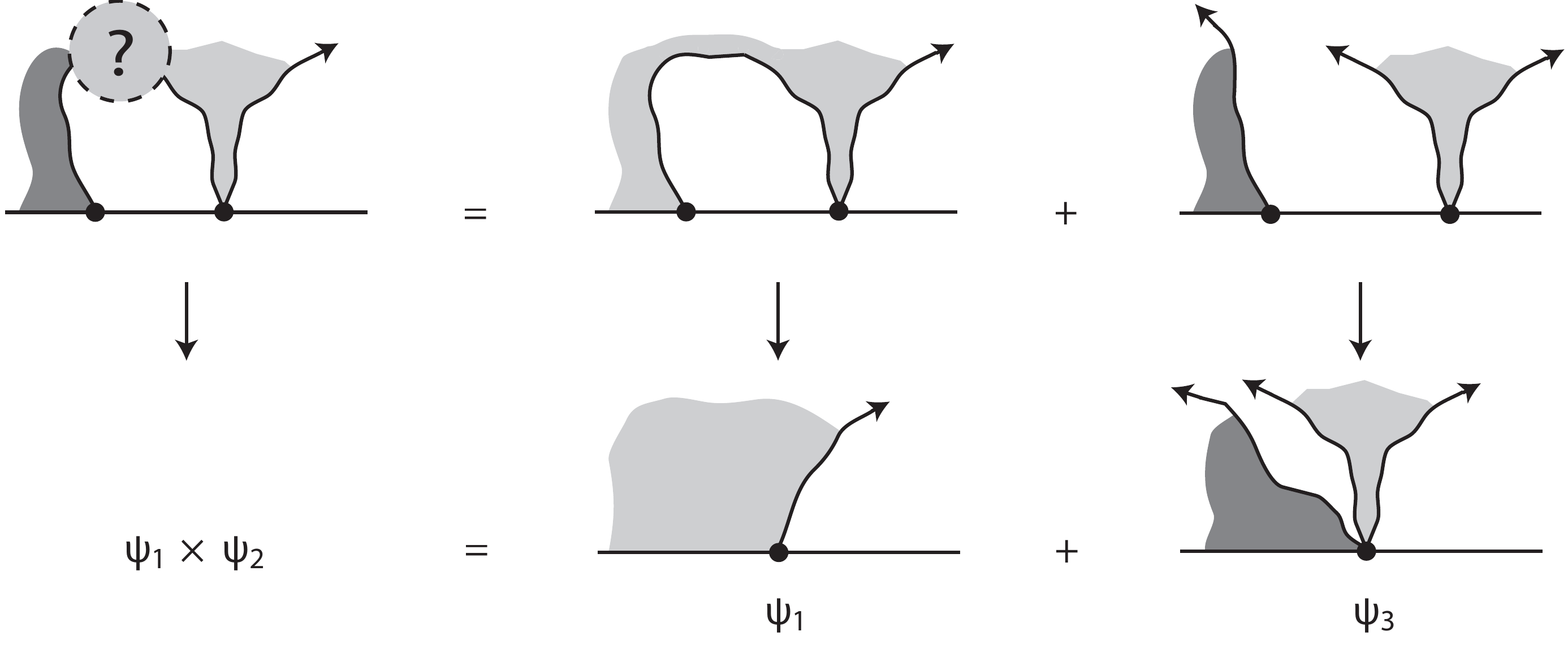}
\caption{The fusion rule $\psi_1\times\psi_2\sim\psi_1+\psi_3$.  Each $\psi_s$ sums over all possible boundary cluster states.  The $\psi_1$ channel appears only if two adjacent boundary clusters exhibit the same state.}
\label{ThreeLegFuse}
\end{figure}

\begin{figure}[t!]
\centering
\includegraphics[scale=0.25]{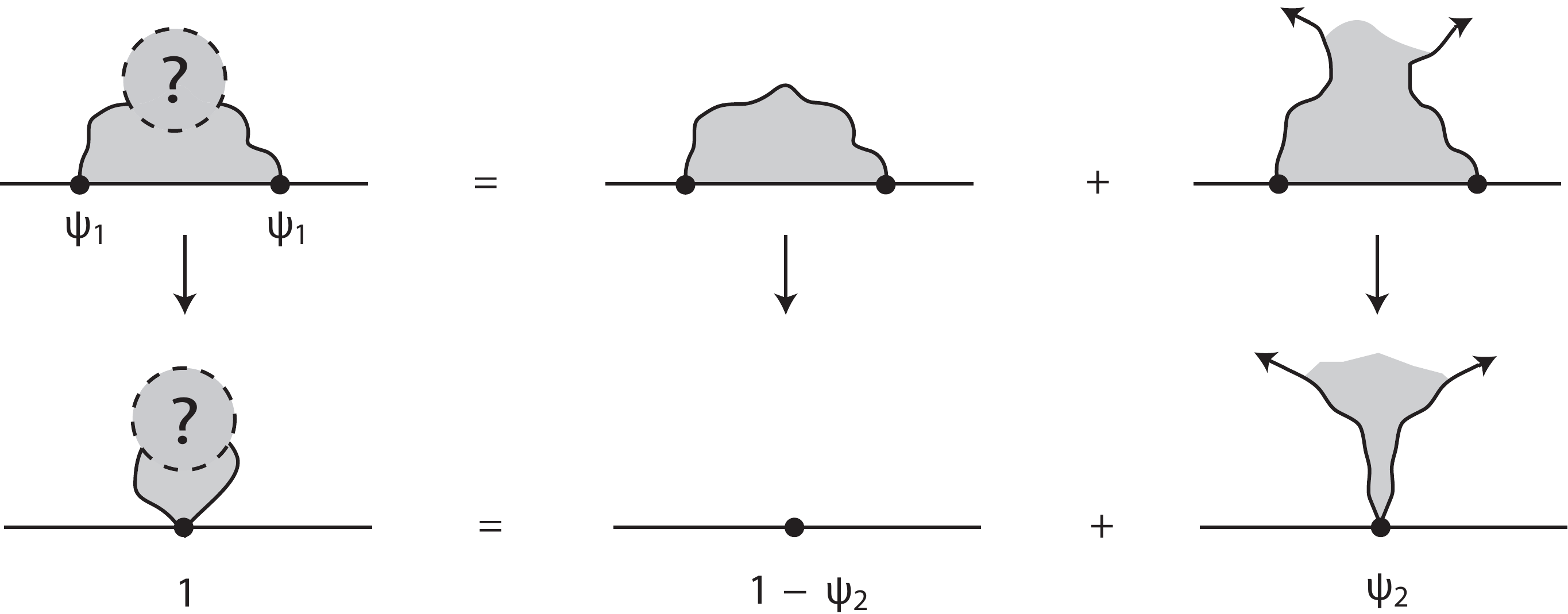}
\caption{The fusion rule $\psi_1\times\psi_1\sim\mathbf{1}+\psi_2.$  The identity (resp.\ two-leg) channel does not (resp.\ does) condition two adjacent boundary arcs to propagate (i.e., to not join together).}
\label{TwoLegFuse}
\end{figure}

If the two operators on the left side of (\ref{opeform}) are Kac operators (as is the case with one-leg boundary operators), then the content of their OPE is strongly constrained.  For example, the OPE of two $\phi_{1,2}$ (resp.\ $\phi_{2,1}$) Kac operators may only contain a $\phi_{1,1}$ (i.e., identity $\mathbf{1}$) Kac operator and/or a $\phi_{1,3}$ (resp.\ $\phi_{3,1}$) Kac operator (and their descendants).  We call this restriction a \emph{fusion rule}, and we write it as (after, as is customary in CFT, dropping the constant, the power of the difference $x_2-x_1$, and the descendant terms for notational conciseness)   
\be\label{thefusionrule}\left\{\begin{array}{l}\phi_{1,2}(x_1)\phi_{1,2}(x_2)\underset{x_2\rightarrow x_1}{\sim}\mathbf{1}+\phi_{1,3}(x_1) \\ 
\phi_{2,1}(x_1)\phi_{2,1}(x_2)\underset{x_2\rightarrow x_1}{\sim}\mathbf{1}+\phi_{3,1}(x_1)\end{array}\right..\ee
Ref.\ \cite{bpz,fms,henkel} gives general fusion rules for Kac operators.  Next, we interpret one-leg boundary operator fusion rules in context with their realization as BCC operators.

In the random cluster model, if a BCC from state $a$ to $f$ takes place at $x_1$, a BCC from $f$ to state $b$ takes place at $x_2$, and we send $x_2\rightarrow x_1$, then (\ref{thefusionrule}), expressed in terms of the boundary operators for these BCCs, reads
\be\label{12ope}\phi_{1,2}^{a f}(x_1)\phi_{1,2}^{fb}(x_2)\underset{x_2\rightarrow x_1}{\sim}\delta_{ab}\mathbf{1}^{aa}+\phi_{1,3}^{ab}(x_1),\ee
with this interpretation.   If $a\neq b$, then the two BCCs join into one passing from state $a$ to state $b$, and the $(1,3)$ Kac operator in (\ref{12ope}) implements this new BCC at $x_1$.  On the other hand, if $a=b$, then the BCCs annihilate each other as they meet, and the identity operator in (\ref{12ope}) captures this disappearance.  (An identity operator in a correlation function has no effect in the sense that is nonlocal and its removal does not alter the value of the correlation function.)  The second term on the right side of (\ref{12ope}) includes all configurations with an infinitesimal free segment $(x_1,x_2)$ that abuts a fixed segment on either end, so the FK boundary cluster anchored to the left fixed segment is disconnected from that anchored to the right fixed segment.  The identity term in (\ref{12ope}) includes these configurations too.  

If $a=b$ in the previous BCC scenario, then we may switch the order of the BCCs, thus finding a fusion rule similar to that of (\ref{12ope}):
\be\label{backward}\phi_{1,2}^{fa}(x_1)\phi_{1,2}^{a f}(x_2)\underset{x_2\rightarrow x_1}{\sim}\mathbf{1}^{ff}+\phi_{1,3}^{fa f}(x_1).\ee
The second boundary operator on the right side of (\ref{backward}) conditions the existence of an infinitesimal segment at $x_1$ fixed to state $a$, and this forces a state-$a$ FK boundary cluster to anchor to this infinitesimal segment.  

In either case, we may continue this process an arbitrary number of times to create an arbitrary number of BCCs proximal to $x_1$.  This implies the general fusion rule
\be\label{genfuserule}\phi_{1,2}(x_1)\phi_{1,s+1}(x_2)\underset{x_2\rightarrow x_1}{\sim}\phi_{1,s}(x_1)+\phi_{1,s+2}(x_1),\ee
with the superscripts indicating the BCCs suppressed.  In this rule $\phi_{1,s+1}(x_2)$ implements $s$ distinct BCCs clumped very near $x_2$, and fusing $\phi_{1,2}(x_1)$ with $\phi_{1,s+1}(x_2)$ either annihilates one of these BCCs or adds a new one.

In the Potts model, if a BCC from state $a$ to state $\slashed{a}$ takes place at $x_1$, a BCC from state $\slashed{a}$ to state $a$ takes place at $x_2$, and we send $x_2\rightarrow x_1$, then (\ref{thefusionrule}) expressed in terms of the boundary operators for these BCCs reads
\be\label{21ope}\phi_{2,1}^{a \slashed{a}}(x_1)\phi_{2,1}^{\slashed{a}a}(x_2)\underset{x_2\rightarrow x_1}{\sim}\mathbf{1}^{aa}+\phi_{3,1}^{a\slashed{a}a}(x_1).\ee
The interpretation of (\ref{21ope}) is identical to that of (\ref{12ope}), except that the $(3,1)$ Kac operator separates two disjoint boundary spin clusters with an infinitesimal free segment centered on $x_1$.  A similar adaption of (\ref{genfuserule}) gives
\be\label{genfuserule2}\phi_{2,1}(x_1)\phi_{s+1,1}(x_2)\underset{x_2\rightarrow x_1}{\sim}\phi_{s,1}(x_1)+\phi_{s+2,1}(x_1).\ee

Because these models have multiple-SLE$_\kappa$ descriptions too, interpreting these fusion rules in terms of boundary arc connectivities is natural.  Generalizing (\ref{oneleg}), we define the \emph{$s$-leg boundary operator} $\psi_s$ with \emph{$s$-leg boundary weight} $\theta_s$,
\be\label{bdysleg}\psi_s(x_0)=\begin{cases}\phi_{1,s+1}(x_0),&\kappa>4 \\ \phi_{s+1,1}(x_0),&\kappa\leq 4\end{cases},\quad \theta_s:=\left\{\begin{array}{ll}h_{1,s+1}, &\kappa>4 \\ h_{s+1,1}, & \kappa\leq4 \end{array}\right\}=\frac{s(2s+4-\kappa)}{2\kappa},\ee
so called because $\psi_s$ conditions $s$ distinct boundary arcs to emanate from the $s$ BCCs tightly accumulated around $x_0$.  These $s$ BCCs alternate from fixed-to-free to free-to-fixed or vice versa.  Fusion rules (\ref{genfuserule}, \ref{genfuserule2}) combine into
\be\label{legfuse}\psi_1(x_1)\psi_s(x_2)\underset{x_2\rightarrow x_1}{\sim}\psi_{s-1}(x_1)+\psi_{s+1}(x_1),\quad s\in\mathbb{Z}^+,\ee
which we interpret as follows.  If the boundary arc anchored to $x_1$ connects with the leftmost of the boundary arcs anchored to $x_2>x_1$, then this one boundary arc contracts to a point almost surely as $x_2\rightarrow x_1$, leaving $s-1$ boundary arcs emanating from $x_1$.  This explains the first term on the right side of (\ref{legfuse}).  On the other hand, if these two adjacent boundary arcs are disjoint, then we find $s+1$ distinct boundary arcs emanating from $x_1$ after sending $x_2\rightarrow x_1$.  This explains the second term on the right side of (\ref{legfuse}).  (See figure \ref{ThreeLegFuse} for the case $s=2$.)  We encounter this latter case again in \cite{florkleb2}.

The case $s=1$ is somewhat different.  On the right side of (\ref{legfuse}), we find a ``zero-leg" boundary operator (i.e., the identity) and a two-leg boundary operator.  Loosely speaking, this OPE may contain only one of these operators, or it may contain both.  If the OPE contains only the identity (resp.\ two-leg) operator, then $(x_1,x_2)$ is an identity (resp.\ two-leg) interval.  We examine these two cases more closely.  In particular, an interval is a two-leg interval of the CFT correlation function in (\ref{partratio}) if and only if the associated boundary arcs are almost surely \emph{propagating} \cite{florkleb4}.  That is, the boundary arcs do not join to form one curve (figure \ref{TwoLegFuse}).  The correspondence between a unique primary operator in the OPE and definite information on the connectivity of the associated boundary arcs is similar to the multi-leg fusion cases described above.  On the other hand, if an interval is an identity interval, then the associated boundary arcs are \emph{not} conditioned to be propagating almost surely or contractible almost surely (where ``contractible" means that the boundary arcs do join to form one curve (figure \ref{TwoLegFuse})) \cite{js}.  This corresponds to the notion in CFT that an identity operator does not condition boundary arc connectivities, a situation that is different from the multi-leg fusion cases described above.  We explore these matters more concretely in \cite{florkleb4}.

\end{document}